\theoremstyle{definition}
\newtheorem{definition}{Definition}
\newtheorem{theorem}{Theorem}
\newtheorem{lemma}{Lemma}
\theoremstyle{remark}
\newcommand{\ket}[1]{\left | \, #1 \right \rangle}
\newcommand{\kets}[1]{| \, #1 \rangle}
\newcommand{\eqr}[1]{Eq.~(\ref{#1})}
\newcommand{\fir}[1]{Fig.~\ref{#1}}
\newcommand{\secr}[1]{Sec.~\ref{#1}}
\newcommand{\half}{\mbox{$\textstyle \frac{1}{2}$}}
\newcommand{\bra}[1]{\left \langle #1 \, \right |}
\newcommand{\braket}[2]{\left\langle\, #1\,|\,#2\,\right\rangle}
\newcommand{\av}[1]{\langle #1\rangle}
\begin{document}

\title[Compact Neural-network Quantum States]{Compact Neural-network Quantum State representations of Jastrow and Stabilizer states}

\author{Michael Y. Pei$^{1}$ and Stephen R. Clark$^{1}$}
\address{$^1$H.H. Wills Physics Laboratory, University of Bristol, Bristol BS8 1TL, UK.}
\ead{stephen.clark@bristol.ac.uk}

\begin{abstract}
Neural-network quantum states (NQS) have become a powerful tool in many-body physics. Of the numerous possible architectures in which neural-networks can encode amplitudes of quantum states the simplicity of the complex restricted Boltzmann machine (RBM) has proven especially useful for both numerical and analytical studies. In particular devising exact NQS representations for important classes of states, like Jastrow and stabilizer states, has provided useful clues into the strengths and limitations of the RBM based NQS. However, current constructions for a system of $N$ spins generate NQS with $M \sim O(N^2)$ hidden units that are very sparsely connected. This makes them rather atypical NQS compared to those commonly generated by numerical optimisation. Here we focus on {\em compact} NQS, denoting NQS with a hidden unit density $\alpha = M/N \leq 1$ but with system-extensive hidden-visible unit connectivity. By unifying Jastrow and stabilizer states we introduce a new exact representation that requires at most $M=N-1$ hidden units, illustrating how highly expressive $\alpha \leq 1$ can be. Owing to their structural similarity to numerical NQS solutions our result provides useful insights and could pave the way for more families of quantum states to be represented exactly by compact NQS.\\

\noindent{\it Keywords}: Neural-network Quantum States, Restricted Boltzmann Machines, Tensor Network Theory
\end{abstract}


\maketitle

\section{Introduction} 

The quantum many-body problem encapsulates numerous fascinating and technologically relevant phenomena even in its simplest instance of localised spins, where effective interacting lattice Hamiltonians can give rise to antiferromagnetism, frustration and topological spin liquids~\cite{zhou17,savary17}. Capturing these collective many-body effects in numerical calculations is a striking challenge due to the exponential growth of their Hilbert space $\mathbbm{C}^2\otimes\mathbbm{C}^2\otimes \cdots \otimes\mathbbm{C}^2$ with system size. One strategy to overcome this `curse of dimensionality' is the {\em variational method} where an approximate representation of a system's ground state is found within a class of trial states defined by a tractable number of parameters~\cite{foulkes01,gubernatis16,becca17}. To be successful some physical insight is often needed when devising trial states, so they can capture expected correlations, while also having a functional form that allows key observables, like energy, to be efficiently computed. The major issue is that variational calculations are inevitably biased by the trial state and can inadvertently predict the wrong physics. To avoid this, sophisticated trial states are needed whose expressibility, and thus number of parameters, can be systematically enlarged and in principle can even become exact in some extreme limit. 

A powerful example of such a variational approach are tensor network states~\cite{schollwock11,verstraete08,cirac09,orus14}. These work by decomposing the amplitudes of a quantum state into a network of connected multidimensional arrays of complex numbers that form the variational parameters of the trial state. Tensor network states have several highly useful properties. First, efficient deterministic numerical algorithms have been devised for their direct optimisation. Second, the number of variational parameters, governed by the size of the tensors and the geometry of their connections, can be directly related to the entanglement between subsystems of the state, which often obeys an ``area law"~\cite{eisert10}. Consequently tensor networks possess a rather broad form of variational bias to low entanglement states. These properties have proven to be extremely effective for systems in one spatial dimensional (1D) where matrix product states (MPS)~\cite{schollwock11,orus14} have allowed a vast range of physics to be accessed with unprecedented accuracy~\cite{schollwock05}. For 2D systems projected entangled pair states (PEPS)~\cite{verstraete08,cirac09}, tree tensor networks (TTN)~\cite{shi2006,murg10} and multiscale entanglement renormalization ansatz (MERA)~\cite{evenbly13} have attempted to replicate this success. However, the scaling of the computational cost of these tensor network algorithms with the number of variational parameters, while polynomial and hence formally efficient, are nonetheless severe enough to render accurate calculations extremely demanding. It is an active field of research to devise schemes to reduce this cost~\cite{fishman18,zaletel20}.

In recent years there has been considerable interest in adapting techniques from machine learning to help tackle many-body physics problems~\cite{carleo19,torlai20,carrasquilla20}. A popular approach is to use generative models based on artificial neural networks (ANN) which have been especially successful in taming the curse of dimensionality for conventional `big data' problems, allowing complex patterns and abstractions to be identified~\cite{goodfellow16}. This ability has strongly motivated applying ANN to efficiently encode quantum many-body states. A direct way to accomplish this, introduced by Carleo and Troyer~\cite{carleo_nqs17}, is to view an ANN itself as many-body ansatz wavefunction in which the couplings between neurons act as complex variational parameters that are stochastically optimised. This neural-network quantum state (NQS) approach is applicable to any number of spatial dimensions and can leverage the differing strengths of the numerous variants of ANN known in machine learning. After the original work~\cite{carleo_nqs17} based on a restricted Boltzmann machines (RBM) architecture~\cite{fischer12} with complex parameters, subsequent studies investigated their generalisation to deep Boltzmann machines (DBM)~\cite{gao_dbm17,carleo_dbm18,he_mdbm19}, as well as feedforward~\cite{saito18,choo18,luo19,adams2020}, convolution~\cite{choo_fmag19,naoki2020,markus2020,liang2021} and recurrent neural networks~\cite{levine19}.

In this work we focus exclusively on NQS based on the venerable RBM architecture for several reasons. First, they have a simple shallow structure comprising only two layers of neurons, a visible layer with $N$ units and a hidden layer with $M$ units. Second, it is well known they can capture arbitrary functions once $M$ scales exponentially with $N$~\cite{leroux08}, meaning RBMs are an exhaustive ansatz and suggests they could be an effective weakly biased ansatz for approximations with a tractable $M$. Third, RBMs can host states exhibiting volume-law entanglement scaling~\cite{deng17} indicating a quite distinct representational power compared to tensor networks, despite the rich conceptual connections between them~\cite{clark_cps18,chen18,collura20}. Finally, since they have no intralayer couplings RBMs allow for efficient sampling making them extremely well suited to stochastic optimisation within variational Monte Carlo (VMC)~\cite{foulkes01,gubernatis16,becca17}. 

As such NQS wavefunctions have been applied to wide-ranging problems including frustrated spin systems~\cite{westerhout20,nomura2021}, interacting fermi~\cite{choo20} and bose systems~\cite{saito17,mcbrian19}, simulating quantum circuits~\cite{jonsson18,freitas18,bausch20}, as well as describing states with topological order~\cite{kaubruegger18,glasser18,huang17} and non-Abelian symmetries~\cite{vieijra2020}. Moreover, they have proved to be very effective at enhancing~\cite{clark_cps18} more traditional pair product wave functions in a hybrid approach for both fermionic~\cite{nomura17} and spin systems~\cite{ferrari19}, and have also been successfully generalised for open quantum systems~\cite{torlai18,vicentini19,hartmann19,nobuyuki19}. Complementary to numerical studies, which essentially treat NQS as a {\em blackbox} of amplitudes, are analytical studies directly examining their expressive power by constructing exact representation of broad and relevant classes of quantum states. 

\begin{figure}[htb]
\begin{center}
\includegraphics[scale=0.5]{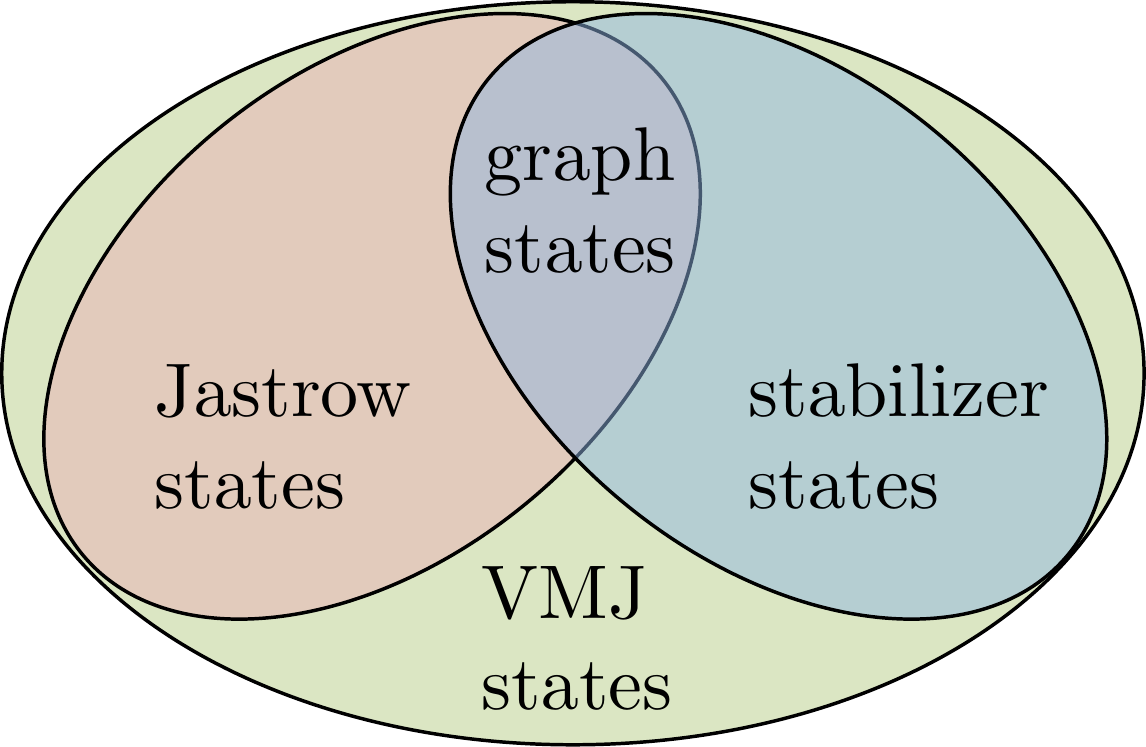}
\end{center}
\caption{A schematic of the relation between Jastrow and stabilizer states. Graph states lie within the intersection of these two distinct classes of states. By applying graph theoretic tools to Jastrow states we propose a new larger class of states called {\em vertex modified Jastrow} (VMJ) states in which arbitrary single-spin gates are applied specific spins. Although a modest generalisation of Jastrow we show this is sufficient to capture stabilizer states and provide a simple procedure for constructing compact NQS.}
\label{fig:states}
\end{figure}

Initial work in this direction focused on important specific cases like the Toric code states and a symmetry-protected cluster state~\cite{deng17_top} to show how topological order can be captured, as well as specially designed graph/stabilizer states to demonstrate analytically volume-scaling entanglement~\cite{deng17}. A construction to represent any graph state as an NQS then followed~\cite{gao_dbm17}. This was then generalised to weighted graph states~\cite{clark_cps18}, spin Jastrow states, which encompass all (weighted) graph states including Laughlin-like states describing chiral topological phases~\cite{glasser18,kaubruegger18,clark_cps18}, as well as the entire class of stabilizer states~\cite{zheng_stbl19,zhang_rbm2stbl18,lu19,jia_surf19} containing quantum error correcting codes like the Toric code states. Even more exotic forms of topological ordering based on hypergraph states and XS-stabilizer states have also been shown to have exact NQS representations~\cite{lu19}. All these studies have enriched our understanding of how features like the sign and nodal structure as well as non-local correlations can be encoded within NQS.

As illustrated in \fir{fig:states}, Jastrow states~\cite{jastrow55} and stabilizer states~\cite{gottesman97} both form large distinct classes of states defined in general by $O(N^2)$ parameters and whose intersection contains graph states~\cite{hein06}. Currently known explicit NQS constructions encode these states in $M \sim O(N^2)$ hidden units with the sparsest possible connections~\cite{gao_dbm17,glasser18,kaubruegger18,zheng_stbl19,zhang_rbm2stbl18,lu19,jia_surf19}. Yet the expressiveness of NQS depends sensitively on the pattern of connections, so numerical optimisations often exploit the most general hidden units possessing system-extensive connectivity. The implications of these exact representations for numerical calculation is therefore unclear. One might attempt to improve the efficiency by directly mimicking their specialised sparcity structure, although this nullifies the flexibility of NQS that originally motivated them and introduces significant bias. Conversely, if these exact constructions do genuinely represent the sparcity and complexity of hidden units needed for Jastrow and stabilizer states, then they also suggest that they are rather non-typical states compared to those that commonly emerge from numerical calculations. Moreover, it would mean that if this sparcity structure was not known in advance, then these classes of states are prohibitively expensive to numerically ``learn" since the optimisation problem would begin with an unfavourable $O(N^3)$ number of parameters~\cite{glasser18}. However, the non-uniqueness of NQS representations together with a parameter counting argument strongly suggest that Jastrow and stabilizer states should be representable exactly by an NQS with only $M \sim O(N)$ system-extensive hidden units.

In this paper we confirm this conjecture. Our result is built using vertex modified Jastrow (VMJ) states, which are a new modest generalisation of Jastrow states in which a specific subset of spins can have arbitrary single-spin gates applied to them. Crucially, the enlarged class of VMJ states encompasses stabilizer states, as shown in \fir{fig:states}. By combining graph theoretic concepts with tensor network diagrammatics for NQS states~\cite{clark_cps18} we show how to construct NQS representations of VMJ states with at most $M = N-1$ hidden units. The system-extensive connectivity of this new NQS representation enhances its relevance for numerical calculations and highlights the role of strong visible-hidden unit correlations. In particular it indicates that NQS optimisations with a very low hidden unit density $\alpha = M/N < 1$ should be able to capture and improve on Jastrow states, thereby providing an efficient and systematic way to go beyond this commonly used variational class. Indeed this dramatic ability of $\alpha < 1$ NQS to exactly represent complex classes of quantum states resonates with recent work~\cite{abanin2020} revealing their ability to capture several orders of a perturbative expansion. We coin the term {\em compact} NQS for $\alpha \leq 1$ to highlight this acute compression in hidden unit complexity. 

The structure of this paper is as follows. In \secr{sec:nqs} we give an overview of the many-body problem, the variational method and NQS in their original complex RBM formulation. In \secr{sec:jastrow} we discuss the conventional $M\sim O(N^2)$ construction of NQS for Jastrow states, before modifying it to arrive at a compact NQS based on perfect hidden-visible correlations and demonstrating its emergence via numerical optimisation for the one-dimensional XXZ spin chain. In \secr{sec:tnt} we outline diagrammatic tools that allow NQS to be recast as a tensor network and identify a number of properties, most crucially when arbitrary single-spin gates can be readily absorbed into an NQS tensor network. In \secr{sec:vmj} we use the tensor network approach to identify a canonical form for Jastrow state NQS, introduce some graph theoretic concepts, and use these to define VMJ states and their general compact NQS representation. In \secr{sec:graph} we review graph and stabilizer states, prove that any stabilizer state can be described as a VMJ-NQS, and illustrate this analytically for several important special cases. Finally, in \secr{sec:conclusion} we conclude and discuss some open problems.

\section{Neural-network quantum states} \label{sec:nqs}
In this section we briefly introduce the quantum many-body problem and VMC, before reviewing NQS in terms of complex RBM as a powerful ansatz for this approach.  

\subsection{Quantum many-body problem and variational approach}
In this work we will focus on physical systems composed of $N$ spin-$\half$ particles each described by a vector of Pauli operators $(\hat{X}_j,\hat{Y}_j,\hat{Z}_j)$ for $j=1,2,\dots,N$ and a local basis $\hat{Z}_j\ket{v_j} = v_j\ket{v_j}$, where $v_j \in \{+1,-1\}$. The $z$ basis for the full system is then $\ket{\bm v} =\ket{v_1}\otimes \cdots \otimes \ket{v_N}$ where ${\bm v} = (v_1,v_2,\dots,v_N) \in \{+1,-1\}^N$, and any many-body quantum state can be expanded in this basis as
\begin{equation}
\ket{\Psi} = \sum_{\bm v} \Psi({\bm v}) \ket{\bm v},
\end{equation}
via its complex amplitudes $\Psi({\bm v})$. It will also be convenient on some occasions to label the $z$ basis instead using a `qubit' binary string ${\bm q} = (q_1,q_2,\dots,q_N) \in \{0,1\}^N$ where ${\bm v} = (-1)^{\bm q} = 1 - 2{\bm q}$.

Our central problem is to solve the eigenvalue problem $\hat{H}\ket{\Psi_0} = E_0\ket{\Psi_0}$ for the ground state $\ket{\Psi_0}$ of a system governed by an interacting Hamiltonian $\hat{H}$ comprised of products of the Pauli operators over two or more spins. Given that the expectation value of any observable $\hat{A}$ for a general (unnormalised) state $\ket{\Psi}$ is 
\begin{equation}
\av{A}_\Psi = \frac{\bra{\Psi}\hat{A}\ket{\Psi}}{\braket{\Psi}{\Psi}},
\end{equation}
the variational approach reformulates this problem as the minimisation of the energy $E = \av{\hat{H}}_\Psi$ over the exponentially many amplitudes $\Psi({\bm v})$. Performing this task exactly is only feasible for small systems $N \sim O(10)$~\cite{lauchli12}. 

The curse of dimensionality is circumvented by instead restricting the optimisation over a specialised class of states $\ket{\Psi_{\bm p}}$, dependent on parameters $\bm p$, whose number scales polynomially with $N$. The variational principle $E_0 \leq E_{{\bm p}_0} = {\rm min}_{\bm p} \av{\hat{H}}_{\Psi_{\bm p}}$ is then used to determine the ``best'' approximation $\ket{\Psi_{{\bm p}_0}}$ within the ansatz for $\ket{\Psi_0}$. Typically even for simple observables $\av{A}_{\Psi_{\bm p}}$ cannot be evaluated exactly from variational ansatzes for many-body systems~\cite{gubernatis16,becca17} and so Monte Carlo sampling is employed giving the VMC approach~\cite{foulkes01}. A crucial feature of VMC is that only ratios of amplitudes for an ansatz $\Psi_{\bm p}({\bm v})/\Psi_{\bm p}({\bm v}')$ between different spin states $\ket{\bm v}$ and $\ket{{\bm v}'}$ are needed for Markov chain sampling, so we can safely ignore the normalisation of quantum states in this work. 

\subsection{Restricted Boltzmann machine formulation} \label{sec:rbm}
Neural-network quantum states are a promising approach for constructing highly efficient and accurate approximate representations of the exponentially many amplitudes $\Psi({\bm v})$. The structure of NQS is motivated from classical probabilistic models called RBMs commonly used in machine learning~\cite{carleo_nqs17}. In these models there are two sets of units, $N$ {\em visible} units representing the system itself, and $M$ {\em hidden} units which are additional degrees of freedom to be marginalised. Amplitudes of NQS then follow from a complex Boltzmann-like ansatz
\begin{equation}
\Psi_{\bm \lambda}({\bm v}) = \sum_{{\bm h}} \exp\left[\mathcal{E}_{\bm \lambda}({\bm v},{\bm h})\right], \label{rbm}
\end{equation}
characterised by a classical ``energy" function
\begin{equation}
\mathcal{E}_{\bm \lambda}({\bm v},{\bm h}) = \sum_{j=1}^N a_j v_j + \sum_{i=1}^M b_i h_i + \sum_{i=1}^M\sum_{j=1}^N w_{ij}h_i v_j, \label{rbm_energy}
\end{equation}
describing the interactions between the visible units, with the physical configuration ${\bm v}$, and hidden units, with configuration ${\bm h} = (h_1,\dots,h_M)$. The geometry of an RBM is shown in \fir{fig:rbm}. An NQS is thus defined by the $MN + M + N$ complex parameters ${\bm \lambda} = \{{\bm a},{\bm b},{\bm w}\}$ in the energy function, comprising an $N$-dimensional vector ${\bm a}$ of visible biases, an $M$-dimensional vector ${\bm b}$ of hidden biases, and $M \times N$ matrix ${\bm w}$ of weights. Assuming the hidden units are also described by Ising-like variables $h_j \in \{+1,-1\}$ then after tracing over them we obtain the well-known amplitude expansion~\cite{carleo_nqs17}
\begin{eqnarray}
\Psi_{\bm \lambda}({\bm v}) &=& \prod_{j=1}^N e^{a_jv_j}  \prod_{i=1}^M 2\cosh\left(b_i + \sum_{j=1}^N w_{ij}v_j\right), \label{rbm_amps}
\end{eqnarray}
which is particularly well-suited to variational optimisation.

\begin{figure}[ht]
\begin{center}
\includegraphics[scale=0.5]{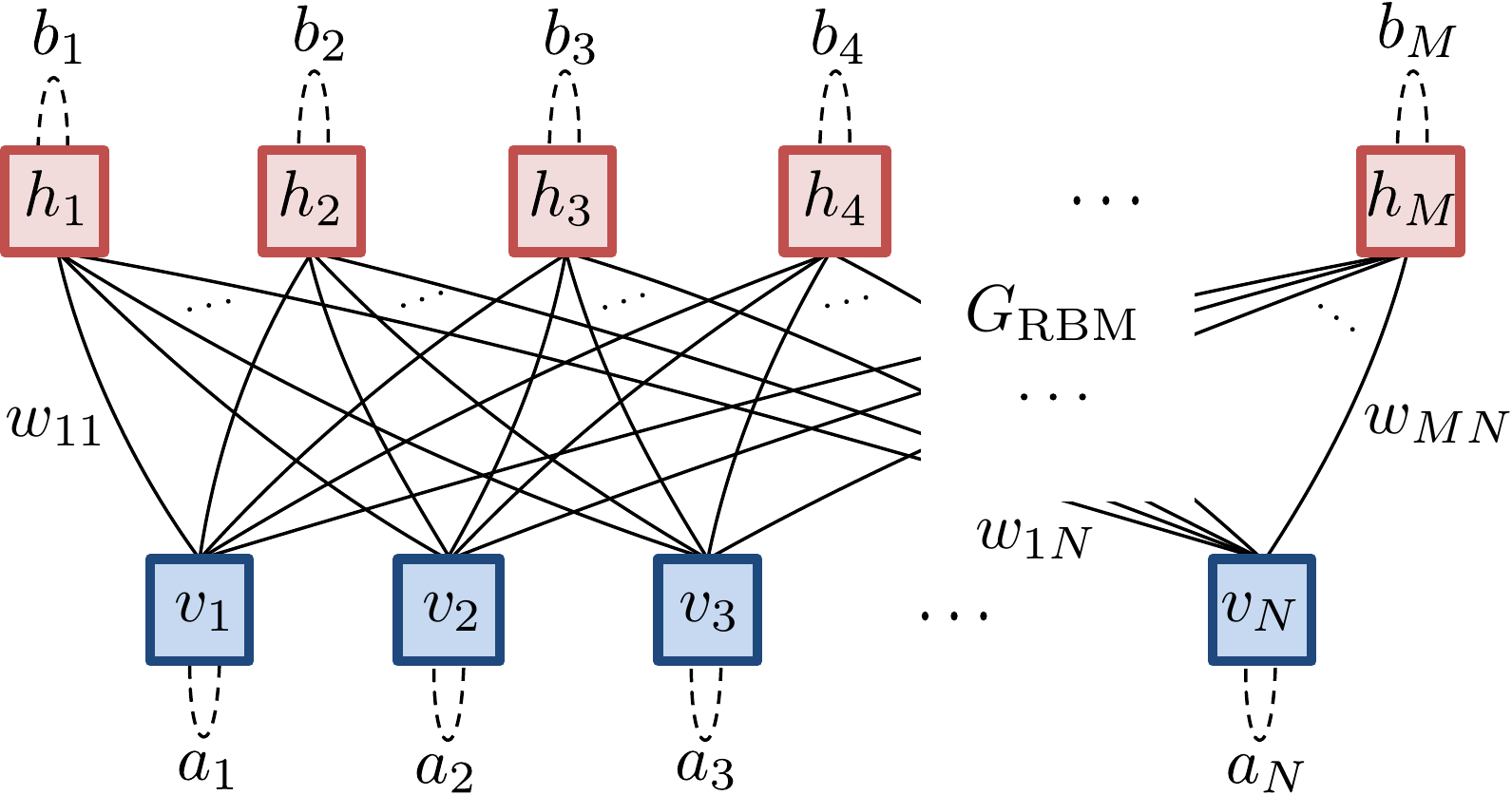}
\end{center}
\caption{The bipartite graph $G_{\rm RBM}$ of interaction weights $\bm w$ between hidden and visible units in an RBM with edges shown as solid arcs (see \secr{sec:jastrow}). For completeness the biases $\bm a$ and $\bm b$ on each unit are depicted here as self-loop edges, but are given by dotted arcs since these are not included in a simple RBM graph.}
\label{fig:rbm}
\end{figure}

It will prove useful for later that we introduce the following terminology:
\theoremstyle{definition}
\begin{definition}[Univalent visible unit]
The number of hidden units connected to a given visible unit in an NQS is called its {\em valency}. Any visible unit connected to a single hidden unit is said to be {\em univalent}. The number of such visible units in an NQS is the {\em univalency} of the representation.
\end{definition}
Although \eqr{rbm} is based on a complex Boltzmann-like form with $\mathcal{E}_{\bm \lambda}({\bm v},{\bm h})$ restricted to pairwise visible-hidden interactions only, tracing out the hidden units generates an effective energy function for the visible units alone
\begin{equation}
\mathcal{E}_{{\rm eff},\bm \lambda}({\bm v}) = \log\left[\sum_{\bm h}e^{\mathcal{E}_{\bm \lambda}({\bm v},{\bm h})}\right], \label{eq:eff_energy}  
\end{equation}
that can contain much more complicated higher-order interactions. This expressiveness is controlled by increasing the number $M$ of hidden units allowing for complex correlations within $\Psi_{\bm \lambda}({\bm v})$ to be encoded. The NQS ansatz is exhaustive in the sense that for $M \sim 2^N$ arbitrary states $\Psi({\bm v})$ can be described exactly~\cite{leroux08}. However, practical NQS representations instead have scaling $M \sim {\rm poly}(N)$ which can be sampled in a formally efficient way, including exact representations of Jastrow~\cite{glasser18,kaubruegger18,clark_cps18}, graph~\cite{gao_dbm17}, stabilizer~\cite{zheng_stbl19,zhang_rbm2stbl18,lu19,jia_surf19}, hypergraph and XS-stabilizer states~\cite{lu19}. Numerical calculations have demonstrated that many important interacting systems have ground states that can be well approximated by an efficient NQS with $\alpha \sim 32$ feasible with VMC~\cite{nomura20}. However, if such calculations involve states where $M$ scales superlinear in $N$ then they can quickly become impractical. This strongly motivates understanding NQS that have exact representations with $M\sim O(N)$, and even more strictly compact NQS defined as: 

\theoremstyle{definition}
\begin{definition}[Compact NQS]
An exact NQS representation of a state for a system size of $N$ is said to be {\em compact} if it requires $M \leq N$ hidden units.
\end{definition}

\section{Jastrow states} \label{sec:jastrow}
One of the most well-established and intuitive variational ansatz are so-called Jastrow states. Originally devised for wavefunctions in the continuum~\cite{jastrow55}, Jastrow states work by imprinting pairwise correlations on top of a reference state of independent particles. Indeed, one of the earliest VMC calculations performed for a many-body quantum system utilised Jastrow states to study Lennard-Jones interacting bosons modelling liquid $^4$He~\cite{mcmilan65}. For discrete systems like spins a general definition of Jastrow states involves a {\em graph}. Specifically, a graph $G$ is defined by the set $\mathcal{V}$, which is a finite-sized subset of $\mathbbm{N}$ labelling vertices, and a set $E(G)$ comprising 2-element subsets $(a,b)$ of $\mathcal{V}$ labelling edges~\cite{bondy2011}. We will only consider undirected edges so the order of vertices in $(a,b)$ is of no relevance. However, when summing over elements of $E(G)$ it is convenient to use the convention that $({\tt s},{\tt t})$ denotes the source $\tt s$ and target $\tt t$ vertices that obey ${\tt s} < {\tt t}$. There is a one-to-one correspondence between a graph $G$ and its symmetric binary $|\mathcal{V}| \times |\mathcal{V}|$ adjacency matrix ${\bm \Theta}$ defined as $\Theta_{ab} = 1$ if $\{a,b\} \in E$ and $\Theta_{ab} = 0$ otherwise. We will draw a graph with vertices as bordered squares and edges by arcs joining pairs of them, for example as
\begin{equation}
\includegraphics[scale=0.5,valign=c]{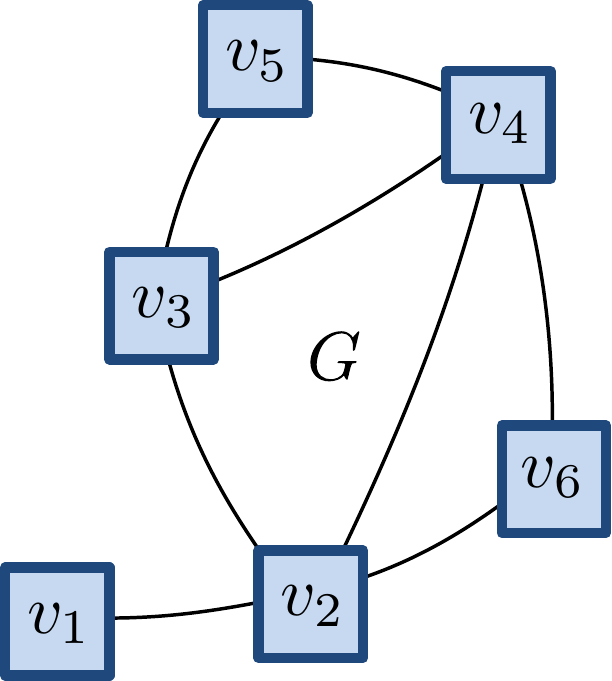} , \label{eq:simple_graph}
\end{equation}
and as seen earlier in \fir{fig:rbm} where we had two classes of vertices, visible and hidden. In the following we will consider simple graphs, so they contain no self-loops or multiple edges between the same vertices. In addition to this, and without loss of generality\footnote{The methods we will use can be applied individually to any subgraphs of a fragmented graph.} we will concentrate on connected graphs where any two vertices $a$ and $b$ are always connected by at least one sequence of edges (e.g. a path) in $G$. 

Given a graph $G$ over $|\mathcal{V}| = N$ visible units Jastrow states have amplitudes that are conveniently parameterised in a complex Boltzmann-like form as
\begin{equation}
\psi_{\rm JS}({\bm v}) = \exp\left(\mathcal{E}_{{\rm JS},{\bm \eta}}\right), \label{eq:jastrow_amps}
\end{equation}
in terms of an energy function 
\begin{equation}
\mathcal{E}_{{\rm JS},{\bm \eta}} = \sum_{j=1}^N c_jv_j + \sum_{i=1}^{|E(G)|} V_{{\tt s}_i {\tt t}_i}v_{{\tt s}_i} v_{{\tt t}_i}. \label{eq:jastrow_energy}
\end{equation}
Consequently, a Jastrow state is defined by the $N+|E(G)|$ complex parameters ${\bm \eta} = \{{\bm c},{\bm V}\}$, comprising $N$ visible biases $\bm c$ and $|E(G)|$ non-zero elements $V_{jk}$ for $j <k$ of a strictly upper triangular $N \times N$ matrix $\bm V$ of pairwise interactions between visible units. 

\begin{figure}[ht]
\begin{center}
\includegraphics[scale=0.5]{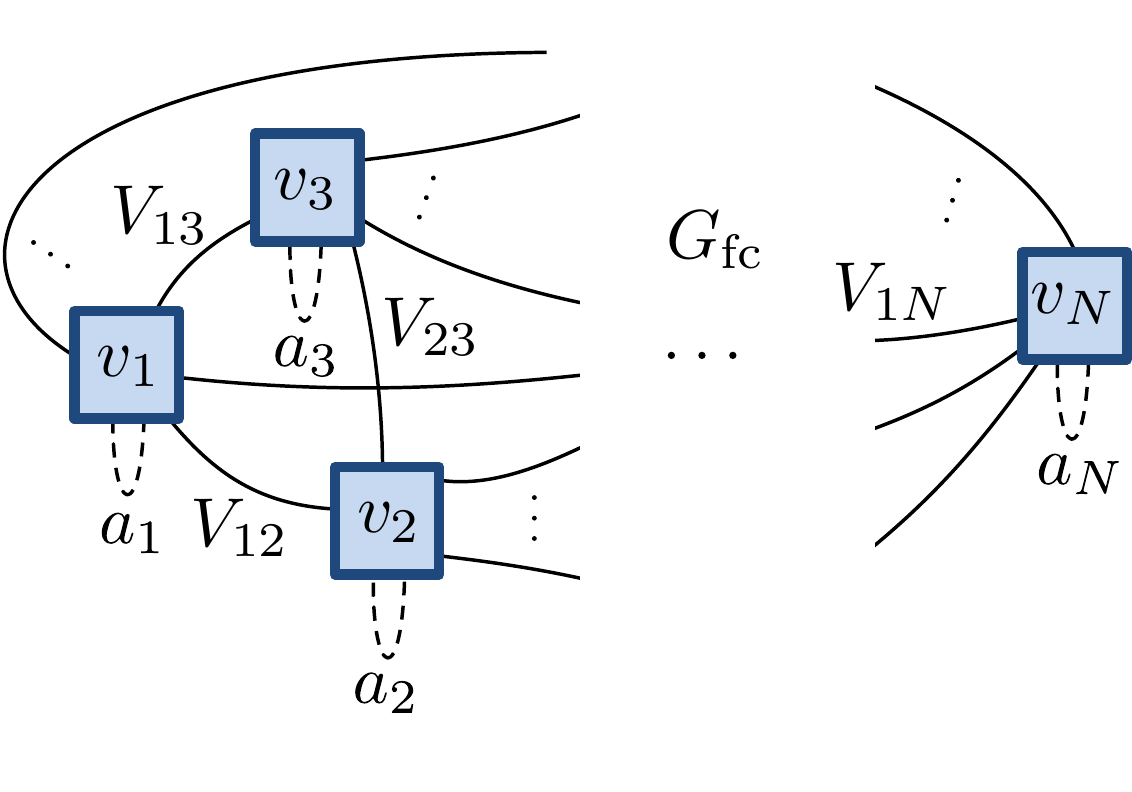}
\end{center}
\caption{The fully connected graph $G_{\rm fc}$ of interaction weights $\bm V$ between visible units in a generic Jastrow state. For completeness the visible biases $\bm a$ each unit are depicted here as self-loop edges with dotted arcs.}
\label{fig:jastrow_graph}
\end{figure}

In contrast to NQS the form of Jastrow states is severely limited by its two-body energy function for visible units in \eqr{eq:jastrow_energy}. As such Jastrow state amplitudes are simply the product of arbitrary two-spin states between all pairs connected by an edge in $G$ as
\begin{equation}
\psi_{\rm JS}({\bm v}) = \prod_{i=1}^{E(G)} \left(e^{c_{v_{{\tt s}_i}}v_{v_{{\tt t}_i}}+ c_{v_{{\tt t}_i}}v_{v_{{\tt t}_i}}}e^{V_{{\tt s}_i {\tt t}_i}v_{{\tt s}_i} v_{{\tt t}_i}}\right) = \prod_{i=1}^{E(G)} J^{(i)}_{v_{{\tt s}_i}v_{{\tt t}_i}}. \label{eq:jastrow_pairs}
\end{equation}
Jastrow states are therefore most expressive for a fully connected graph $G_{\rm fc}$, as shown in \fir{fig:jastrow_graph}. Given their similar forms of parameterisation it is natural to ask how to convert a Jastrow state into an NQS and what their hidden unit complexity is.

\subsection{Compact NQS for Jastrow states} \label{sec:jastrow_nqs}
A direct and commonly advocated~\cite{glasser18,nomura17,gao_dbm17,he_mdbm19} mapping of a Jastrow state into an NQS proceeds by mediating each pairwise Jastrow interaction $V_{jk}$ via an interaction with a hidden unit $h_i$. This is accomplished by solving the expansion
\begin{equation}
\exp(V_{jk}v_jv_k) \simeq \sum_{h_i=\pm 1}\exp(w_{ij}h_iv_j + w_{ik}h_iv_k), \label{eq:int_decomp}
\end{equation}
for each Jastrow interaction term in \eqr{eq:jastrow_energy}. There are many solutions for the hidden unit weights $w_{ij}$~\cite{carleo_dbm18,gao_dbm17,he_mdbm19,rrapaj2021}, such as
\begin{eqnarray}
w_{ij} &=& \frac{1}{2}\left[{\rm sech}^{-1}(e^{-V_{jk}}) + {\rm sech}^{-1}(e^{V_{jk}})\right], \quad {\rm and} \nonumber \\
w_{ik} &=& \frac{1}{2}\left[{\rm sech}^{-1}(e^{-V_{jk}}) - {\rm sech}^{-1}(e^{V_{jk}})\right], \label{eq:jastrow_sol1}
\end{eqnarray}
or a symmetric solution derived from the matrix square-root as
\begin{equation}
w_{ij} = w_{ik} = {\rm tanh}^{-1}\left[\sqrt{{\rm tanh}(V_{jk})}\right]. \label{eq:jastrow_sol2}
\end{equation} 
Inserting the decomposition \eqr{eq:int_decomp} into \eqr{eq:jastrow_amps} we arrive at a complex RBM formulation 
\begin{eqnarray}
\fl\qquad\quad\psi_{\rm JS}({\bm v})&\propto& \left[\prod_{j=1}^N \exp(c_jv_j)\right]\left[\prod_{i=1}^{|E(G)|} \sum_{h_i=\pm 1}\exp\left(w_{i{\tt s}_i}h_iv_{{\tt s}_i} + w_{i{\tt t}_i}h_iv_{{\tt t}_i}\right)\right], \\
&=& \sum_{{\bm h}}\exp\left(\sum_{j=1}^N c_jv_j + \sum_{i=1}^{|E(G)|}\left(w_{i{\tt s}_i}h_iv_{{\tt s}_i} + w_{i{\tt t}_i}h_iv_{{\tt t}_i}\right)\right).
\end{eqnarray}
Since this NQS requires $M = |E(G)|$ hidden units, each with the smallest receptive field of just two visible units~\cite{glasser18}, we will call it a sparse-Jastrow NQS. For the most expressive Jastrow state $G = G_{\rm fc}$ we thus need $M = \half N(N-1)$ hidden units in total. There is good reason to suspect that this sparse-Jastrow NQS significantly overestimates the hidden unit complexity of Jastrow states. In particular an NQS with $M=N$ hidden units, each having system-extensive connectivity, already has $N^2$ weights $\bm w$, which from a pure parameter count argument should be sufficient. Our first result indeed confirms this is the case: 

\begin{lemma}[A compact Jastrow state NQS] \label{lemma:jastrow}
Any Jastrow state for $N$ spins can be represented exactly by a compact NQS with $M=N$ system-extensive hidden units in which each hidden unit has a perfect correlation with a unique visible unit. 
\end{lemma}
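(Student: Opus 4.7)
The plan is to construct an explicit NQS ansatz with $M=N$ hidden units in which hidden unit $i$ is tied to visible unit $i$ by a large ``diagonal'' weight $w_{ii}=\kappa$, and to verify that as $\kappa\to\infty$ the amplitude \eqr{rbm_amps} collapses onto the Jastrow form \eqr{eq:jastrow_amps}. The intuition is that a large $w_{ii}$ makes the Boltzmann weight of the ``wrong'' hidden configuration $h_i=-v_i$ exponentially suppressed relative to $h_i=v_i$, effectively locking $h_i=v_i$ in the trace over hidden units -- this is what the lemma calls perfect correlation. The $N(N-1)$ off-diagonal weights $w_{ij}$ with $i\neq j$ then have precisely the freedom needed to encode the $N(N-1)/2$ quadratic Jastrow couplings $V_{jk}$, while the biases absorb the one-body terms $c_j$.

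Concretely, I would write each cosh factor in \eqr{rbm_amps} as $2\cosh(b_i+\kappa v_i+\tilde y_i)$ with $\tilde y_i=\sum_{j\neq i}w_{ij}v_j$, and use $v_i^2=1$ together with the hyperbolic addition formula to rewrite it as $2\cosh\kappa\,\cosh(b_i+\tilde y_i)+2 v_i\sinh\kappa\,\sinh(b_i+\tilde y_i)$. Taking $\kappa\to\infty$, both $\cosh\kappa$ and $\sinh\kappa$ behave as $e^\kappa/2$, and the factor collapses to $e^\kappa\exp\!\bigl[v_i(b_i+\tilde y_i)\bigr]$, which is exactly the contribution one obtains by substituting $h_i=v_i$ into the RBM energy \eqr{rbm_energy}. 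This makes the perfect-correlation picture explicit rather than metaphorical.

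Taking the product over all $N$ hidden units and combining with the visible biases then gives $\Psi_{\bm\lambda}({\bm v})\propto e^{N\kappa}\exp\!\bigl(\sum_j(a_j+b_j)v_j+\sum_{i\neq j}w_{ij}v_iv_j\bigr)$. The prefactor $e^{N\kappa}$ is a global, $\bm v$-independent constant that drops out of any ratio of amplitudes, and is hence irrelevant for VMC as emphasised in \secr{sec:rbm}. Comparing with \eqr{eq:jastrow_energy} one identifies $c_j=a_j+b_j$ and $V_{jk}=w_{jk}+w_{kj}$ for $j<k$, realised for instance by the symmetric choice $w_{jk}=w_{kj}=V_{jk}/2$. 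This establishes that any Jastrow state is represented exactly by a compact NQS with $M=N$ system-extensive hidden units, each perfectly correlated with a unique visible unit.

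The main obstacle is conceptual rather than calculational: the argument proceeds via a limit, so strictly speaking the exact Jastrow amplitudes are only recovered as $\kappa\to\infty$. The delicate step is to check that all $\kappa$-dependence collapses into a single global prefactor with no $\bm v$-dependent residue, which is precisely what the hyperbolic-addition identity above guarantees (the subleading term is $e^{-\kappa}\exp[-v_i(b_i+\tilde y_i)]$, which vanishes uniformly in $\bm v$ since $\tilde y_i$ takes only finitely many values). Once this is established, the representation is exact in the sense appropriate to NQS, namely up to an overall normalisation that leaves every physical observable and every amplitude ratio $\Psi_{\bm\lambda}({\bm v})/\Psi_{\bm\lambda}({\bm v}')$ unchanged.
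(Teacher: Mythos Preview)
Your proof is correct and follows essentially the same route as the paper's: introduce a diverging diagonal weight $w_{ii}=\kappa$ (the paper writes $\mathcal{S}$) to pin $h_i=v_i$, then read off $c_j=a_j+b_j$ and the symmetric choice $w_{ij}=\tfrac{1}{2}V_{ij}$ for the off-diagonal couplings. The only cosmetic difference is that the paper reorganises the hidden-unit sum as $e^{\mathcal{S}+\theta_i}\bigl(1+e^{-2\mathcal{S}-2\theta_i}\bigr)$ and argues via the effective energy, whereas you use the hyperbolic addition formula on the $\cosh$ factor directly; these are algebraically equivalent.
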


\begin{proof}
The effective energy function in \eqr{eq:eff_energy} for a complex RBM can be written as
\begin{equation}
\fl \qquad    \mathcal{E}_{{\rm eff},{\bm \lambda}}({\bm v}) = \sum_{j=1}^N a_j v_j + {\rm log}\left[\prod_{i=1}^M \sum_{h_i=\pm 1}{\rm exp}\left(b_i h_i + \sum_{j=1}^N w_{ij}h_iv_j\right)\right]. \label{eq:eff_energy_jastrow}
\end{equation}
We take $M=N$ hidden units and a square interaction weight matrix $\bm w$ where the diagonal elements are $w_{ii} = \mathcal{S} \gg 1$. Focusing on the argument of the logarithm in \eqr{eq:eff_energy_jastrow}, we separate out the diagonal term $w_{ii}$, evaluate the sum over $h_i$ as $h_i=v_i$ and $h_i = -v_i$ and define $\theta_i({\bm v}) = b_i v_i + \sum_{j\neq i} w_{ij}v_iv_j$ to get
\begin{eqnarray*}
\fl \sum_{h_i=\pm 1}e^{b_i h_i + \sum_{j=1}^N w_{ij}h_iv_j} &=& \sum_{h_i=\pm v_i}e^{b_i h_i + \mathcal{S}h_iv_i + \sum_{j\neq i} w_{ij}h_iv_j} = e^{\mathcal{S}+\theta_i({\bm v})}\left(1 + e^{-2\mathcal{S}-2\theta_i({\bm v})}\right).
\end{eqnarray*}
Inserting this expansion for each hidden unit factor in \eqr{eq:eff_energy_jastrow} gives
\begin{equation}
\fl \qquad    \mathcal{E}_{{\rm eff},{\bm \lambda}}({\bm v}) = \sum_{j=1}^N a_jv_j + \sum_{i=1}^N \theta_i({\bm v}) + N\mathcal{S} + \sum_{i=1}^N{\rm log}\left[1+ e^{-2\mathcal{S}}e^{-2\theta_i({\bm v})}\right], \label{eq:jastrow_finite_en}
\end{equation}
where the last term generates multi-body interaction terms between visible units. However, in the limit $\mathcal{S} \rightarrow \infty$ these terms vanish giving a purely two-body effective energy function
\begin{equation}
\mathcal{E}_{{\rm eff},\infty}({\bm v}) = \lim_{\mathcal{S}\rightarrow\infty}\mathcal{E}_{{\rm eff},{\bm \lambda}}({\bm v}) = \sum_{i=1}^N (a_i+b_i)v_i + \sum_{i=1}^N\sum_{j \neq i} w_{ij}v_iv_j, 
\end{equation}
after dropping the irrelevant constant $N\mathcal{S}$. Thus, $\mathcal{E}_{{\rm eff},\infty}({\bm v})$ reproduces exactly the Jastrow amplitudes \eqr{eq:jastrow_amps} once ${\bm a} + {\bm b} = {\bm c}$ and ${\bm w} = \frac{1}{2}({\bm V} + {\bm V}^{\rm T}) + \lim_{\mathcal{S}\rightarrow \infty}\mathcal{S}\mathbbm{1}_{N\times N}$ with only $M=N$ hidden units. Like the sparse construction the hidden biases $\bm b$ are not needed and both have an identical number $N(N-1)$ of non-zero weights. However, in contrast to the sparse construction these weights have been concentrated into hidden units with the largest possible receptive field spanning the entire system. As such we will denote this compact NQS as an extensive-Jastrow NQS.
\end{proof}

The crucial step in the extensive-Jastrow NQS construction was the introduction of a diverging interaction weight $w_{ii} = \mathcal{S}$. This serves to perfectly correlate each hidden unit $h_i$ with one unique visible unit $v_i$ since
\begin{equation}
\lim_{\mathcal{S}\rightarrow\infty}\exp(\mathcal{S}h_iv_i) \propto \delta_{h_iv_i}, \label{eq:delta_coupling}
\end{equation}
once an overall scale factor is dropped. Owing to the presence of diverging weights $w_{ii}$ one might reasonably question whether the extensive-Jastrow NQS are numerically pathological compared to the sparse-Jastrow NQS. We now investigate this with the help of a non-trivial numerical example.

\subsection{Numerical example -- XXZ spin-chain} \label{sec:xxz}
Consider the XXZ model for a spin-$\half$ chain
\begin{equation*}
\hat{H}_{\rm XXZ} = \sum_{j=1}^N \left(\hat{X}_j\hat{X}_{j+1} + \hat{Y}_j\hat{Y}_{j+1} + \Delta \hat{Z}_j\hat{Z}_{j+1}\right), 
\end{equation*}
with an anisotropy $\Delta$ and periodic boundary conditions $N+1 \equiv 1$. This model displays three distinct ground state phases: i) a ferromagnetic phase ($\Delta \leq -1$), ii) a critical phase ($-1< \Delta \leq 1$) and iii) a gapped AF phase ($\Delta > 1$). The following translationally invariant antiferromagnetic spin-Jastrow state 
\begin{equation}
\fl\qquad\qquad \psi_{\rm CFT}({\bm v}) \propto \delta\left(\sum_{j=1}^N v_j\right)\, \prod_{j \in {\rm odd}} v_j \,\prod_{j > k} \left|\sin\left(\frac{\pi(j - k)}{N}\right)\right|^{\alpha v_j v_k}, \label{eq:cft_state}
\end{equation}
derived from a chiral boson conformal field theory (CFT), has been found to be a very good approximation of XXZ ground states for $\Delta > -1$ with an overlap (after restoring normalisation) $\mathcal{O} = |\braket{\psi_{\rm CFT}}{\psi_{\rm XXZ}}|^{2}$ in excess of 99\% for the majority of the critical region~\cite{cirac_mps10}. It comprises a $\delta(\sum_{j=1}^N v_j)$ contribution to enforce zero-$z$-magnetization constraint, $\prod_{j \in {\rm odd}} v_j$ to imprint the Marshall sign rule, and a product over positive-definite Jastrow factors controlled by the single positive parameter $\alpha$ related to the conformal dimension. The CFT state is exact for $\Delta = -1$ ($\alpha = 0$) and $\Delta = 0$ ($\alpha = \frac{1}{4}$), while its numerical minimisation for $-1 \leq \Delta \leq 1$ is well approximated by $\alpha =\frac{1}{2\pi}\arccos(-\Delta)$, as shown in \fir{fig:cft_plots}(a). For $\Delta = 1$ ($\alpha = \frac{1}{2}$) the CFT state is not exact and instead reduces to the well known Haldane-Shastry state \cite{haldane88,shastry88}. 

\begin{figure}[htb]
\begin{center}
\includegraphics[scale=0.5]{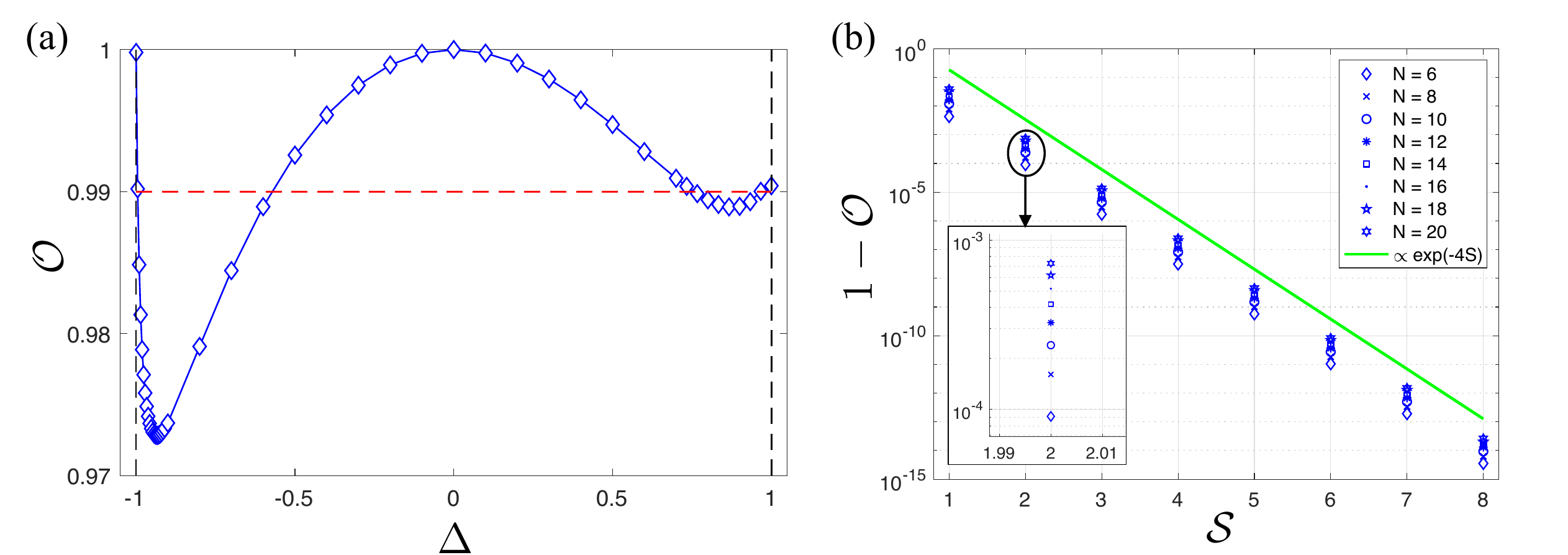}
\end{center}
\caption{(a) The overlap $\mathcal{O}$ with the exact XXZ ground state for $N=20$ spins as a function of the anisotropy $\Delta$ for $\psi_{\rm CFT}$ using $\alpha =\frac{1}{2\pi}\arccos(-\Delta)$. The line through the points is drawn to guide the eye, while the dashed horizontal line is the 99\% overlap. (b) The deviation of the overlap $1-\mathcal{O}$ between the $\Delta = 0$ exact Jastrow ground state and its softened NQS representation as a function of $\mathcal{S}$. The different symbols represent increasing system sizes $N = 6,\dots,20$, and the solid line is $\propto {\rm exp}(-4\mathcal{S})$ for reference. The panel shows a zoom of the $N$ dependence for $\mathcal{S} = 2$, which is illustrative of all values of $\mathcal{S}$ examined. The data and scripts used to create these plots in {\tt MATLAB} can be found in Ref.~\cite{jast_data}.}
\label{fig:cft_plots}
\end{figure}

The extensive-Jastrow NQS solution is rendered numerically benign by retaining a finite diagonal weight $w_{ii} = \mathcal{S}$. Specifically, given Jastrow interactions $\bm V$ we consider {\em softened} weights with $\mathcal{S}<10$. The $\Delta = 0$ XXZ ground state provides an ideal test case for this softened extensive Jastrow NQS. In \fir{fig:cft_plots}(b) we show the deviation of the overlap $1-\mathcal{O}$ between these states as a function of $\mathcal{S}$ for a sequence of increasing system sizes $N$. Two features are evident. First, the overlap converges to unity exponentially with increasing $\mathcal{S}$, consistent with the scaling of the multi-body terms in \eqr{eq:jastrow_finite_en}. Second, there is a weak decrease in $\mathcal{O}$ with increasing $N$, but this is easily controlled by the moderate values of $\mathcal{S}$ considered. Together this demonstrates the robust accuracy of the extensive-Jastrow NQS construction away from the formally exact $\mathcal{S} \rightarrow \infty$ limit.
 
\begin{figure}[htb]
\begin{center}
\includegraphics[scale=0.5]{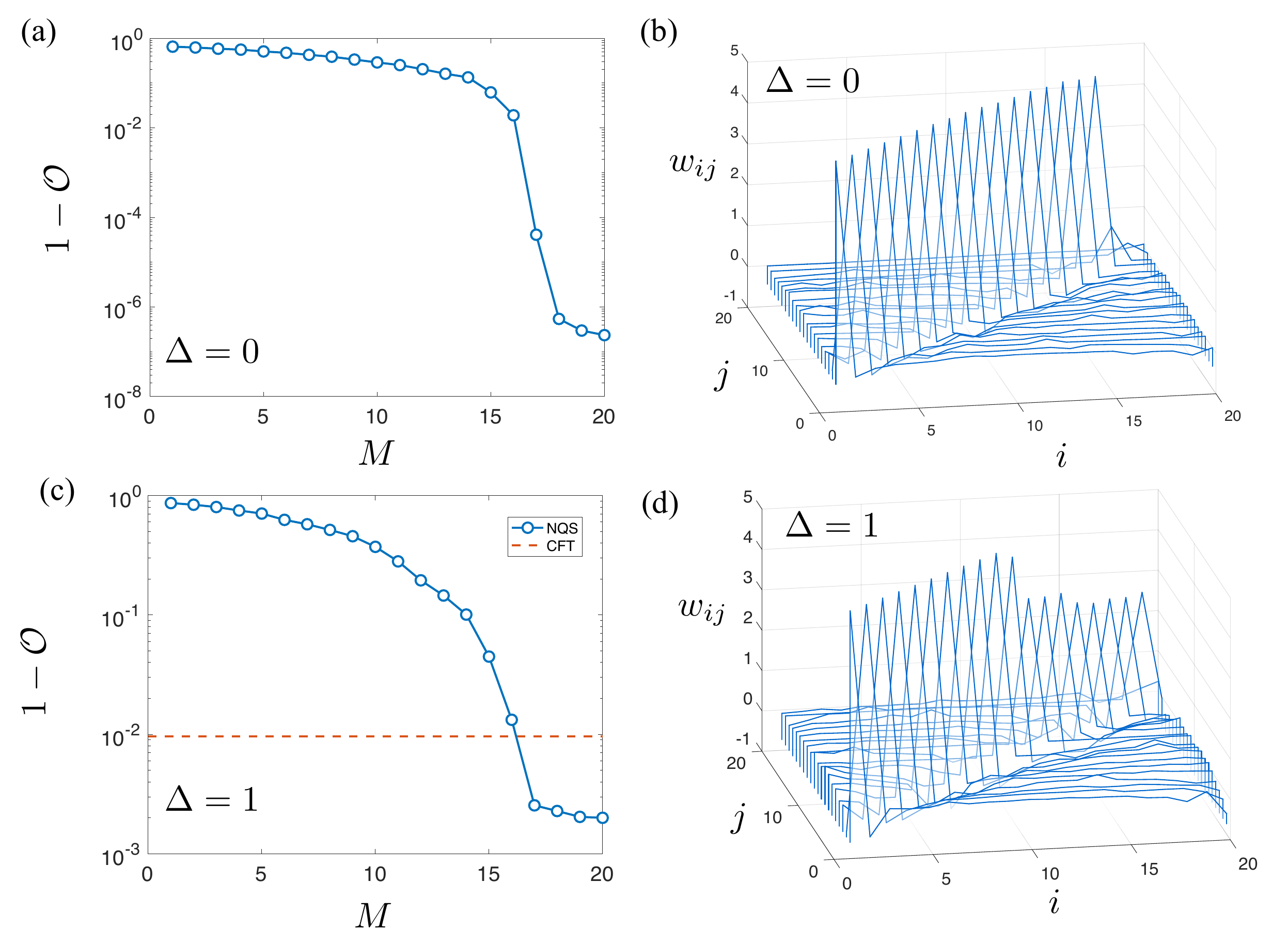}
\end{center}
\caption{(a) The deviation in the overlap $1-\mathcal{O}$ of the numerical NQS with the exact $N=20$ XXZ ground state for $\Delta = 0$ versus hidden unit number $M$. The lines drawn between points are to guide the eye. (b) The weights $w_{ij}$ between hidden units $i$ and visible units $j$ for the $M=N=20$ NQS $\Delta = 0$ solution. The interactions have been rearranged in order of decreasing maximum coupling strength. (c) The plot of $1-\mathcal{O}$ for $\Delta = 1$ where the XXZ ground state is not Jastrow. The dashed line is the value of $1-\mathcal{O}$ for the Jastrow CFT state. (d) The  weights $w_{ij}$ for the $M=N=20$ NQS $\Delta = 1$ solution. The data and scripts used to create these plots in {\tt MATLAB} can be found in Ref.~\cite{jast_data}.}
\label{fig:xxz_plots}
\end{figure}

To confirm the practical utility of the softened extensive-Jastrow NQS we performed VMC optimisation on an NQS for an $N=20$ XXZ chain. For numerical stability a parameter cap of $p_{\rm cap} = 5$ was applied. As is standard when applying VMC to the XXZ chain, the zero magnetisation constraint was enforced directly within the Monte Carlo sampling, removing the need to explicitly describe this with hidden units. Moreover, a gauge transform $\hat{\mathcal{G}} = \prod_{j \in {\rm odd}} \exp(-{\rm i}\pi \hat{Z}_j/2)$ was applied to $\hat{H}_{\rm XXZ}$ to give
\begin{equation*}
\hat{\mathcal{G}}\hat{H}_{\rm XXZ}\hat{\mathcal{G}}^\dagger = \sum_{j=1}^N \left(-\hat{X}_j\hat{X}_{j+1} - \hat{Y}_j\hat{Y}_{j+1} + \Delta \hat{Z}_j\hat{Z}_{j+1}\right), 
\end{equation*}
so the new Hamiltonian has exclusively non-positive off-diagonal elements in the $z$ basis, making it {\em stoquastic}. As a result its ground state is now guaranteed to have non-negative amplitudes in the $z$ basis~\cite{wang14} allowing us to restrict the NQS to real parameters. We minimised the NQS at $\Delta = 0$ for $1 \leq M \leq N$ using stochastic reconfiguration~\cite{sorella01,becca17} with system-extensive hidden units possessing randomly initialised weights and biases far below $p_{\rm cap}$. In Fig.~\ref{fig:xxz_plots}(a) we show $1-\mathcal{O}$ with the $\Delta = 0$ exact Jastrow ground state as a function of $M$. The deviation in the overlap displays a precipitous drop off of 4 orders of magnitude before plateauing at $M \geq 18$ above the softening limit seen in \fir{fig:cft_plots}(b) due to finite sampling fluctuations. We take the $1-\mathcal{O} \sim 10^{-7}$ as indicative of converging on an exact representation for $M<N$. Interestingly the weights $w_{ij}$ found by this numerical solution, reported in Fig.~\ref{fig:xxz_plots}(b), show that each hidden unit couples across the whole system and each interacts with a single unique visible unit with a weight that saturates $p_{\rm cap}$. These observations are consistent with the numerical optimisation ``learning" the softened extensive-Jastrow NQS structure with $\mathcal{S} = p_{\rm cap}$, demonstrating it is indeed a practical and stable NQS solution. The extensive-Jastrow NQS also explains earlier numerical observations in Ref.~\cite{glasser18} where an $M \sim O(N)$ scaling was found to describe exactly the Jastrow ground state of a 2D square lattice governed by the Laughlin state's parent spin Hamiltonian. 

The same optimisation scheme was performed for $\Delta=1$ where the Jastrow CFT state is not exact. Here $1-\mathcal{O}$ again plateaus for $M \geq 18$, but at a value orders of magnitude higher than at $\Delta = 0$, as shown in Fig.~\ref{fig:xxz_plots}(c). For the hidden unit numbers considered the NQS has not converged to the exact ground state. However, the NQS result does outperform the CFT state (Haldane-Shastry state) with $1-\mathcal{O}$ nearly 4 times smaller with the same number of hidden units. Examining the RBM interactions in Fig.~\ref{fig:xxz_plots}(d) shows that hidden units still favour optimisation into a Jastrow-like form, but with an increasingly softened interaction. This deviation is expected since NQS is a more expressive ansatz than Jastrow precisely due to the higher-order correlations introduced by the softened hidden units. 

The numerical results in Fig.~\ref{fig:xxz_plots}(a) suggest that even Jastrow states defined on a fully connected graph can be described with $M < N$ hidden units. While only a minor improvement from the $M=N$ extensive-Jastrow NQS introduced already we will find that understanding this behaviour opens the path for a considerable generalisation of the states that can be exactly captured by compact NQS.

\section{Tensor network formulation} \label{sec:tnt}
A powerful alternative formulation of NQS views them instead as a tensor network allowing for diagrammatic rewrites of their components~\cite{clark_cps18}. In this section we outline some key features of tensor network diagrams, introduce an NQS tensor network motivated from the RBM graph in \fir{fig:rbm}, and discuss diagrammatic observations about NQS tensor networks crucial for our main result.

\subsection{Tensor network diagrams} 
The exponentially many amplitudes $\Psi({\bm v})$ can be viewed as an order-$N$ tensor $\Psi_{v_1v_2\cdots v_N}$. Tensor network theory~\cite{schollwock11,verstraete08,cirac09,orus14} is a versatile way of handling this structureless tensor by decomposing it into many lower order tensors contracted together in a network. Here we will make repeated use of tensor network diagrams that form an important analytical tool in this approach. These represent generic tensors of any order as a shaded circle $\circ$ with protruding legs for each index its possesses. Given two order-3 tensors $A_{abc}$ and $B_{xyz}$ the contraction of them to form an order-4 tensor $C_{abxz} = \sum_\alpha A_{ab\alpha}B_{x\alpha z}$ is expressed as a graphical equation by joining the respective legs 
\begin{equation}
\includegraphics[scale=0.5,valign=c]{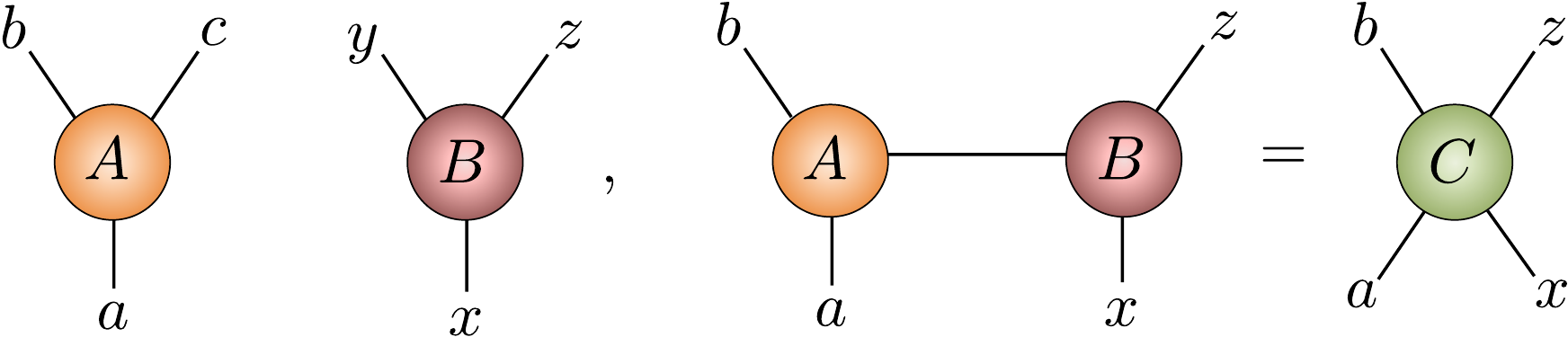} .
\end{equation}
For the most part we will only consider generic order-2 tensors, e.g. $A_{ab}$ which can be equivalently regarded as a matrix ${\bf A}$. We will use other shapes or shapes with symbols inside them to represent tensors with special structure. In particular, we will use a dot $\bullet$ to denote graphically the so-called COPY tensor. This is an essential building block for {\em sampleable} tensor networks in which the amplitudes $\Psi({\bm v})$ can be exactly and efficiently evaluated~\cite{clark_cps18}.

The COPY tensor~\cite{biamonte11,denny11} is the multi-index equivalent of the identity matrix. Specifically, for the case of three indices, the COPY tensor has elements
\begin{equation}
\delta_{ijk} = 
\left\{
\begin{array}{cc}
1, & i = j = k   \\
0, & {\rm otherwise}
\end{array}
\right., \label{eq:copy_tensor}
\end{equation}
which are zero unless all its indices are equal. It is represented diagrammatically as
\begin{equation}
\includegraphics[scale=0.5,valign=c]{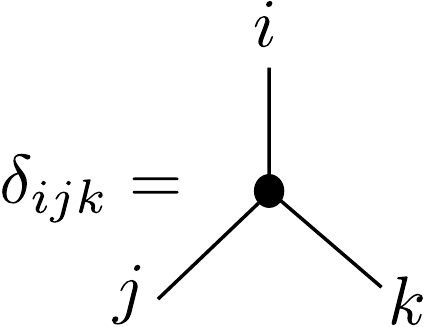} ,
\end{equation}
and it generalises straightforwardly to any number of indices. The name COPY tensor reflects that if any leg is contracted with a $z$ basis\footnote{A COPY tensor can be defined for any single spin basis by appropriately transforming each leg of this canonical one.} state $\ket{\uparrow} = ({1 \atop 0})$ and $\ket{\downarrow} = ({0 \atop 1})$ the same state gets copied to all the other legs, thereby factorising the tensor as
\begin{equation}
\includegraphics[scale=0.5,valign=c]{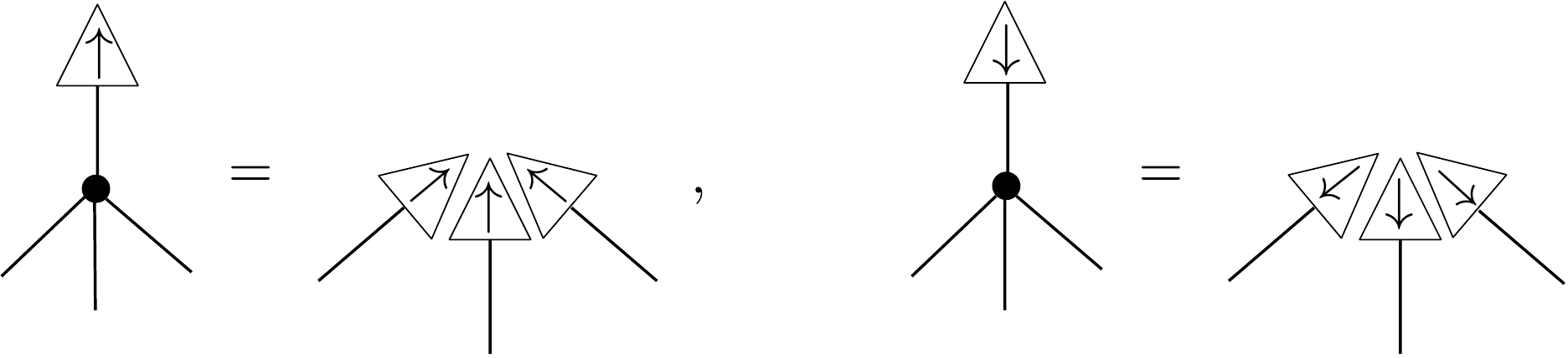} .
\end{equation}
Terminating any leg with an equal superposition $\ket{+} = \ket{\uparrow} + \ket{\downarrow} = ({ 1 \atop 1})$ removes the corresponding leg giving a COPY tensor with an order reduced by one
\begin{equation}
\includegraphics[scale=0.5,valign=c]{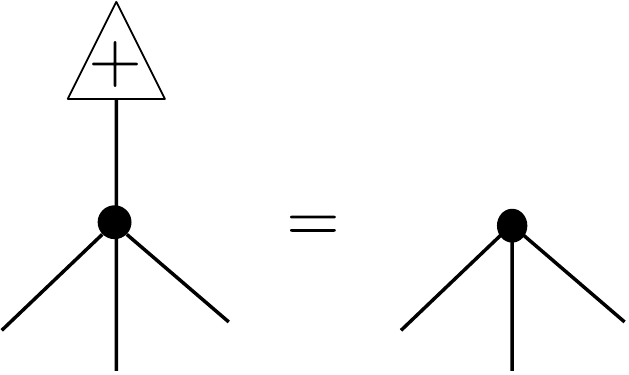} . \label{eq:copy_plus}
\end{equation}
The order-$N$ COPY tensor alone expands as the sum of two product terms
\begin{equation}
\fl\qquad\includegraphics[scale=0.5,valign=c]{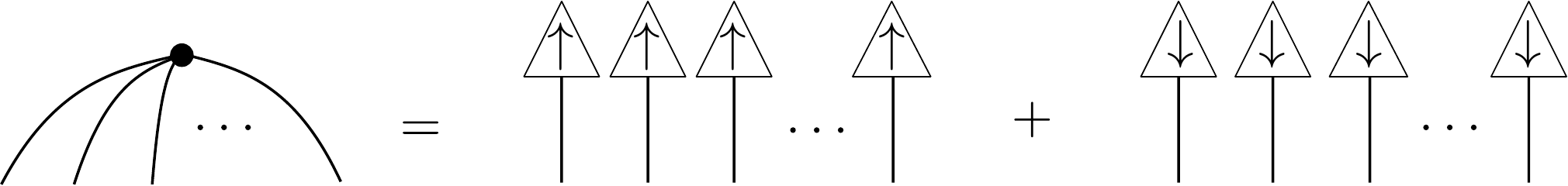} ,
\end{equation}
equivalent to the $z$ basis amplitudes of an $N$ spin ferromagnetic GHZ state. As we shall see shortly this basic tensor will form the skeleton of NQS.

A key property of COPY tensors is the so-called ``fusion" rule which allows COPY tensors having one or more legs contracted together to be amalgamated into one COPY tensor, e.g. as
\begin{equation}
\includegraphics[scale=0.5,valign=c]{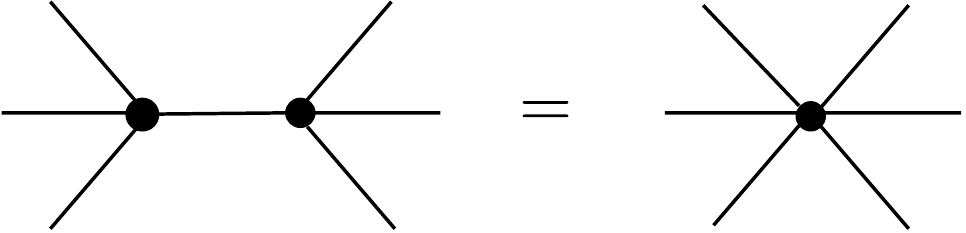} . \label{eq:copy_fusion}
\end{equation}
The rule also applies in reverse so a COPY tensor can be split up into an arbitrary network of connected COPY tensors with the same number of open legs.

A corollary of the fusion rule is that diagonal matrices can commute across the COPY tensor between any legs
\begin{equation}
\fl\qquad\includegraphics[scale=0.5,valign=c]{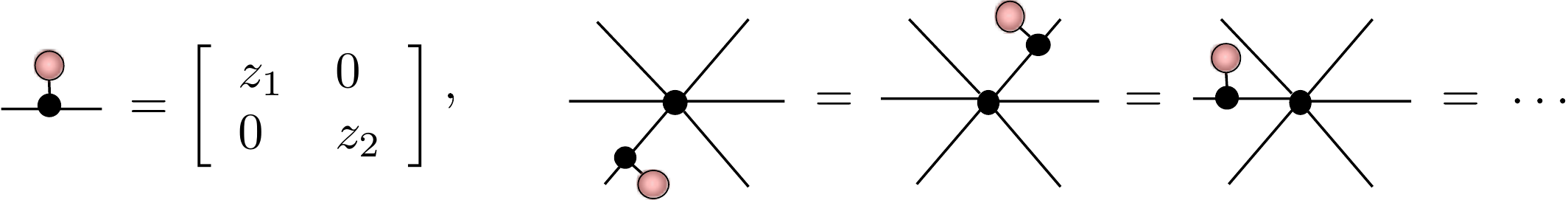} \quad. \label{eq:copy_diag}
\end{equation}
Similarly, the $\hat{X}$ component of anti-diagonal matrices distribute across legs as
\begin{equation}
\includegraphics[scale=0.5,valign=c]{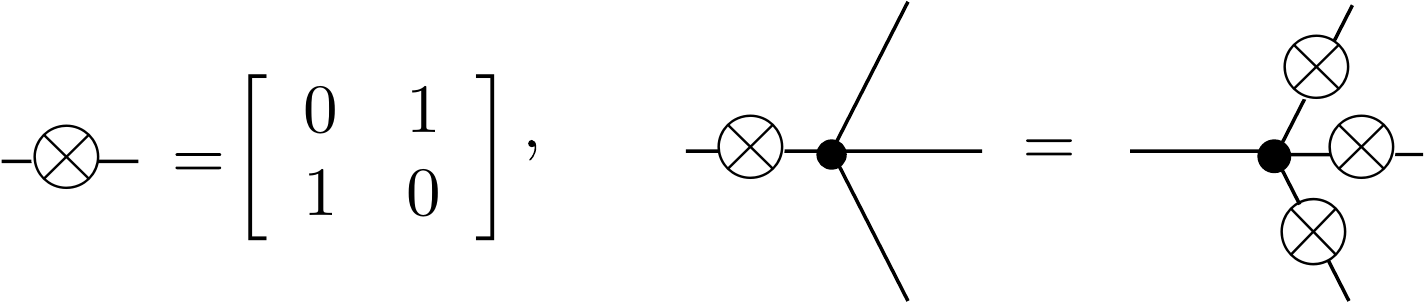} \quad. \label{eq:copy_adiag}
\end{equation}

\subsection{NQS tensor network}
The simple bipartite RBM graph in \fir{fig:rbm} readily motivates a corresponding tensor network representation~\cite{clark_cps18}. Specifically, each vertex is replaced by COPY tensor, each edge between the $i$th hidden and $j$th visible unit is replaced by a contraction with a generic $2 \times 2$ coupling matrix ${\bf C}^{(ij)}$, while vertices representing visible units have an additional open leg. Together this gives
\begin{equation}
\includegraphics[scale=0.5,valign=c]{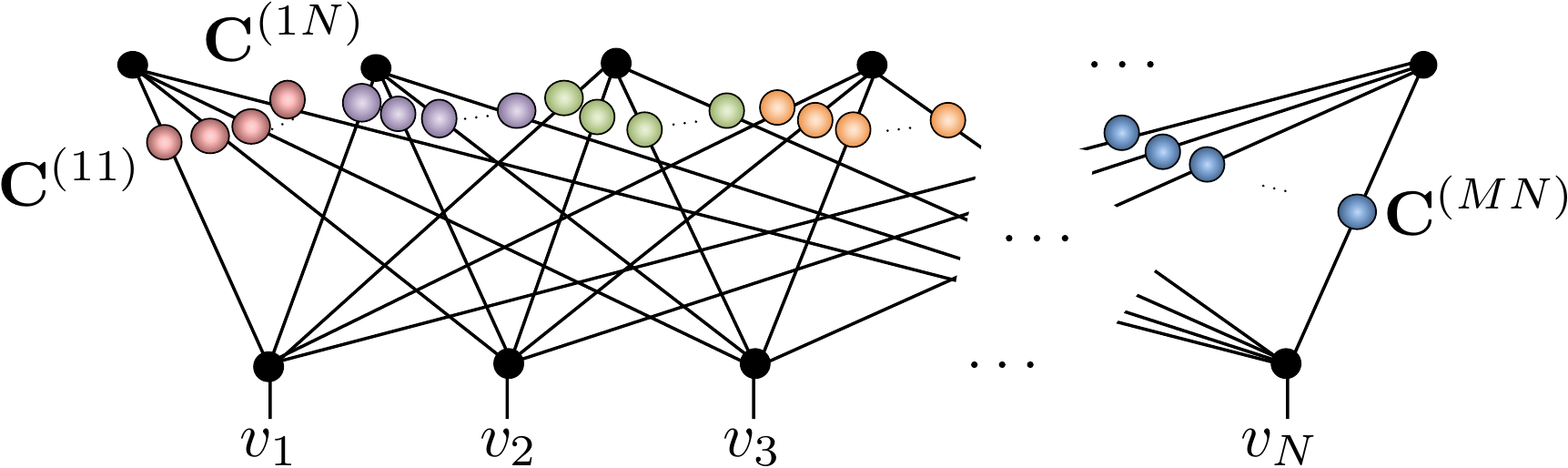} . \label{eq:nqs_tn}
\end{equation}
For this and forthcoming diagrams the colour of a tensor only denotes a convenient visual grouping, and each order-2 tensor is otherwise distinct. We will call any tensor network sharing this structure an {\em NQS tensor network}. Dissecting the network we see that in isolation each hidden unit is similar to a GHZ state, extensive over the whole system in general, but deformed locally by its coupling matrices as
\begin{equation}
\includegraphics[scale=0.5,valign=c]{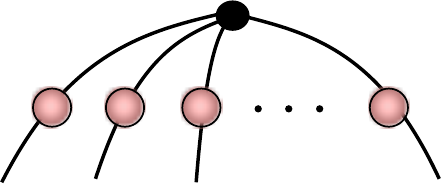} .
\end{equation}
The variational parameters provided by each hidden unit are therefore encoded by its set of $N$ coupling matrices $\Upsilon^{(i)} = \{{\bf C}^{(i1)},{\bf C}^{(i2)},\cdots,{\bf C}^{(iN)}\}$, explicitly tabulated as
\begin{equation}
\fl
\begin{array}{rcccc}
& \overbrace{+1 \quad -1}^{\textstyle v_1} & \overbrace{+1 \quad -1}^{\textstyle v_2} & \cdots & \overbrace{+1 \quad -1}^{\textstyle v_N}  \\
\\
h_i\,\left\{\begin{array}{c}
+1  \\
-1  
\end{array}\right. &
\left[
\begin{array}{cc}
C^{(i1)}_{++} & C^{(i1)}_{+-}   \\
C^{(i1)}_{-+} & C^{(i1)}_{--}
\end{array}
\right] & \left[
\begin{array}{cc}
C^{(i2)}_{++} & C^{(i2)}_{+-}   \\
C^{(i2)}_{-+} & C^{(i2)}_{--}
\end{array}
\right] & \cdots & \left[
\begin{array}{cc}
C^{(iN)}_{++} & C^{(iN)}_{+-}   \\
C^{(iN)}_{-+} & C^{(iN)}_{--}
\end{array}
\right]
\end{array}. \label{eq:cmat_tbl}
\end{equation}
As such the amplitudes of each hidden unit comprise two terms found by summing the product of coupling matrix elements selected by $\bm v$ along each row of \eqr{eq:cmat_tbl} as
\begin{equation}
\fl \qquad \Upsilon^{(i)}({\bm v}) = \sum_{h_i=\pm 1} \prod_{j=1}^N C^{(ij)}_{h_iv_j} = C^{(i1)}_{+v_1}C^{(i2)}_{+v_2}\cdots C^{(iN)}_{+v_N} \,+\, C^{(i1)}_{-v_1}C^{(i2)}_{-v_2}\cdots C^{(iN)}_{-v_N}. \label{eq:nqs_correlators}
\end{equation}
The amplitudes of the full NQS tensor network then follow as the product of each of these hidden unit correlators 
\begin{eqnarray}
\Psi_{\rm NQS}({\bm v}) &=& \prod_{i=1}^M \Upsilon^{(i)}({\bm v}), \label{eq:nqs_form}
\end{eqnarray}
which, like complex RBMs can be exactly and efficiently sampled. As discussed in Ref.~\cite{clark_cps18} coupling matrices can be a useful intuitive tool for unravelling the correlations and structures a given hidden unit imprints on the amplitudes of the overall NQS.

\subsection{Gauge freedom and RBM equivalence}
The NQS tensor network contains $4MN$ complex parameters, compared to $MN + M + N$ parameters for a complex RBM, suggesting these formulations are either not equivalent or that the NQS tensor network is over-parameterised. The latter is proven in the following result: 

\begin{lemma}[NQS tensor network RBM equivalence] \label{lemma:nqs2rbm}
An NQS tensor network for $N$ spin-$\half$ particles comprising $MN$ $2 \times 2$ coupling matrices ${\bf C}^{(ij)}$ is equivalent to the complex RBM formulation with $MN$ weights $\bm w$ and $N+M$ biases ${\bm a},{\bm b}$. 
\end{lemma}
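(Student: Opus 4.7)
The plan is to establish equivalence by finding an explicit, invertible reparameterization of the four entries of each coupling matrix ${\bf C}^{(ij)}$ that exposes the RBM structure. First I would note that any generic $2\times 2$ matrix indexed by $(h,v)\in\{+1,-1\}^2$ admits the unique expansion
\begin{equation}
C^{(ij)}_{hv} = \exp\!\left(\alpha_{ij} + \beta_{ij}h + \gamma_{ij}v + \delta_{ij}hv\right),
\end{equation}
because the four characters $\{1,h,v,hv\}$ form an orthogonal basis for complex functions on $\{+1,-1\}^2$, so $(\alpha_{ij},\beta_{ij},\gamma_{ij},\delta_{ij})$ are recovered by a discrete Fourier transform of $\log C^{(ij)}_{hv}$.

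Substituting this form into the hidden-unit correlator \eqr{eq:nqs_correlators} and carrying out the binary sum over $h_i$, the $h_i$-odd exponent collapses into a hyperbolic cosine, giving
\begin{equation}
\Upsilon^{(i)}({\bm v}) = e^{\sum_j\alpha_{ij}}\exp\!\left(\sum_j\gamma_{ij}v_j\right)\cdot 2\cosh\!\left(\sum_j\beta_{ij} + \sum_j\delta_{ij}v_j\right).
\end{equation}
Taking the product over the $M$ hidden units as in \eqr{eq:nqs_form}, the constant prefactor is an irrelevant global normalization, the $\gamma$-factors combine into $\exp(\sum_j a_jv_j)$ with $a_j=\sum_i\gamma_{ij}$, and the $\cosh$ factors are precisely those of \eqr{rbm_amps} with $b_i=\sum_j\beta_{ij}$ and $w_{ij}=\delta_{ij}$. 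This matches the complex RBM form with $N$ visible biases, $M$ hidden biases, and $MN$ weights. The converse direction is a direct construction: given $({\bm a},{\bm b},{\bm w})$, setting e.g.\ $\alpha_{ij}=0$, $\beta_{ij}=b_i/N$, $\gamma_{ij}=a_j/M$, and $\delta_{ij}=w_{ij}$ exhibits an equivalent NQS tensor network.

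The reparameterization transparently explains the apparent over-parameterization of the tensor network: of the $4MN$ tensor entries only the $MN$ weights $\delta_{ij}$ are physically independent, the $\beta_{ij}$ and $\gamma_{ij}$ contribute only through their row and column sums (giving the $M+N$ biases), and the $\alpha_{ij}$ are pure normalization, leaving a $(3MN-M-N)$-dimensional gauge redundancy. The main subtlety I would flag is that the exponential ansatz assumes all entries of ${\bf C}^{(ij)}$ are nonzero so that the complex logarithm is well defined. Vanishing entries correspond to RBM parameters taken as a limit to $\pm\infty$ --- the same mechanism already exploited in Lemma \ref{lemma:jastrow} via the diverging coupling $\mathcal{S}$ --- and can be handled either by incorporating such limits within the RBM family or by a short continuity argument on the coupling matrices.
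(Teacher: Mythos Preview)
Your proof is correct and follows essentially the same approach as the paper: both rest on the Boltzmann-like decomposition $C^{(ij)}_{hv}=\exp(\alpha_{ij}+\beta_{ij}h+\gamma_{ij}v+\delta_{ij}hv)$ of each coupling matrix, after which the partial biases collapse into $a_j=\sum_i\gamma_{ij}$ and $b_i=\sum_j\beta_{ij}$, the $\alpha_{ij}$ drop out as normalisation, and $w_{ij}=\delta_{ij}$ remain. The only stylistic difference is that the paper phrases the collapse of partial biases diagrammatically---as the gauge freedom of commuting diagonal matrices through COPY tensors via \eqr{eq:copy_diag} and merging them at common vertices---whereas you obtain it by direct substitution into the amplitude formula \eqr{eq:nqs_correlators}; the content is identical, and your remark on the zero-entry/limit subtlety matches the paper's softening discussion.
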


The proof is presented in \ref{app:boltzmann}. The key step follows from \eqr{eq:copy_diag}, which demonstrates that NQS tensor networks possess a {\em gauge freedom}, analogous to that of MPS~\cite{schollwock11}, where (anti-)diagonal matrices can be shuffled between coupling matrices connected to the same COPY tensor. Given any coupling matrix can be decomposed into Boltzmann-like form as
\begin{eqnarray*}
\fl\qquad \left[
\begin{array}{cc}
C^{(ij)}_{++} & C^{(ij)}_{+-}   \\
C^{(ij)}_{-+} & C^{(ij)}_{--}
\end{array}
\right] &=& e^c\left[
\begin{array}{cc}
e^{\tilde{b}_{ij}}  & 0  \\
0  & e^{-\tilde{b}_{ij}}   
\end{array}
\right] \left[
\begin{array}{cc}
e^{w_{ij}} & e^{-w_{ij}}  \\
e^{-w_{ij}} & e^{w_{ij}}  
\end{array}
\right]\left[
\begin{array}{cc}
e^{\tilde{a}_{ij}}  & 0  \\
0  & e^{-\tilde{a}_{ij}}   
\end{array}
\right],
\end{eqnarray*}
complete equivalence to the complex RBM formulation in \eqr{rbm} and \eqr{rbm_energy} follows from reshuffling the diagonal matrices~\cite{chen18}. Armed with Lemma~\ref{lemma:nqs2rbm} we will proceed using the tensor network formulation of NQS and exploit the diagrammatic rewrites it affords in analytic constructions.

\subsection{Diagonal circuit unravelling}
An insightful alternative rewiring of a general NQS tensor network is made by elevating each hidden unit correlator to a diagonal operator in the $z$ basis
\begin{equation}
\hat{\Upsilon}^{(i)} = \sum_{{\bm v}} \Upsilon^{(i)}({\bm v})\ket{\bm v}\bra{\bm v}, \label{eq:correlator_operator}
\end{equation} 
so that the full NQS is constructed as 
\begin{equation}
\kets{\Psi_{\rm NQS}} = \prod_{i=1}^M \hat{\Upsilon}^{(i)}\ket{\Phi_+}, \label{eq:nqs_prepare}
\end{equation}
where $\ket{\Phi_+} = \ket{+}\ket{+}\cdots\ket{+} = \sum_{\bm v} \ket{\bm v}$ is the uniform superposition reference state. This is equivalent to diagrammatically unravelling each hidden unit in the NQS tensor network, using the COPY tensor fusion/splitting rule \eqr{eq:copy_fusion} and $\ket{+}$ state termination rule \eqr{eq:copy_plus}, into the form of a non-unitary preparation `circuit', as shown here for an NQS with $M=3$ system-extensive hidden unit:
\begin{equation}
\fl\qquad\includegraphics[scale=0.5,valign=c]{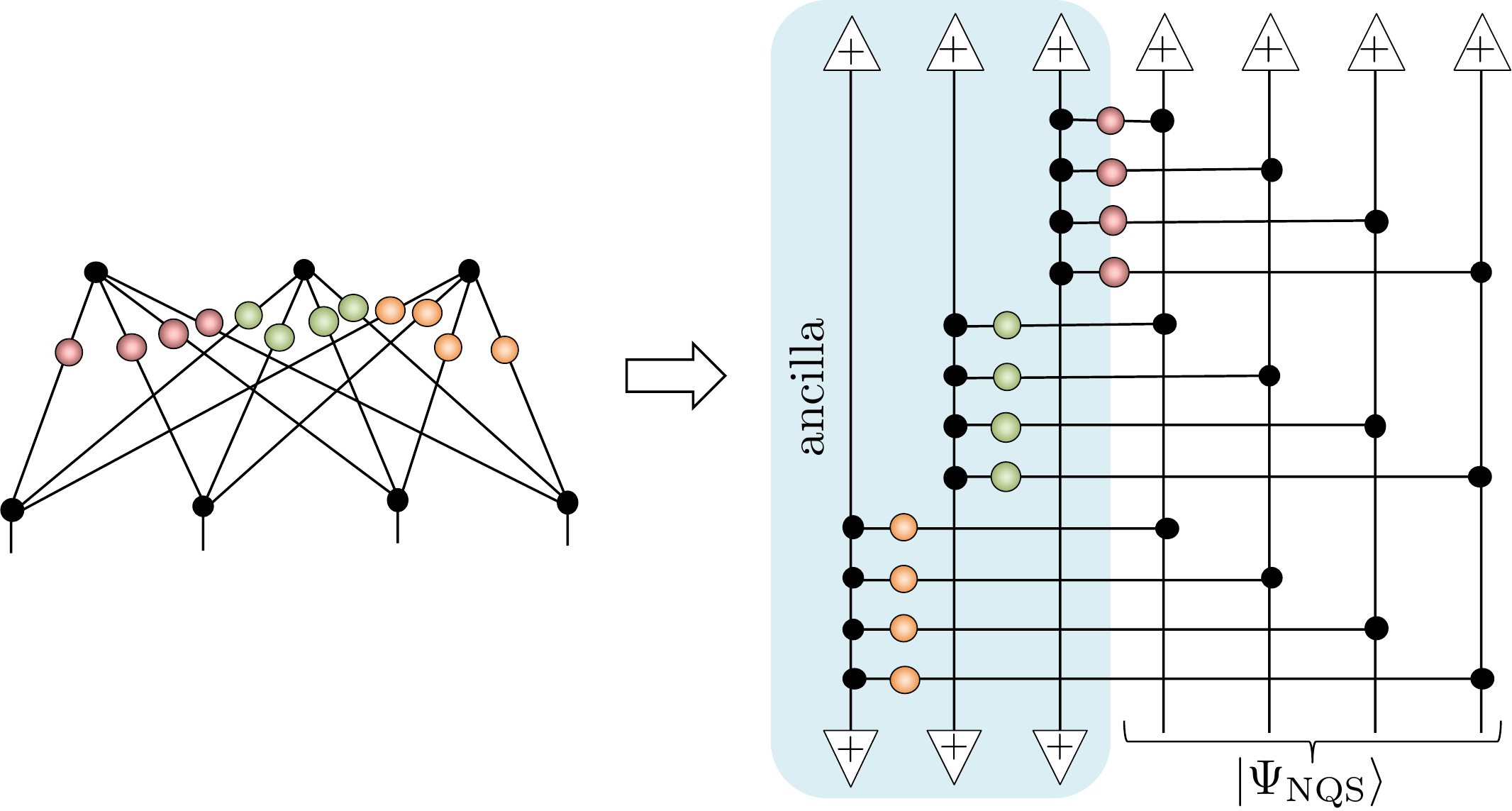} .
\end{equation} 
This circuit\footnote{Note that circuit diagrams in this paper are tensor networks diagrams. As such $\bullet$ denotes a COPY tensor which is similar but not identical to a control link found in standard quantum circuit diagrams~\cite{nielsen01}.} comprises entirely of diagonal two-spin operators and one ancilla spin per hidden unit that is initialised in $\ket{+}$ and projected out into $\ket{+}$ at the end. Since all operators commute their ordering in this circuit is irrelevant, reflecting the multiplicative structure of NQS. As each hidden unit involves a distinct ancilla the addition of more independently enhances the expressiveness of the ansatz.

\subsection{Applying single-spin operators to NQS}
Consider the following problem which is central to our main result. For an NQS initial state $\ket{\Psi_{\rm NQS}}$ what is the NQS representation of $\hat{Q}\ket{\Psi_{\rm NQS}}$ where a local operator $\hat{Q}$ is applied to a single spin of the physical system? In general this is non-trivial optimisation problem requiring both parametric and structural changes to the NQS~\cite{jonsson18,freitas18}. However, the tensor network formalism reveals two special cases where the update is simple: 

\begin{lemma}[Applying single spin operators to NQS] \label{lemma:single_spin}
The application of a single-spin operator $\hat{Q}$ to an NQS can be captured exactly by only changing the NQS parameters for two special cases: (i) An arbitrary operator $\hat{Q}$ applied to a visible unit that is univalent in the initial NQS; (ii) An operator $\hat{Q}$ that is (anti-)diagonal in the $z$ basis applied to any visible unit.
\end{lemma}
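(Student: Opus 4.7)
The plan is to work entirely within the tensor network formulation of \eqr{eq:nqs_tn} and use the commutation rules \eqr{eq:copy_diag} and \eqr{eq:copy_adiag} for COPY tensors, together with the fact that any $2\times 2$ matrix is a valid coupling matrix. Applying a single-spin operator $\hat Q$ to visible unit $j$ corresponds diagrammatically to inserting $\hat Q$ on the open physical leg of the COPY tensor at visible vertex $j$. The goal in each case is to absorb this $\hat Q$ into one or more coupling matrices ${\bf C}^{(ij)}$ so that the tensor network retains precisely the NQS form \eqr{eq:nqs_tn}. By Lemma~\ref{lemma:nqs2rbm} any such modification of the coupling matrices corresponds to a pure change of the RBM parameters $\bm\lambda=\{\bm a,\bm b,\bm w\}$ with no change in $M$ or the connectivity.

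For case (i), if visible unit $j$ is univalent then its COPY tensor has exactly two legs: the open physical leg and one contracted leg running to the unique coupling matrix ${\bf C}^{(ij)}$ of the sole hidden unit $i$ touching it. A two-leg COPY tensor is the identity, so after inserting $\hat Q$ on the physical leg the entire local fragment is simply $\hat Q\,{\bf C}^{(ij)}$, which is itself a $2\times 2$ matrix. Replacing ${\bf C}^{(ij)}$ by $\tilde{\bf C}^{(ij)}=\hat Q\,{\bf C}^{(ij)}$ (suitably oriented) reproduces $\hat Q\ket{\Psi_{\rm NQS}}$ exactly with only the parameters of the one coupling matrix altered. No restriction on $\hat Q$ is needed because any $2\times 2$ matrix is a legitimate coupling matrix.

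For case (ii), visible unit $j$ may be connected to an arbitrary number $k_j$ of hidden units, so its COPY tensor has $k_j+1$ legs. If $\hat Q$ is diagonal in the $z$ basis, rule \eqr{eq:copy_diag} allows us to commute $\hat Q$ from the physical leg through the COPY tensor and onto \emph{any one} of the edges running to a coupling matrix; absorbing it there by matrix multiplication modifies a single ${\bf C}^{(ij)}$ while leaving all others untouched. If $\hat Q$ is anti-diagonal we write $\hat Q = D\hat X$ with $D$ diagonal: the diagonal factor $D$ is handled as above, while rule \eqr{eq:copy_adiag} shows that pushing $\hat X$ through the COPY tensor distributes an $\hat X$ onto \emph{every} edge leaving the visible unit $j$, so each coupling matrix ${\bf C}^{(ij)}$ for hidden units $i$ neighbouring $j$ is multiplied by $\hat X$. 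In both sub-cases the result is again a legitimate NQS tensor network with the same $M$ and the same connectivity graph, merely with a well-defined local update of the coupling matrices (and hence of the RBM parameters).

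The one place where care is needed is in case (ii) with anti-diagonal $\hat Q$: the distribution of $\hat X$ over \emph{all} neighbouring coupling matrices must be tracked consistently, and one must check that the same rewrite works whether $\hat Q$ is applied before or after other already-absorbed operators (this follows because the rules \eqr{eq:copy_diag} and \eqr{eq:copy_adiag} are purely local identities on the COPY tensor at vertex $j$). Once these diagrammatic moves are recorded, the statement of the lemma is immediate, and the parameter updates can be read off directly: diagonal $\hat Q$ shifts the biases $a_j$ and, optionally, one $w_{ij}$; anti-diagonal $\hat Q$ additionally contributes an $\hat X$ factor to every ${\bf C}^{(ij)}$ with $w_{ij}\neq 0$; and in the univalent case an arbitrary $\hat Q$ can always be absorbed into the single ${\bf C}^{(ij)}$.
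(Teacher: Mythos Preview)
Your proposal is correct and follows essentially the same route as the paper: absorb $\hat Q$ directly into the single coupling matrix in the univalent case, and use the COPY tensor commutation rules \eqr{eq:copy_diag} and \eqr{eq:copy_adiag} for the (anti-)diagonal case, then invoke Lemma~\ref{lemma:nqs2rbm}. Your treatment is in fact more explicit than the paper's---in particular your decomposition $\hat Q = D\hat X$ for the anti-diagonal sub-case spells out what the paper leaves implicit---and the only cosmetic discrepancy is the side of multiplication in case~(i), where the paper writes ${\bf C}^{(ij)}\mapsto {\bf C}^{(ij)}{\bf Q}$ consistent with the index convention $C^{(ij)}_{h_i v_j}$.
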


\begin{proof}
For case (i), where the visible unit $j$ is only one connected to a single hidden unit $i$, then $\hat{Q}$ is easily absorbed into the coupling matrix as ${\bf C}^{(ij)} \mapsto {\bf C}^{(ij)}{\bf Q}$, for example as:
\begin{equation}
\fl\qquad\quad\includegraphics[scale=0.5,valign=c]{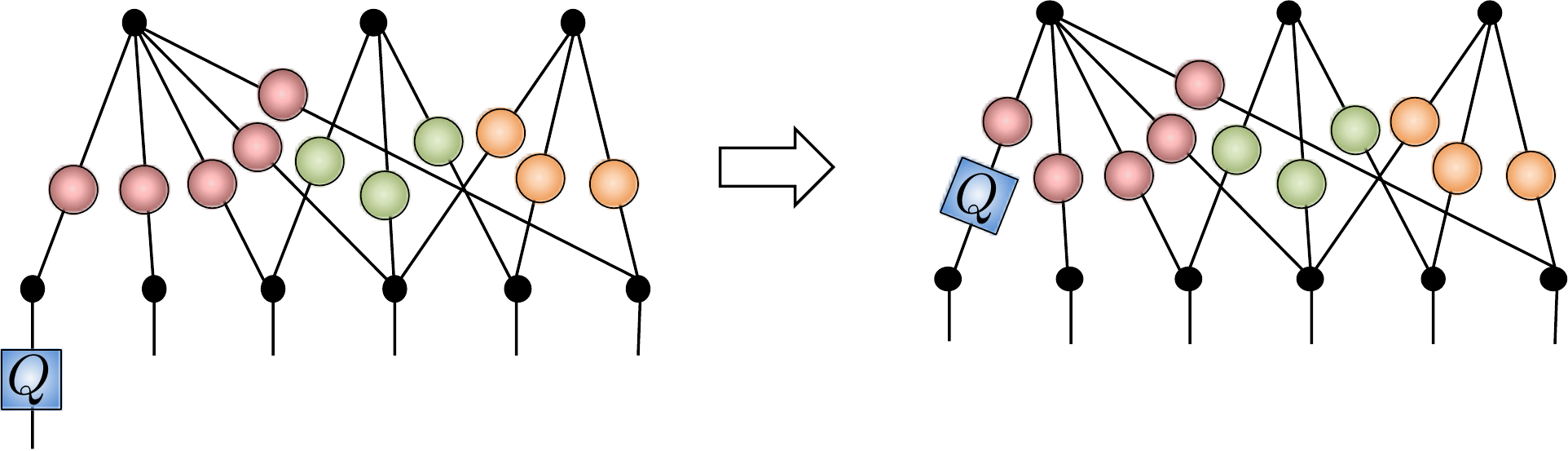}.
\end{equation}
For case (ii) $\hat{Q}$ can be applied to any spin since, via \eqr{eq:copy_diag} and \eqr{eq:copy_adiag}, it can be commuted past the visible unit's COPY tensor and absorbed into the NQS representation, for example as
\begin{equation}
\fl\qquad\quad\includegraphics[scale=0.5,valign=c]{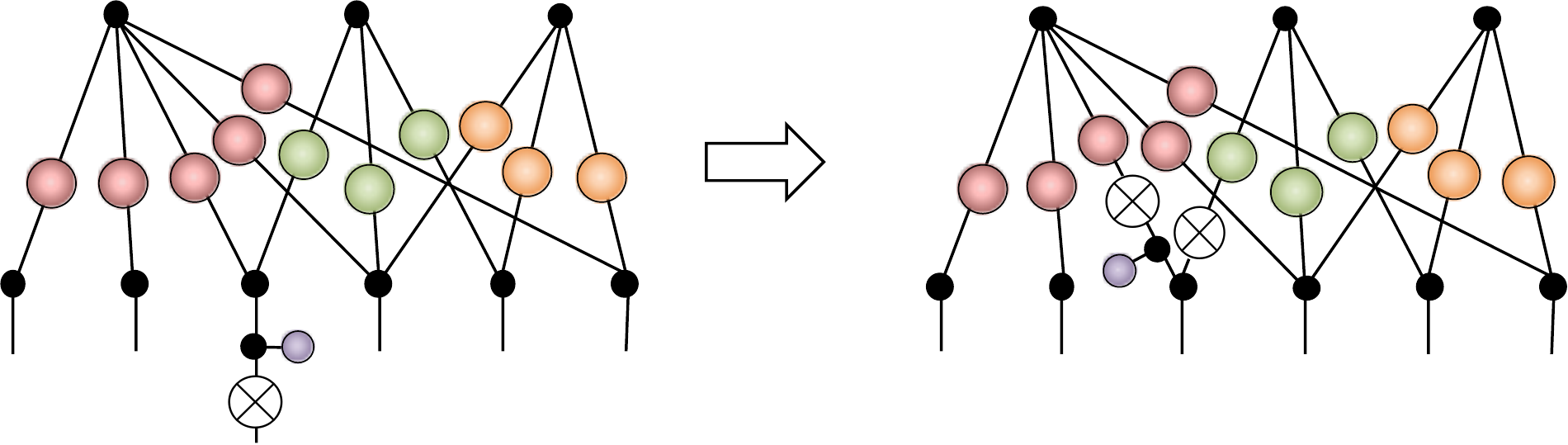} ,
\end{equation}
irrespective of the visible units valency in the NQS. Lemma~\ref{lemma:nqs2rbm} allows these parameter changes to be expressed in terms of complex RBM parameters, if required.
\end{proof} 

\section{Generalising compact Jastrow NQS} \label{sec:vmj}
In this section we apply the tensor network formalism to Jastrow states revealing a new type of compact NQS representation and a novel pathway for generalising it to wider classes of states.   

\begin{figure}[ht]
\begin{center}
\includegraphics[scale=0.5]{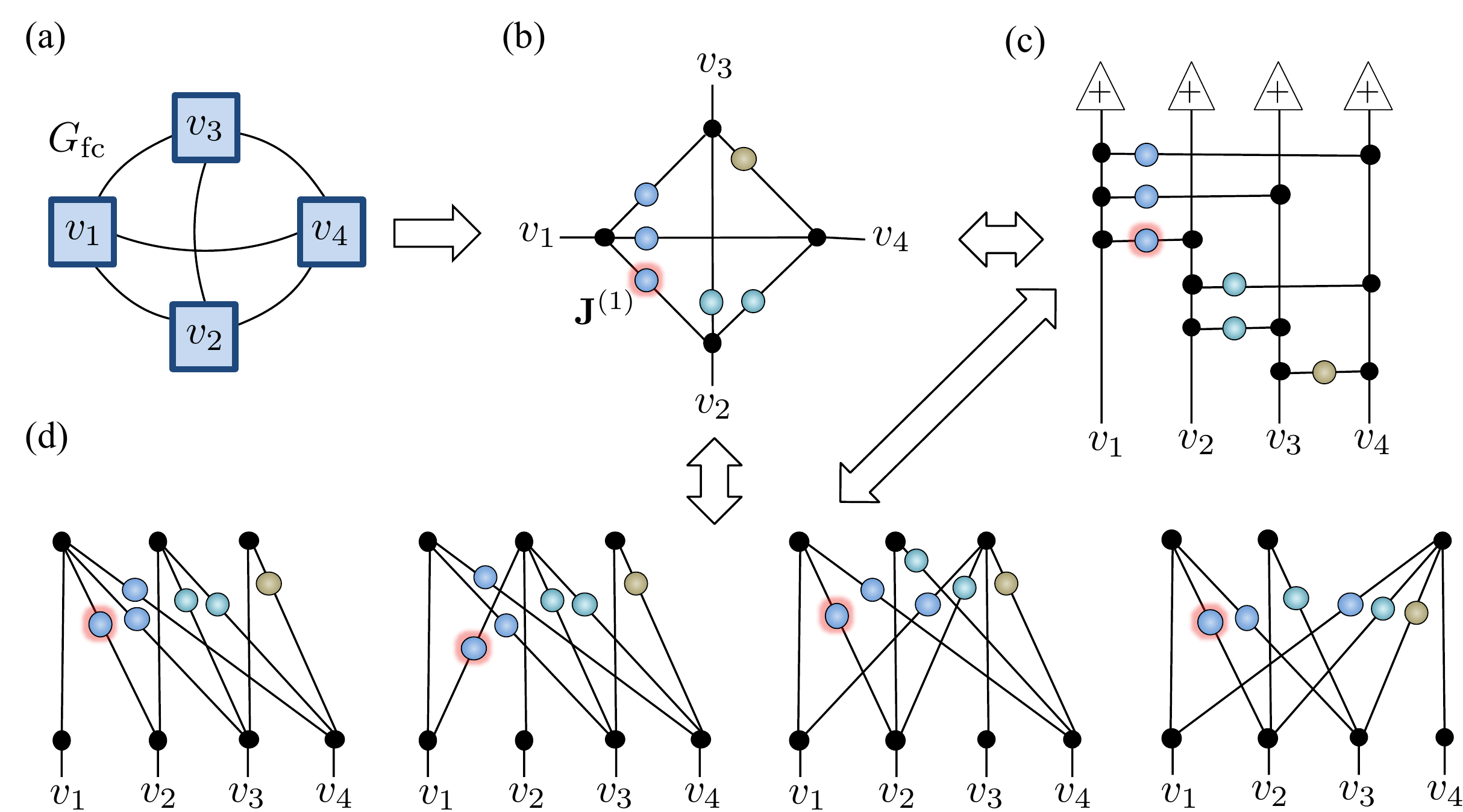}
\end{center}
\caption{(a) A fully connected graph $G_{\rm fc}$ for $N=4$ spins. (b) A CPS tensor network corresponding to $G_{\rm fc}$ in (a), with the first edge matrix ${\bf J}^{(1)}$ highlighted (and in each subsequent diagram as well to illustrate where it moves to). (c) A rewiring using \eqr{eq:copy_fusion} and \eqr{eq:copy_plus} of (b) into a circuit of diagonal two-spin operators. (d) Some possible rewirings of (b) and (c) into the geometry of an NQS tensor network. The examples drawn emphasise that any visible unit can be made univalent. All diagrams generalise straightforwardly for any $N$.}
\label{fig:jastrow}
\end{figure}

\subsection{Tensor networks for Jastrow states} \label{sec:jastrowtn}
Reformulating Jastrow states in terms of tensor networks provides several equivalent diagrammatic forms. Analogous to the RBM graph earlier, a Jastrow state defined over a graph $G$ leads to a sampleable tensor network constructed by replacing vertices with COPY tensors possessing an open leg and edges $i$ by contractions with $2 \times 2$ edge matrices ${\bf J}^{(i)}$. This produces a correlator product state (CPS) amplitude~\cite{changlani09}
\begin{equation}
\psi_{\rm JS}({\bm v}) = \prod_{i=1}^{E(G)} J^{(i)}_{v_{{\tt s}_i}v_{{\tt t}_i}}. \label{eq:jastrow_cps}
\end{equation}
Since the CPS tensor network is constructed from COPY tensors it possesses gauge freedom ensuring the representation in terms of ${\bf J}^{(i)}$ is completely equivalent to the Boltzmann parameterisation in \eqr{eq:jastrow_pairs} in terms of biases ${\bm c}$ and interactions $\bm V$. For the extremal case of $G_{\rm fc}$ in \fir{fig:jastrow}(a) the resulting CPS tensor network is shown in \fir{fig:jastrow}(b). The CPS tensor network can be straightforwardly unravelled into a circuit of diagonal two-spin operators, shown in \fir{fig:jastrow}(c). In contrast to the same unravelling of a general NQS there are no ancilla involved, consistent with the Jastrow energy function \eqr{eq:jastrow_energy} having interactions $\bm V$ only between visible units. Crucially the CPS and circuit networks can both be rewired in multiple ways to have the geometry of an NQS tensor network, a selection of which are shown in \fir{fig:jastrow}(d). 

Although specialised to the graph $G_{\rm fc}$ we can nonetheless glean several useful features about these new NQS tensor networks justifying why they should be considered the canonical Jastrow NQS form. First, they are compact possessing $M= N-1$ hidden units with decreasing coordination $N,N-1,\dots,2$. This explains why convergence to an exact numerical solution can appear with $M<N$ earlier in \secr{sec:xxz}. Second, they possess bare wires connecting some visible and hidden COPY tensors. From \eqr{eq:delta_coupling} these encode a diverging interaction enforcing a perfect correlation between those units, which in the tensor network language is equivalent to a coupling matrix ${\bf C}^{(ij)} = \mathbbm{1}_{2\times 2}$. Third, one of these new NQS forms, first in \fir{fig:jastrow}(d), has RBM weights ${\bm w} = {\bm V}' + \lim_{\mathcal{S}\rightarrow\infty} \mathcal{S}\mathbbm{1}_{N-1\times N}$, where ${\bm V}'$ denotes the $N$th (entirely zero) row removed from $\bm V$. Since $\bm V$ is a strictly upper-triangular matrix it also gives rise to the successively decreasing coordination of hidden units. Finally, they represent most direct and minimal translation of $G_{\rm fc}$ into an NQS since the $\half N(N-1)$ edge matrices ${\bf J}^{(i)}$ become the NQS coupling matrices. 

Contrast \fir{fig:jastrow}(d) to the tensor network versions of the Jastrow NQS representations introduced earlier in \secr{sec:jastrow_nqs}. For sparse-Jastrow NQS the tensor network possesses only order-2 hidden COPY tensors, as depicted here for a $N=4$ spin $G_{\rm fc}$ Jastrow state  
\begin{equation}
\includegraphics[scale=0.5,valign=c]{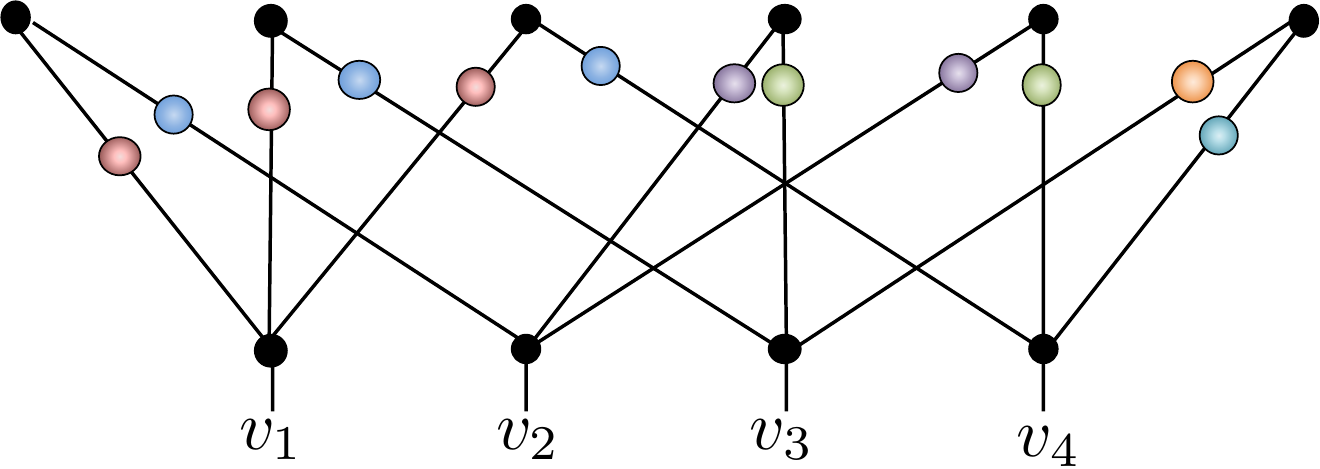} , \label{eq:jastrow_direct}
\end{equation}
while the extensive-Jastrow NQS has perfect correlations present between unique pairs of visible and hidden units, as shown here
\begin{equation}
\includegraphics[scale=0.5,valign=c]{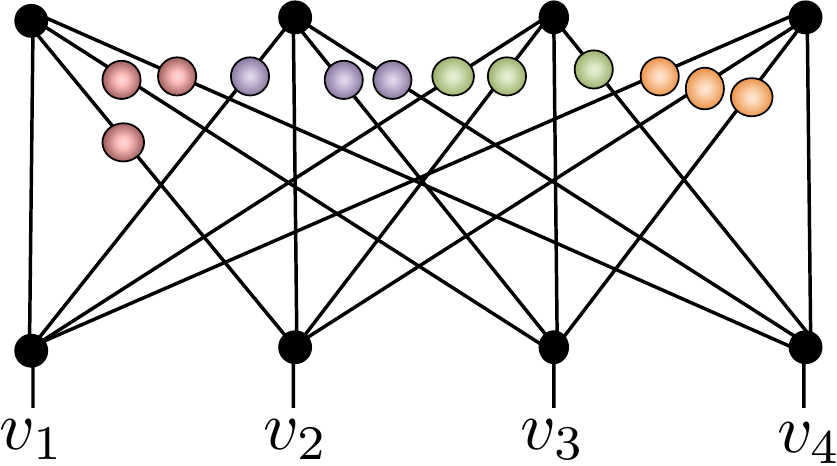} . \label{eq:compact_jastrow_nqs}
\end{equation}
Both involve $N(N-1)$ coupling matrices, but as expected can be diagrammatically rewired into the new forms shown in \fir{fig:jastrow}(d). However, crucially the new set of Jastrow NQS contain variants in which any single visible unit is univalent, as illustrated in \fir{fig:jastrow}(d). In light of Lemma~\ref{lemma:single_spin} this is an important property not present in either the sparse \eqr{eq:jastrow_direct} or extensive \eqr{eq:compact_jastrow_nqs} Jastrow NQS representations. 

By deleting couplings from the Jastrow NQS for $G_{\rm fc}$, equivalent to removing edges in the graph, we can readily obtain a Jastrow NQS defined over any graph $G$. This NQS will potentially possess more univalent visible units, dependent on the structure of the underlying graph $G$. For this reason we will now introduce some additional graph-theoretic concepts allowing us to formalise the Jastrow NQS construction for any graph $G$ and determine the freedom in its univalency.

\subsection{Graph theoretic tools} \label{sec:graph_theory}
We will exploit a number of basic concepts from graph theory~\cite{bondy2011} and illustrate them using an example graph $G$ from \eqr{eq:simple_graph} earlier
\begin{equation}
\includegraphics[scale=0.5,valign=c]{figures/graph.pdf} . \label{eq:graph_simple}
\end{equation}
First, the {\em neighbourhood} of a vertex $a \in \mathcal{V}$, denoted by $\mathcal{N}_a(G)$, is defined as the set of all vertices that are adjacent to vertex $a$, $\mathcal{N}_a(G) := \{b \in \mathcal{V} | (a,b) \in E(G)\}$. Second, a {\em leaf vertex} is a vertex with only a single edge incident on it, and we will denote the set of these in a graph as $\mathcal{L}(G)$. Third, an {\em independent set} $\mathcal{I}(G)$ is a set of vertices in which no pair shares an edge between them. Of the many such sets $\mathcal{I}(G)$ for a given graph those with the largest cardinality are maximum independent sets denoted as $\alpha(G)$. Fourth, a closely related concept is that of a {\em vertex cover}. A vertex cover $\mathcal{C}(G)$ is a set of vertices such that each edge of the graph is incident to at least one vertex in the vertex cover. Of the many sets $\mathcal{C}(G)$ those with the minimum cardinality form minimum vertex covers, denoted $\beta(G)$. We illustrate these four concepts here:
\begin{equation}
\fl\includegraphics[scale=0.5,valign=c]{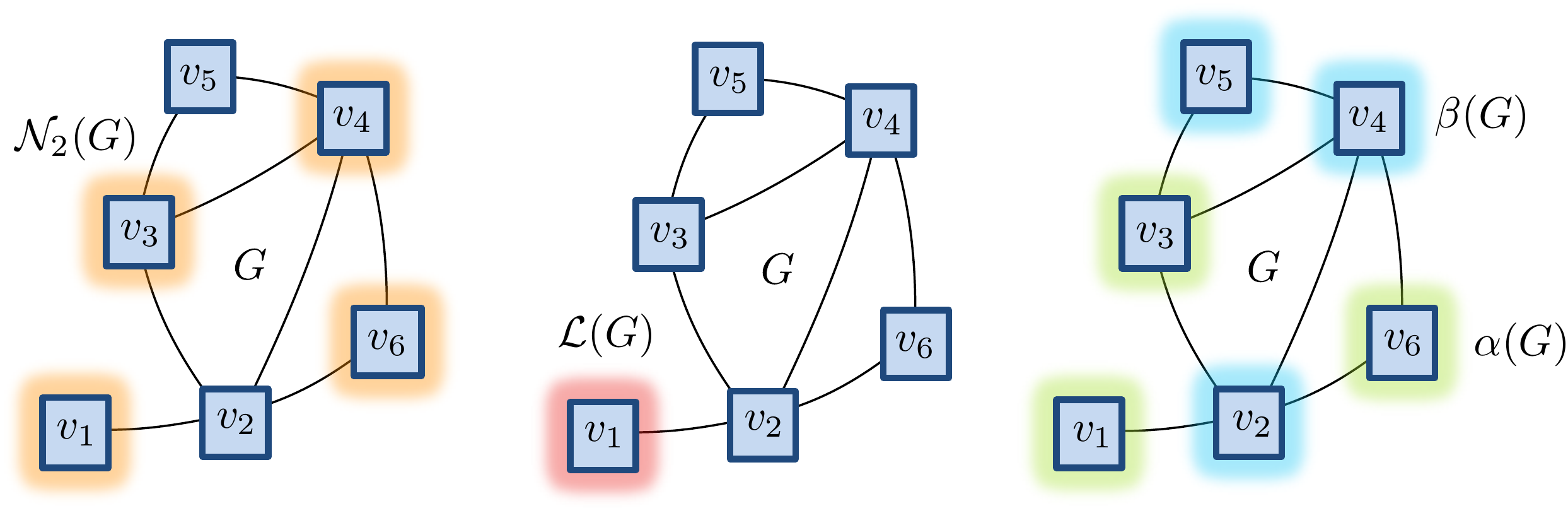} .
\end{equation}
A given maximum independent set has a corresponding minimum vertex cover that is its complement such that $\alpha(G) + \beta(G) = \mathcal{V}$, so finding one automatically gives the other. However, identifying either is an integer linear programming problem known to be NP-hard~\cite{karp72} in general.

\subsection{Vertex Modified Jastrow-NQS} \label{sec:ancilla_free}
Pulling together the observations from \secr{sec:jastrowtn} and tools from \secr{sec:graph_theory} we present here a general procedure {\tt graph2nqs} for constructing a Jastrow NQS tensor network directly from its graph $G$:
\begin{enumerate}
\item Begin the construction of the tensor network by inserting a COPY tensor with an open leg $-\!\bullet$ for each vertex in $G$, illustrated here for a simple example:
\begin{equation*}
\includegraphics[scale=0.5,valign=c]{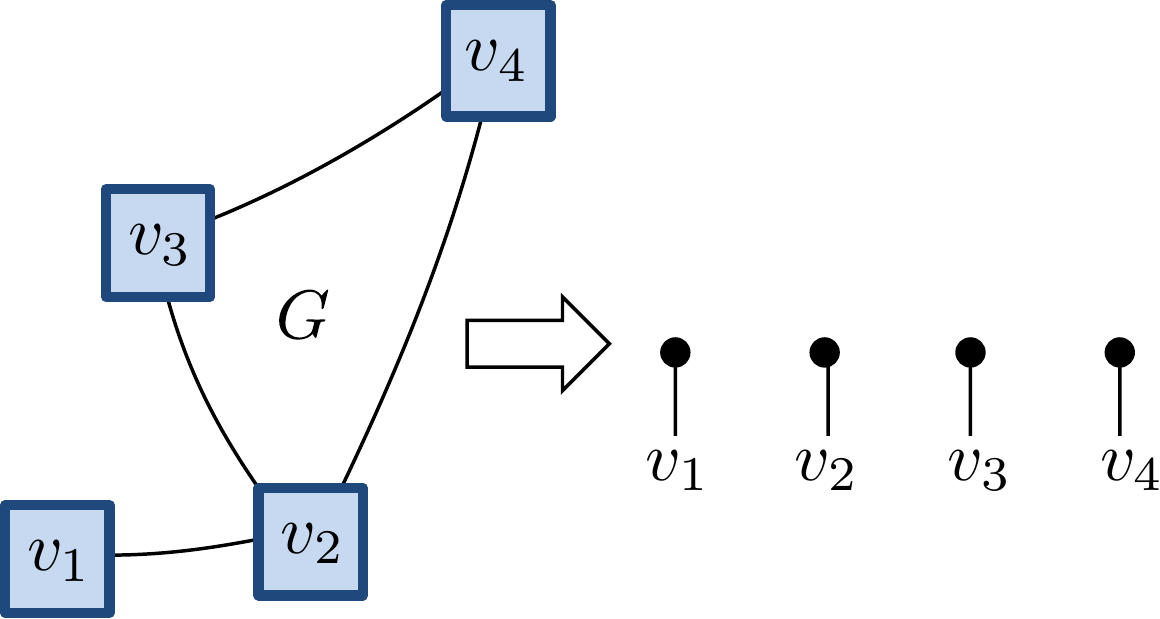} .
\end{equation*}
\item Pick any ordered vertex cover set $\mathcal{C}(G)$, and for each vertex in it split its corresponding COPY tensor into two as $-\!\!\!\bullet\!\!\!-\!\!\bullet$, with the first COPY tensor retaining the open leg, so it continues to represents the visible unit, and the second COPY tensor representing a perfectly correlated hidden unit:
\begin{equation*}
\includegraphics[scale=0.5,valign=c]{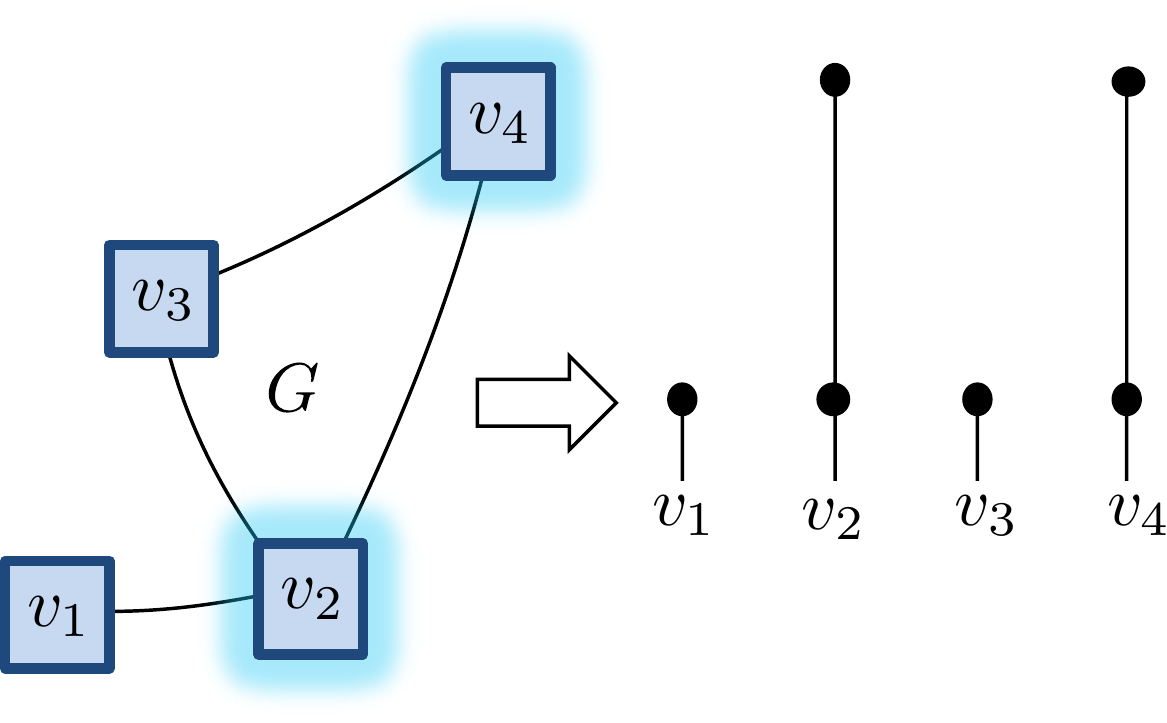} .
\end{equation*}
\item Proceed through $\mathcal{C}(G) = \{c_1,c_2,\dots\}$ in order. For vertex $c_j$ contract a separate order-2 tensor ${\bf J}^{(i)}$ $-\!\!\bigcirc\!\!-$ between $c_j$'s corresponding hidden COPY tensor and each visible COPY tensor corresponding to the vertices in
\begin{equation*}
l \in \mathcal{N}_j(G)/\left[\bigcup_{k=1}^{j-1} c_k\right],
\end{equation*}
neighbouring $c_j$, but with the previous members of $\mathcal{C}(G)$ removed:
\begin{equation*}
\fl\includegraphics[scale=0.5,valign=c]{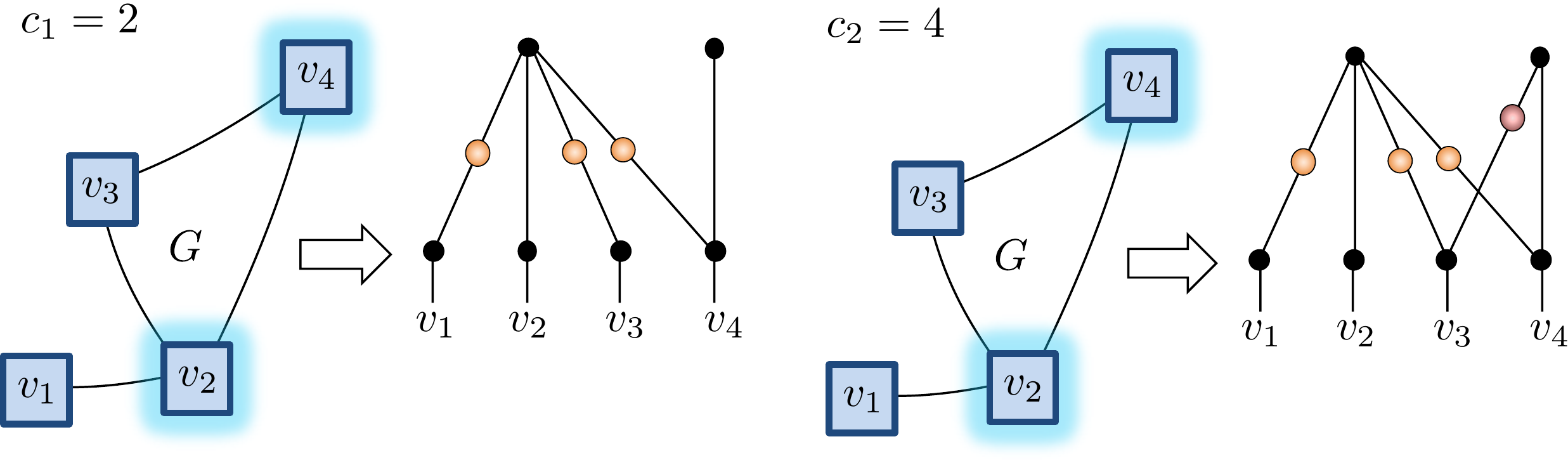} .
\end{equation*}
\end{enumerate}
Given we can choose any ordered vertex cover $\mathcal{C}(G)$ in {\tt graph2nqs} there is considerable freedom in our eventual Jastrow NQS. Here are some useful special cases:

\begin{lemma}[Minimum hidden unit Jastrow NQS] \label{lemma:jastrow_min}
A Jastrow state defined over $G$ can be expressed as a compact NQS tensor network comprising $M=|\beta(G)| \leq N-1$ hidden units.
\end{lemma}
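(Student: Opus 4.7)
The plan is to invoke the {\tt graph2nqs} procedure directly, specialising the free choice of ordered vertex cover to a minimum vertex cover $\mathcal{C}(G) = \beta(G)$. Since step (ii) of {\tt graph2nqs} introduces exactly one hidden COPY tensor per vertex in $\mathcal{C}(G)$, the produced tensor network automatically possesses $M=|\beta(G)|$ hidden units. The first task is therefore to verify that the contraction of this tensor network does in fact reproduce the Jastrow amplitudes as given by the CPS form \eqr{eq:jastrow_cps}.

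Correctness hinges on two bookkeeping observations. First, every edge $(a,b)\in E(G)$ receives exactly one coupling matrix ${\bf J}^{(i)}$ during step (iii): because $\beta(G)$ is a vertex cover, at least one endpoint lies in $\beta(G)$, and the ``previous members'' exclusion guarantees that when both endpoints are in $\beta(G)$ only the earlier-processed one contributes the coupling, while an endpoint not in $\beta(G)$ receives the edge purely passively. Second, the bare wire joining each visible COPY tensor to its paired hidden COPY tensor is, via \eqr{eq:delta_coupling}, equivalent to the identity coupling ${\bf C}^{(ij)} = \mathbbm{1}_{2\times 2}$, which perfectly correlates $h_{c_j}$ with $v_{c_j}$. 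Using the COPY fusion rule \eqr{eq:copy_fusion} in reverse, each hidden COPY tensor can be amalgamated back into the visible COPY tensor of its paired vertex, collapsing the network exactly to the CPS diagram \eqr{eq:jastrow_cps} on $G$ with edge matrices ${\bf J}^{(i)}$. This proves that the Jastrow state is represented exactly.

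For the cardinality bound $|\beta(G)| \leq N-1$, observe that for any $G$ with at least one edge, removing a single vertex still yields a vertex cover: each edge has two endpoints, so excluding one endpoint leaves the other available to cover that edge. Thus the vertex set of size $N-1$ is always a valid cover, and a minimum vertex cover can only be smaller or equal in size. For an edgeless graph $\beta(G)=\emptyset$ and the bound is trivial.

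The main subtlety, rather than an obstacle per se, is verifying the structural equivalence between the diagram produced by {\tt graph2nqs} and the canonical NQS tensor network \eqr{eq:nqs_tn}: one must check that the bare-wire perfect correlations are a legitimate limit of generic coupling matrices and that order-2 hidden COPY tensors (arising when a vertex $c_j\in\beta(G)$ has all its neighbours already processed) are consistent with the NQS form. Both points are already addressed by the discussion surrounding \eqr{eq:compact_jastrow_nqs} and Lemma~\ref{lemma:nqs2rbm}, so once the edge-counting argument above is in place the result follows immediately.
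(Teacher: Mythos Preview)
Your proposal is correct and follows essentially the same approach as the paper: invoke {\tt graph2nqs} with the choice $\mathcal{C}(G)=\beta(G)$ so that the hidden-unit count is $|\beta(G)|$, and then bound $|\beta(G)|\leq N-1$. The paper's proof is considerably terser (it simply notes that $|\beta(G_{\rm fc})|=N-1$ is the extremal case), whereas you additionally spell out the edge-bookkeeping that guarantees every edge of $G$ receives exactly one coupling matrix and give a direct argument that $\mathcal{V}\setminus\{v\}$ is always a vertex cover; these elaborations are sound and only make explicit what the paper leaves implicit.
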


\begin{proof}
Given {\tt graph2nqs} introduces perfectly correlated hidden units for each member of the vertex cover using $\beta(G)$ provides the minimum. The extremal case of the fully connected graph $G_{\rm fc}$, where $|\beta(G_{\rm fc})|=N-1$, is the maximum possible size for a vertex covering and so bounds the hidden unit complexity of any Jastrow state. 
\end{proof}

The set of leaf vertices $\mathcal{L}(G)$ in $G$ will always be univalent in an NQS. However, for {\tt graph2nqs} we have the further property that if the first members of the ordered vertex cover $\mathcal{C}(G)$ form an independent set $\mathcal{I}(G)$ then all the vertices $\mathcal{Q} = \mathcal{L}(G)\cup\mathcal{I}(G)$ will be univalent in the resulting Jastrow NQS. Using these observations we arrive at:

\begin{lemma}[Maximum univalency for a Jastrow NQS] \label{lemma:max}
Given a graph $G$ the maximum sized set of univalent visible units in its Jastrow NQS tensor network is $\mathcal{Q} = \mathcal{L}(G)\cup \alpha(G')$, where graph $G'$ is $G$ with leaf vertices pruned.
\end{lemma}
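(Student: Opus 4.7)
The plan is to prove both achievability and optimality of the stated bound within the {\tt graph2nqs} framework by first pinning down how valencies arise from the choice of ordered vertex cover. For any visible unit $v \in \mathcal{C}(G)$ appearing at cover position $j$, its valency in the resulting tensor network is $1$ (the edge to its own perfectly correlated hidden unit) plus the number of earlier cover vertices $c_k$ with $k<j$ that are adjacent to $v$ in $G$, since these are exactly the hidden units for which $v$ shows up as a neighbor at step~3 of {\tt graph2nqs}. For $v \notin \mathcal{C}(G)$, every neighbor of $v$ necessarily lies in $\mathcal{C}(G)$ (by definition of a vertex cover) and each contributes one hidden connection to $v$, so its valency equals $|\mathcal{N}_v(G)|$.

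From this valency formula I would derive the upper bound. A non-leaf $v\notin\mathcal{C}(G)$ has valency $|\mathcal{N}_v(G)|\geq 2$ and thus cannot be univalent, so every univalent non-leaf must belong to $\mathcal{C}(G)$. Listing these univalent non-leaves as $U=(u_1,\dots,u_k)$ in cover order, the univalency condition forces no earlier cover vertex to be adjacent to $u_i$; in particular $u_j$ is not adjacent to $u_i$ for all $j<i$, so $U$ is an independent set of non-leaves and hence an independent set in $G'$. This yields $|U|\leq|\alpha(G')|$. Combined with the at most $|\mathcal{L}(G)|$ univalent leaves, the total univalency is bounded by $|\mathcal{L}(G)|+|\alpha(G')|$.

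To prove tightness I would explicitly construct a cover achieving the bound. Let $I=\alpha(G')$ and set $\mathcal{C}(G)=(I,\bar{I})$ with $\bar{I}=\mathcal{V}(G')\setminus I$ appended in arbitrary order. This is a valid vertex cover of $G$: every edge of $G'$ is covered because $\bar{I}$ is the complement of an independent set in $G'$, and every leaf edge is covered by its non-leaf endpoint, which lies in $\mathcal{V}(G')\subseteq\mathcal{C}(G)$. Running {\tt graph2nqs} with this cover, each leaf is outside $\mathcal{C}(G)$ and so by the valency formula attains valency exactly $1$. For each $v\in I$, the only earlier cover vertices are other elements of $I$, none of which are adjacent to $v$ in $G$ because $I$ is independent in $G'$ and no leaf lies in $I$. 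Thus every vertex of $\mathcal{L}(G)\cup\alpha(G')$ is univalent, realizing the set $\mathcal{Q}$.

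The main obstacle is the potential mismatch between adjacency in $G$ and in $G'$: a non-leaf vertex may have leaf neighbors in $G$ that are absent in $G'$, and one must confirm that such leaf-adjacencies never force extra valency on the chosen $I$-vertices. The construction above sidesteps this concern by keeping all leaves out of $\mathcal{C}(G)$, so the only adjacencies that matter are those surviving in $G'$. Once this subtlety is acknowledged, both the upper bound and the constructive lower bound collapse cleanly to the independence number of the pruned graph.
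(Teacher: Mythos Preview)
Your proposal is correct and, on the constructive side, matches the paper's approach: both place a maximum independent set $\alpha(G')$ at the front of the ordered vertex cover so that those vertices, together with the leaves, become univalent. The paper completes the cover with $\beta(G'')$ (where $G''$ is $G$ with $\alpha(G')$ deleted) to also minimise the hidden-unit count, whereas you complete it with all remaining non-leaf vertices $\mathcal{V}(G')\setminus\alpha(G')$; the paper explicitly notes that any completion suffices for the univalency claim, so your variant is equally valid though less economical in $M$.

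Where you go beyond the paper is the optimality direction. The paper simply asserts that $\mathcal{Q}$ ``is the maximum possible owing to $\alpha(G')$ being extremal'' without argument. Your valency formula and the observation that the non-leaf univalent vertices, listed in cover order, must be pairwise non-adjacent (hence form an independent set of $G'$) supply the missing upper bound. This is a genuine strengthening of the paper's proof, and your handling of the $G$-versus-$G'$ adjacency subtlety is correct: since leaves are excluded from your cover, only $G'$-adjacencies can contribute extra valency to the $\alpha(G')$ vertices. The one minor edge case not covered is $G=K_2$, where both vertices are leaves and $\mathcal{V}(G')$ is empty (hence not a vertex cover), but this degenerate case is easily handled separately.
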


\begin{proof} 
Apply {\tt graph2nqs} with the following ordered vertex cover $\mathcal{C} = \{\alpha(G'),\beta(G'')\}$, where $G'$ is $G$ with leaf vertices pruned and $G''$ is $G$ with the vertices in $\alpha(G')$ removed. The visible units in $\mathcal{Q} = \mathcal{L}(G) \cup \alpha(G')$ will thus be univalent in the resulting Jastrow NQS tensor network, and is the maximum possible owing to $\alpha(G')$ being extremal. Note, the use of $\beta(G'')$ is to minimise the total hidden unit count, but is not strictly necessary. Any vertex cover $\mathcal{C}(G'')$ will suffice to complete $\mathcal{C}(G)$.  
\end{proof}

To illustrate these results we apply them here to a Jastrow state defined by the graph in \eqr{eq:graph_simple}. We find a maximum sized univalent set of vertices $\mathcal{Q}$ by applying the construction from Lemma~\ref{lemma:max} as
\begin{equation}
\fl\quad \includegraphics[scale=0.5,valign=c]{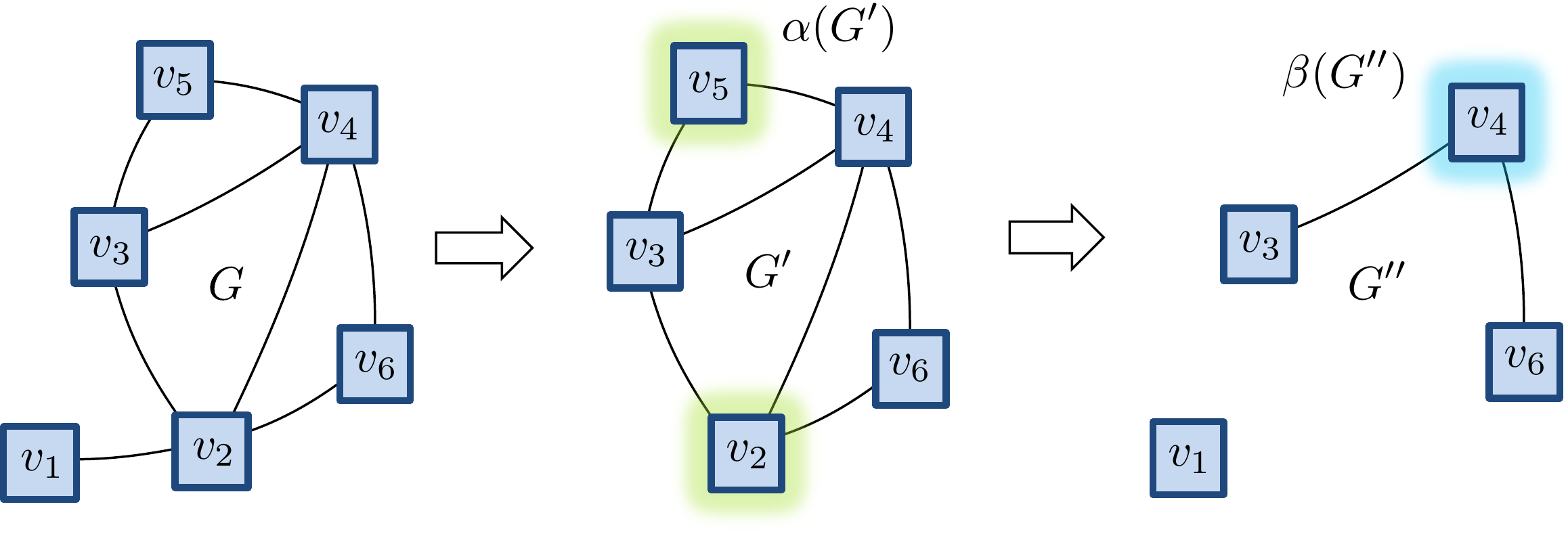} , \label{eq:graph_process}
\end{equation}
and forming an ordered vertex cover $\mathcal{C}(G) = \{2,5,4\}$. It turns out that $\mathcal{C}(G) = \beta(G)$ in this case, although it is not guaranteed to be so in general, and gives a Jastrow NQS
\begin{equation}
\includegraphics[scale=0.5,valign=c]{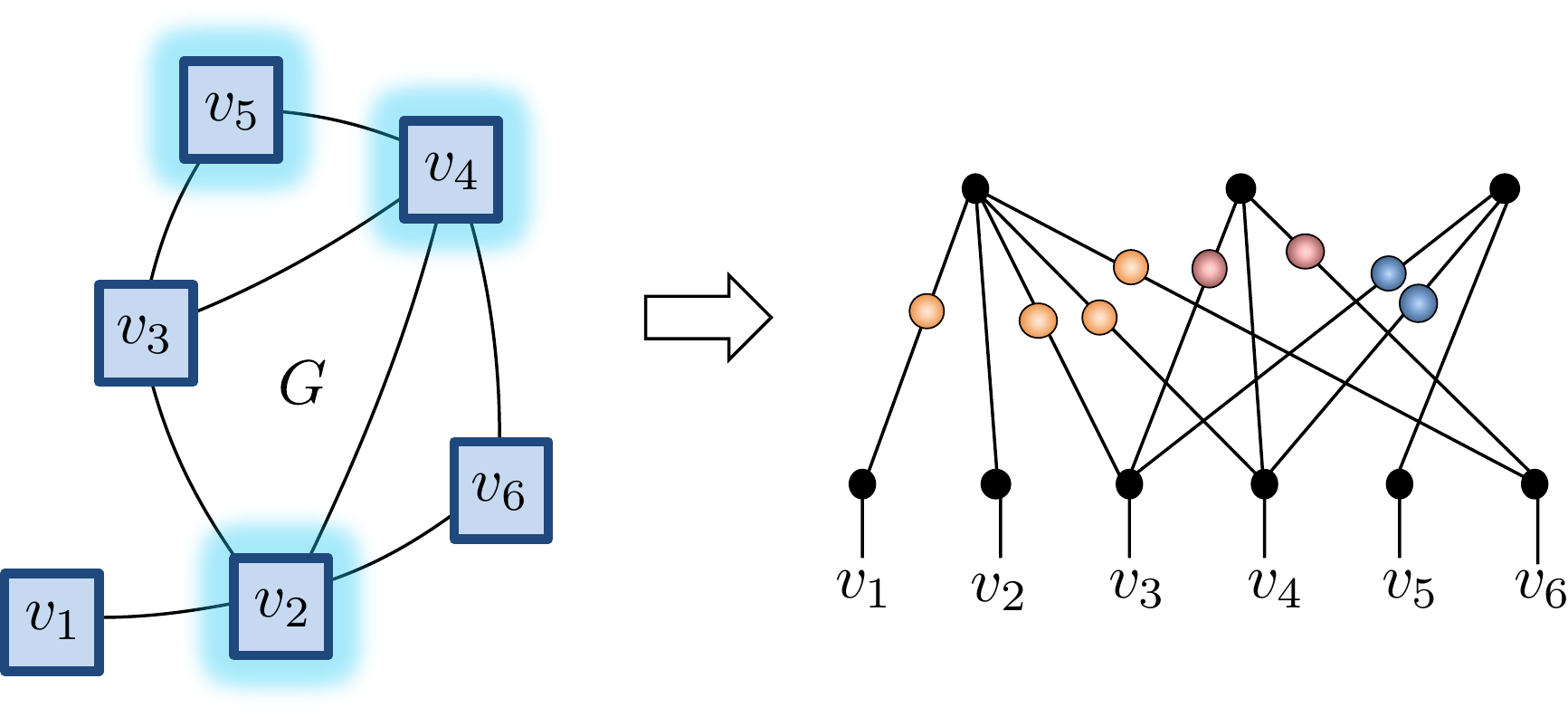} . \label{eq:graph_univalent}
\end{equation}
Since $|\mathcal{L}(G)| = 1$ and $|\alpha(G')| = 2$ in the example we arrive at a Jastrow NQS with univalent visible units $\mathcal{Q} = \mathcal{L}(G)\cup\alpha(G') = \{1,2,5\}$. 

The enlarged univalency of Jastrow NQS defined over graphs $G$ motivates the following generalisation:
\theoremstyle{definition}
\begin{definition}[VMJ-NQS]
A {\em Vertex Modified Jastrow NQS} (VMJ-NQS) is a Jastrow state defined over a graph $G$ which has arbitrary single-spin operators applied to any set of univalent vertices $\mathcal{Q}$ present in the Jastrow NQS as $\prod_{j \in \mathcal{Q}} \hat{Q}_j \ket{\Psi_{\rm JS}}$.
\end{definition}
By construction VMJ-NQS are compact since they possess the same $M \leq N-1$ hidden units as the underlying Jastrow state. Despite being a seemingly modest generalisation applying single-spin operators to $\mathcal{Q}$ has several implications. First, since operators are applied locally to single spins they cannot increase the entanglement content of the original Jastrow state $\ket{\Psi_{\rm JS}}$, and so we can view the $\hat{Q}_j$'s as fine-tuning. Second, like Jastrow states, VMJ states are equivalent to non-unitary preparation circuits with no ancilla
\begin{equation}
\fl\qquad\includegraphics[scale=0.5,valign=c]{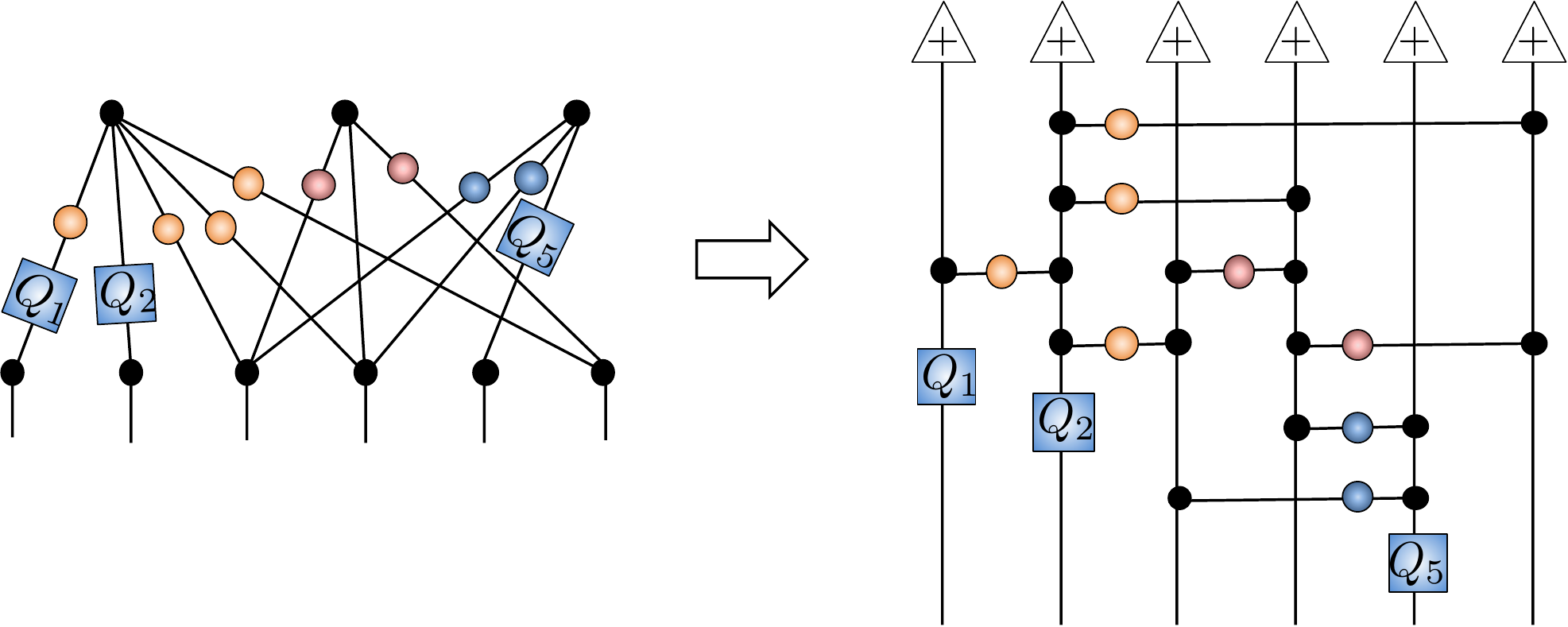} , \label{eq:graph_circuit}
\end{equation}
but possess fewer perfectly correlated visible and hidden units. Third, and most importantly, the presence of single-spin operators significantly enrich the nodal structure of VMJ states compared to Jastrow states. Specifically, for Jastrow states zero amplitudes can only be introduced by the zero elements in edge matrices ${\bf J}^{(i)}$, for example
\begin{equation*}
{\bf J}^{(i)} = 
\left[
\begin{array}{cc}
0  &  1  \\
1  &  0 
\end{array}
\right],
\end{equation*}
will kill {\em all} configurations $\ket{\bm v}$ in which the $v_{{\tt s}_i} = v_{{\tt t}_i}$. In contrast the application of a non-diagonal single-spin gate generates a superposition of Jastrow state amplitudes, conditioned on the state of the spin it was applied to, introducing the potential of vanishing amplitudes due to interference effects. We will now show that a consequence of this is that VMJ-NQS can capture a much wider class of quantum states, namely stabilizer states. 

\section{Graph states and stabilizer states} \label{sec:graph}
In this section we introduce graph states and stabilizer states along with an explicit procedure for constructing VMJ-NQS representations of them.

\subsection{Graph states}
The importance of graph states stems from their ability to possess volume scaling amounts of entanglement between subsystems~\cite{hein06} and that they form a resource for measurement-based quantum computation~\cite{raussendorf01,clark05}. A graph state $\ket{\Psi_{G}}$ is defined over $G$ for a set of spin-$\half$ particles all initialised in the state $\ket{+}$ as
\begin{equation}
\ket{\Psi_G} = \prod_{(j,k) \in E(G)} \hat{\rm C}^z_{jk} \ket{+}\ket{+}\cdots\ket{+}, \label{eq:graph_state}
\end{equation} 
in which a controlled-phase gate $\hat{\rm C}^z_{jk} = \mathbbm{1}_j\otimes\mathbbm{1}_k + \mathbbm{1}_j\otimes\hat{Z}_k$ is applied between any pair of spins $(j,k)$ if there is a corresponding edge in $E(G)$. The amplitudes $\Psi_G({\bm v})$ are thus non-zero and equal in magnitude for all $z$ basis states $\ket{\bm v}$, but possess an intricate sign structure imposed by the controlled-phase gates. Graph states correspond to a special class of spin-Jastrow state with a {\em unitary} preparation circuit (once normalisation is restored), a CPS tensor network with identical Hadamard-type edge matrices 
\begin{equation}
{\bf J}^{(jk)} = 
\left[
\begin{array}{cr}
1  & 1  \\
1 & -1  
\end{array}
\right],
\end{equation}
and an NQS representation with $M = |\beta(G)|$ hidden units~\cite{clark_cps18}. Graph states are also a special case of another class of states called stabilizer states. 

\subsection{Stabilizer states} \label{sec:stabilizers}
Stabilizer states are central to constructing quantum error-correction codes~\cite{gottesman97,nielsen01} and for describing the degenerate ground state manifolds of numerous interacting topological systems~\cite{wen07}. Their definition relies on the Pauli group for $N$ spin-1/2's $\mathfrak{P}_N$ which consists of $4 \times 4^N$ $N$-fold tensor product operators $\tau\, \hat{p}_1 \otimes \hat{p}_2 \otimes \cdots \otimes \hat{p}_N$, where $\tau \in \{\pm 1, \pm {\rm i}\}$ is an overall phase factor and each $\hat{p}_j \in \{\mathbbm{1}_j,\hat{X}_j,\hat{Y}_j,\hat{Z}_j\}$. The Clifford group $\mathfrak{C}_N$ for $N$ spins consists of all unitaries $\hat{U}$ whose action is to map under conjugation Pauli group elements among themselves, so $\hat{U}\mathfrak{P}_N\hat{U}^\dagger = \mathfrak{P}_N$. For a single spin the local Clifford group $\mathfrak{C}_1$, after disregarding a global phase, contains 24 unitaries including the Pauli gates as well as the Hadamard and phase gates~\cite{nielsen01}
\begin{equation}
\hat{\rm H} = \frac{1}{\sqrt{2}}
\left[
\begin{array}{cr}
 1 & 1   \\
 1 &  -1
\end{array}
\right], \quad {\rm and} \quad \hat{\rm S} = \left[
\begin{array}{cr}
 1 & 0   \\
 0 & {\rm i}
\end{array}
\right],
\end{equation}
respectively. The Clifford group $\mathfrak{C}_N$ can be generated by quantum circuits comprising the gates $\hat{\rm H}$ and $\hat{\rm S}$ along with the controlled-phase gate $\hat{\rm C}^z_{jk}$. 

The key idea of the stabilizer formalism is to represent a quantum state not by a vector of amplitudes but by a set of unitary operators that each ``stabilize" the state. Specifically, for stabilizer states these operators form an Abelian subgroup of $\mathfrak{P}_N$ defined by $N$ generators $\hat{T}_a$ that commute and are independent in the sense that removing a generator defines a smaller subgroup. A stabilizer state $\ket{\Psi_S}$ is then the unique eigenstate with +1 eigenvalue of each generator
\begin{equation}
\hat{T}_a\ket{\Psi_S} = \ket{\Psi_S}, \quad {\rm with} \quad a=1,2,\dots,N.
\end{equation}
A graph state $\ket{\Psi_G}$ is described by stabilizers
\begin{equation}
\hat{T}_a(G) = \hat{X}_a \prod_{b \in \mathcal{N}_a(G)} \hat{Z}_b. \label{eq:graph_stab}
\end{equation}
In contrast to graph states, the amplitudes $\Psi_S({\bm v})$ of stabilizer states can in general possess a nodal structure with zero amplitudes arising from parity constraints, while their non-zero amplitudes are equal magnitude with values $\pm 1, \pm {\rm i}$. Consequently stabilizer states can be written in a normal form
\begin{equation}
\ket{\Psi_S} = \sum_{\bm v} (-1)^{f({\bm q})}({\rm i})^{g({\bm q})}h({\bm q})\ket{\bm v}, \label{eq:stab_norm_form}
\end{equation}
in terms of Boolean-valued functions $f({\bm q}), g({\bm q}), h({\bm q})$ that are quadratic, linear and affine polynomials of the $z$ basis qubit labels ${\bm q} = \half(1 - {\bm v})$, respectively. 

Another powerful way to describe stabilizer states is to encode their generators into the rows of an $N \times 2N$ binary {\em check matrix}~\cite{aaronson04}
\begin{equation}
{\bf G} = \left[
\begin{array}{c|c}
{\bf X} & {\bf Z}
\end{array}
\right],
\end{equation}
partitioned into two $N \times N$ matrices {\bf X} and {\bf Z} with elements ${\rm x}_{aj}$ and ${\rm z}_{aj}$, respectively. The bits ${\rm x}_{aj}{\rm z}_{aj}$ determine the Pauli operator at the $j$th spin for the generator $\hat{T}_a$ as $00 \mapsto \mathbbm{1}_j$, $10 \mapsto \hat{X}_j$, $11 \mapsto \hat{Y}_j$ and $01 \mapsto \hat{Z}_j$, while elements ${\rm s}_a$ of an additional $N \times 1$ binary vector $\bf s$ specify the overall sign as $(-1)^{{\rm s}_a}$. The independence of the generators is equivalent to the rows of $\bf G$ being linearly independent, and they all mutually commute if and only if ${\bf G}{\bm \Lambda}{\bf G}^{\rm T} = 0$, where
\begin{equation}
{\bm \Lambda} = \left[
\begin{array}{c|c}
0 & \mathbbm{1}_{N\times N} \\
\mathbbm{1}_{N\times N} & 0
\end{array}
\right],
\end{equation}
defines a symplectic inner product for the rows. The check matrix is not unique since we are at liberty to swap rows in $\bf G$ and $\bf s$ simultaneously, corresponding to relabelling the generators. Similarly we can swap columns simultaneously within {\bf X} and {\bf Z} corresponding to relabelling the spins. Crucially we can also add rows modulo 2, corresponding to replacing a generator $\hat{T}_a$ with $\hat{T}_a\hat{T}_b$ when $a \neq b$, so long as we also update its sign $s_a$. Following \eqr{eq:graph_stab} the check matrix for a graph state has the form
\begin{equation}
{\bf G}_G = \left[
\begin{array}{c|c}
\mathbbm{1}_{N\times N} & {\bm \theta}
\end{array}
\right], \label{eq:graph_check}
\end{equation}
where $\bm \theta$ is its adjacency matrix of the graph $G$, and signs ${\bf s} = (0,0,\dots,0)$.

The usefulness of stabilizer states stems in a large part from the Gottesman-Knill theorem~\cite{gottesman97,nielsen01} which establishes that any Clifford quantum circuit $\hat{U} \in \mathfrak{C}_N$ acting on an $z$ basis state $\ket{\bm v}$ can be efficiently simulated classically. This follows since any $\ket{\bm v}$ is a stabilizer state with $\hat{T}_a = (-1)^{(1-v_a)/2}\hat{Z}_a$, so the state $\hat{U}\ket{\bm v}$ has stabilizers $\hat{T}'_a = \hat{U}\hat{T}_a\hat{U}^\dagger \in \mathfrak{P}_N$. The new stabilizers can be computed from elementary gates $\hat{\rm H}_j$, $\hat{\rm S}_j$ and $\hat{\rm C}^z_{jk}$ which induce simple updates on $\bf G$ and $\bf s$ for each generator $a=1,2,\dots,N$ as~\cite{aaronson04}:
\begin{itemize}
\item applying $\hat{\rm H}_j$, set ${\rm s}_a \mapsto {\rm s}_a \oplus ({\rm x}_{aj} \cdot {\rm z}_{aj})$ and then swap ${\rm x}_{aj}$ with ${\rm z}_{aj}$;
\item applying $\hat{\rm S}_j$, set ${\rm s}_a \mapsto {\rm s}_a \oplus ({\rm x}_{aj}\cdot {\rm z}_{aj})$ and then set ${\rm z}_{aj} \mapsto {\rm z}_{aj} \oplus {\rm x}_{aj}$;
\item applying $\hat{\rm C}^z_{jk}$, set ${\rm s}_a \mapsto {\rm s}_a \oplus ({\rm x}_{aj}\cdot {\rm z}_{ak})\cdot({\rm x}_{ak}\oplus{\rm z}_{aj}\oplus 1)\cdot({\rm x}_{ak}\oplus{\rm z}_{ak})$, and then set ${\rm z}_{aj} \mapsto {\rm z}_{aj} \oplus {\rm x}_{ak}$, ${\rm z}_{ak} \mapsto {\rm z}_{ak} \oplus {\rm x}_{aj}$.
\end{itemize}
We will exploit the first of these two updates in the following.

\subsection{VMJ-NQS for stabilizer states}
Significant efforts have been made to determine exact NQS representation of stabilizer states. Explicit constructions for the RBM parameterisation of stabilizer states have been devised based on iteratively reducing the normal form in \eqr{eq:stab_norm_form}~\cite{lu19}, and also by manipulating the check matrix into a canonical form~\cite{zhang_rbm2stbl18,zheng_stbl19,jia_surf19}. Analogous to Jastrow states, both these earlier constructions have a formal hidden unit complexity $M \sim O(N^2)$. Here we give a new approach exploiting that any stabilizer state $\ket{\Psi_S}$ is locally Clifford equivalent to a graph state $\ket{\Psi_G}$ as~\cite{schling01,vdn_cliff04}
\begin{equation}
\ket{\Psi_S} = \hat{u}_1\otimes\hat{u}_2\otimes \cdots \otimes\hat{u}_N \ket{\Psi_G}, \label{eq:stab_graph}
\end{equation}
for some $\hat{u}_a \in \mathfrak{C}_1$ with $a =1,2,\dots,N$. Given {\tt graph2nqs} provides a compact NQS for $\ket{\Psi_G}$ we can obtain a VMJ-NQS for $\ket{\Psi_S}$ from this so long as all the local Clifford gates $\hat{u}_a$ can be trivially absorbed via Lemma~\ref{lemma:single_spin}. Our main result is to show that this is indeed always possible:

\begin{theorem}[Stabilizer state NQS] \label{theorem:stab2nqs}
All stabilizer states have an exact VMJ-NQS representation, and so require no more than $M=N-1$ hidden units.
\end{theorem}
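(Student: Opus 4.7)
The plan is to combine the local Clifford equivalence \eqr{eq:stab_graph} with the compact graph-state NQS of Lemma~\ref{lemma:jastrow_min} and then absorb each single-qubit Clifford $\hat{u}_a \in \mathfrak{C}_1$ via Lemma~\ref{lemma:single_spin}. Modulo Paulis, $\mathfrak{C}_1$ has order six: two cosets (represented by $\mathbbm{1}$ and $\hat{\rm S}$, eight Cliffords in total) consist entirely of diagonal or anti-diagonal matrices in the $z$ basis, so by case~(ii) of Lemma~\ref{lemma:single_spin} their elements are absorbed at any visible unit simply by updating the adjacent coupling matrix. The remaining four cosets contain the Hadamard, and none of their elements is diagonal or anti-diagonal; by case~(i) they are absorbable only at visible units which are univalent in the NQS. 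Write $A \subseteq \{1,\dots,N\}$ for the set of sites at which $\hat{u}_a$ is of Hadamard type. The theorem then reduces to showing that the LC representative $G$ and the ordered vertex cover fed into {\tt graph2nqs} can be chosen so that every $j \in A$ is univalent in the resulting NQS.

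A natural sufficient condition is that $A$ is an independent set of $G$: placing $A$ at the head of any vertex cover $\mathcal{C}(G) = A \cup \mathcal{C}'$, with $\mathcal{C}'$ covering the induced subgraph $G[\mathcal{V}\setminus A]$, then makes each $j \in A$ univalent by precisely the reasoning used for Lemma~\ref{lemma:max}---no earlier-processed cover vertex is a neighbour of $j$, so $j$'s visible COPY carries only the bare wire to its hidden partner. The hidden-unit count $|A| + |\mathcal{C}'| \leq N - \alpha(G[\mathcal{V}\setminus A])$ stays at or below $N-1$ whenever $A \neq \mathcal{V}$, and the residual corner case $A = \mathcal{V}$ collapses away because $\bigotimes_j \hat{\rm H}_j\ket{\Psi_G}$ is itself LC-equivalent to some other graph state, bringing $A$ back to $\emptyset$.

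The real work therefore lies in proving that an LC representative $G$ with independent $A$ always exists. I would argue this directly on the check matrix $[{\bf X}\,|\,{\bf Z}]$ of $\ket{\Psi_S}$, using the $\hat{\rm H}$ update of \secr{sec:stabilizers} (a column swap between $\bf X$ and $\bf Z$ at a chosen site) to pivot a subset of columns so that $\bf X$ becomes invertible, followed by $\hat{\rm S}$ updates and row reductions to bring the matrix into graph form $[\mathbbm{1}\,|\,\bm\theta]$. Existence of some valid pivoting subset follows from the check matrix having full symplectic rank $N$. To additionally enforce independence in $\bm\theta$ of the pivoted set (which becomes $A$), I would iterate: whenever two members of the current $A$ sit at the endpoints of an edge of the intermediate $\bm\theta$, apply a local-complementation move at one of them---itself a composition of $\hat{\rm H}$ and $\hat{\rm S}$ column updates on its neighbourhood, with the induced residual phases re-routed into the diagonal/anti-diagonal Cliffords discharged later by case~(ii)---to cut that edge; the number of edges of $\bm\theta$ restricted to $A$ provides a strictly decreasing monovariant securing termination. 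This pivoting-plus-complementation step is the one I expect to be most delicate, since it must be framed constructively rather than by a naive existence argument. Once it is in place the assembly is immediate: run {\tt graph2nqs} with the $A$-first cover to build a compact NQS of $\ket{\Psi_G}$ with at most $N-1$ hidden units, absorb the $\mathbbm{1}/\hat{\rm S}$-coset factors at any sites via Lemma~\ref{lemma:single_spin}(ii) and the Hadamard-type residuals at the now-univalent sites of $A$ via Lemma~\ref{lemma:single_spin}(i), delivering the required VMJ-NQS for $\ket{\Psi_S}$ with $M \leq N-1$.
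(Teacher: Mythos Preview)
Your overall architecture matches the paper: reduce to a locally-Clifford-equivalent graph state, sort the $\hat{u}_a$'s into (anti-)diagonal vs.\ Hadamard-type, and ensure the Hadamard set $A$ is independent so it can head the vertex cover fed to {\tt graph2nqs}. Where you diverge is in how you establish that $A$ can be taken independent, and here you have both overcomplicated the argument and introduced a genuine error.

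The paper obtains independence of $A$ \emph{for free} from the Gaussian-elimination canonical form. After reducing the check matrix to $[{\bf X}^{(2)}\,|\,{\bf Z}^{(2)}]$ with ${\bf X}^{(2)} = \left[\begin{smallmatrix}\mathbbm{1}&{\bf A}\\0&0\end{smallmatrix}\right]$, the Hadamards are applied \emph{only} to the last $N-r$ columns; this swaps those columns between ${\bf X}$ and ${\bf Z}$, so the bottom-right $(N-r)\times(N-r)$ block of the new ${\bf Z}$ equals the old bottom-right block of ${\bf X}$, namely zero. That zero block is exactly the statement that the $N-r$ Hadamard sites form an independent set of the resulting graph. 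Diagonal cleanup is then just $\hat{\rm S}$ and $\hat{Z}$ gates. No iteration, no monovariant, no local complementation is needed.

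By contrast, your ``delicate'' step is not merely delicate but incorrect as written. Local complementation at a vertex $a$ toggles edges among $\mathcal{N}(a)$; it leaves every edge incident to $a$ untouched, so it cannot ``cut'' an edge $(a,b)$ with $a,b\in A$. Moreover, the local Clifford implementing LC at $a$ includes a $\sqrt{X}$-type gate at $a$ itself, which is neither diagonal nor anti-diagonal, so performing LC can \emph{enlarge} $A$ rather than leave it fixed while you shrink its induced edge set. The claimed strictly decreasing monovariant therefore fails on two counts. You could try to repair this with edge-pivot moves and a more careful bookkeeping of how $A$ shifts, but the paper's route shows this entire machinery is unnecessary: choose the canonical-form reduction first, and the independence of the Hadamard set is a one-line structural consequence of the symplectic constraint.
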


\begin{proof}
Similar to Ref.~\cite{zhang_rbm2stbl18} our proof exploits the mapping of the check matrix $\bf G$ of a stabilizer state $\ket{\Psi_{S}}$ into its canonical form. Specifically, by using row additions modulo 2 we perform Gaussian elimination from the top-left downwards, tracking the signs $\bf s$ and potentially swapping spins, to put the check matrix $\bf G$ into the form
\begin{equation}
{\bf G}^{(1)} = \left[
\begin{array}{cc|cc}
\mathbbm{1}_{r \times r} & {\bf A}  & {\bf B} & {\bf C} \\
0 & 0 & {\bf D} & {\bf E}
\end{array}
\right],
\end{equation}
where $r$ is the rank of $\bf X$, and $\bf A$ is an $r \times (N-r)$ matrix. Next, we perform Gaussian elimination from the bottom-right upwards for the last $N-r$ rows, again tracking the signs and doing any spin swaps necessary. Since all the rows of $\bf G$ were linearly independent originally the $(N-r) \times N$ submatrix $[{\bf D} ~ {\bf E}]$ must be full rank so we obtain the canonical form
\begin{equation}
{\bf G}^{(2)} = \left[
\begin{array}{cc|cc}
\mathbbm{1}_{r \times r} & {\bf A}  & \tilde{{\bf B}} & 0 \\
0 & 0 & \tilde{{\bf D}} & \mathbbm{1}_{(N-r) \times (N-r)}
\end{array}
\right].
\end{equation}
At this point we perform a sequence of local Clifford gates to render this canonical form into that of a graph state check matrix as in \eqr{eq:graph_check}. First, we perform $\hat{\rm H}$ gates on the last $N-r$ spins giving
\begin{equation}
{\bf G}^{(3)} = \left[
\begin{array}{cc|cc}
\mathbbm{1}_{r \times r} & 0  & \tilde{{\bf B}} & {\bf A} \\
0 &  \mathbbm{1}_{N-r \times N-r} & \tilde{{\bf D}} & 0
\end{array}
\right],
\end{equation}
making the new ${\bf X}^{(3)}$ matrix full rank and resulting in the stabilizer state now having non-zero amplitudes on all basis states $\ket{\bm v}$. The stabilizer commutativity condition ${\bf G}^{(3)}{\bm \Lambda}{\bf G}^{(3){\rm T}} = 0$ must still hold so
\begin{equation}
\left[
\begin{array}{cc}
\tilde{{\bf B}}\oplus \tilde{{\bf B}}^{\rm T}  & {\bf A} \oplus \tilde{{\bf D}}^{\rm T}\\
\tilde{{\bf D}} \oplus {\bf A}^{\rm T} & 0
\end{array}
\right] = \left[
\begin{array}{cc}
0 & 0 \\
0 & 0
\end{array}
\right],
\end{equation}
hence $\tilde{{\bf B}} = \tilde{{\bf B}}^{\rm T}$ and ${\bf A} = \tilde{{\bf D}}^{\rm T}$, meaning that the new ${\bf Z}^{(3)}$ matrix is symmetric. Second, ${\bf Z}^{(3)}$ is not yet an adjacency matrix due to $\tilde{\bf B}$ potentially having non-zero diagonal elements. These elements correspond to a $\hat{Y}_a$ operator in the corresponding generator $\hat{T}_a$ so we apply $\hat{\rm S}_a$ gates on any of the first $r$ spins where this is the case. This flips $\hat{Y}_a \mapsto -\hat{X}_a$ in $\hat{T}_a$ while leaving any $\hat{Z}_a$'s in other generators unchanged. Finally, we fix the pattern of signs $\bf s$ accumulated during these manipulations by applying $\hat{Z}_a$ to any generator which has ${\rm s}_a = 1$, inducing $\hat{X}_a \mapsto -\hat{X}_a$ flipping its sign, and overall making ${\bf s} = (0,0,\dots,0)$. We are left with a graph state check matrix of the form
\begin{equation}
{\bf G}_G = \left[
\begin{array}{cc|cc}
\mathbbm{1}_{r \times r} & 0  & \bar{\bf B} & {\bf A} \\
0 &  \mathbbm{1}_{N-r \times N-r} & {\bf A}^{\rm T} & 0
\end{array}
\right],
\end{equation}
where $\bar{\bf B}$ is $\tilde{\bf B}$ with its diagonal elements zeroed out. The matrix ${\bf Z}_G$ is now a valid adjacency matrix for a graph $G$. Having mapped $\ket{\Psi_S}$ to $\ket{\Psi_G}$ we now find the inverse for the sequence of local Clifford gates applied giving the equivalence in \eqr{eq:stab_graph}. Crucially the bottom zero corner of ${\bf Z}_G$ shows that the set of $N-r$ spins which had the non-diagonal $\hat{\rm H}$ gates applied to them form an independent set $\mathcal{I}_{\rm H}(G)$. The $\hat{\rm S}$ gates applied to the first $r$ are diagonal, as are $\hat{Z}$ gates. Consequently all gates applied are compatible with Lemma~\ref{lemma:single_spin} and can be trivially absorbed into the NQS generated by {\tt graph2nqs} using $\mathcal{I}_{\rm H}(G)$ at the start of its ordered vertex cover~\footnote{Note that the solution found is not unique. The procedure is dependent on the spin ordering and is biased towards identifying independent sets in the latter half of vertices. Other solutions can be accessed by permuting the spins (hence the columns of $\bf X$ and $\bf Z$) prior to the procedure and then inverse permuting both the columns and rows afterwards to restore the original ordering and retain the symmetry in the final $\bf Z$.}.
\end{proof}

\subsection{Analytic examples} \label{sec:stabilizer_examples}
To finish this work we now present some illustrative examples of Theorem~\ref{theorem:stab2nqs} applied to stabilizer states arising from quantum error correction codes and to a topological ground state. It is useful to introduce some specialised graphical representations of the Clifford gates $\hat{Z}$, $\hat{\rm S}^\dagger$ and $\hat{\rm H}$ as
\begin{equation}
\fl\qquad\includegraphics[scale=0.5,valign=c]{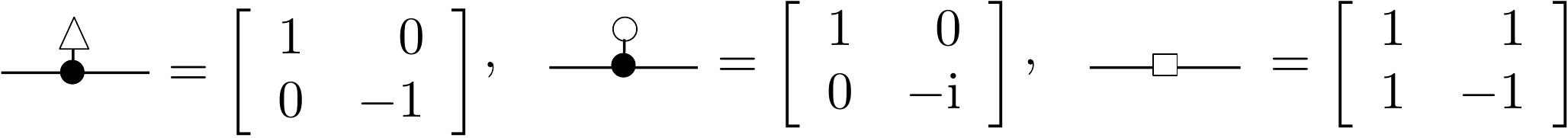} \quad , \label{eq:clifford_tensors}
\end{equation}
respectively, for the resulting NQS tensor network diagrams.

\subsubsection{Quantum error correcting code states:}
To provide a non-trivial illustration of our stabilizer state NQS construction we consider an encoded logical qubit state of the 7-qubit Steane code~\cite{steane96}. In this code the logical states in the $z$ basis, labelled as qubits $\ket{\bm q}$, are
\begin{eqnarray*}
\ket{0_{\rm L}} &=& \ket{0000000} + \ket{1010101} + \ket{0110011} +\ket{1100110} \\
&& +\ket{0001111}+\ket{1011010}+\ket{0111100}+\ket{1101001}, \\
\ket{1_{\rm L}} &=& \ket{1111111} + \ket{0101010} + \ket{1001100} +\ket{0011001} \\
&& +\ket{1110000}+\ket{0100101}+\ket{1000011}+\ket{0010110}.
\end{eqnarray*}
We add to the 6 generators defining this code space the operator $\prod_{j=1}^7 \hat{Y}_j$ so the unique stabilizer state is the superposition of logical qubit states $\ket{\Psi_{\rm steane}} = \ket{0_{\rm L}} - {\rm i}\ket{1_{\rm L}}$. Applying the procedure outlined in the proof of Theorem~\ref{theorem:stab2nqs} results in the following transformation of the stabilizer generators represented in this table
\begin{eqnarray*}
\fl\qquad\left[\begin{array}{ccccccc}
\mathbbm{1} & \mathbbm{1}  & \mathbbm{1} & X & X & X & X \\
\mathbbm{1} & X & X & \mathbbm{1} & \mathbbm{1} & X & X  \\
X & \mathbbm{1}  & X & \mathbbm{1} & X & \mathbbm{1} & X  \\
\mathbbm{1} & \mathbbm{1}  & \mathbbm{1} & Z & Z & Z & Z \\
\mathbbm{1} & Z  & Z & \mathbbm{1} & \mathbbm{1} & Z & Z \\
Z & \mathbbm{1} & Z & \mathbbm{1} & Z & \mathbbm{1} & Z \\
Y & Y  & Y & Y & Y & Y & Y \\
 &   &  &  &  &  &  \\
\hline
 &   &  &  &  &  &  \\
\mathbbm{1} & \mathbbm{1}  & \mathbbm{1} & \mathbbm{1} & \mathbbm{1} & \mathbbm{1} & \mathbbm{1} 
\end{array}\right] &\mapsto& 
\left[\begin{array}{ccccccc}
X & Z  & Z & \mathbbm{1} & \mathbbm{1} & Z & Z \\
Z & X & Z & \mathbbm{1} & Z & \mathbbm{1} & Z  \\
Z & Z  & X & \mathbbm{1} & Z & Z & \mathbbm{1}  \\
\mathbbm{1} & \mathbbm{1}  & \mathbbm{1} & X & Z & Z & Z \\
\mathbbm{1} & Z  & Z & Z & X & \mathbbm{1} & \mathbbm{1} \\
Z & \mathbbm{1} & Z & Z & \mathbbm{1} & X & \mathbbm{1} \\
Z & Z  & \mathbbm{1} & Z & \mathbbm{1} & \mathbbm{1} & X \\
  &   &  &  &  &  &  \\
\hline
 &   &  &  &  &  &  \\
{\rm S}^\dagger & {\rm S}^\dagger  & {\rm S}^\dagger & \mathbbm{1} & {\rm H} & {\rm H} & {\rm H}
\end{array}\right].
\end{eqnarray*}
The bottom row denotes the local Clifford gates to be applied to each qubit after the corresponding stabilizer state is constructed. The final form corresponds to a graph state defined with vertex Clifford operators shown in \fir{fig:qec_states}(a). As expected the non-diagonal $\hat{\rm H}$ gates are applied to the independent set $\mathcal{I}_G = \{5,6,7\}$ in this graph. We can then form a vertex cover $\mathcal{C}(G) = \mathcal{I}_G \cup \{1,3\}$ from which {\tt graph2nqs} then constructs as a VMJ-NQS with $M = 5$ hidden units, as shown in \fir{fig:qec_states}(b). In \fir{fig:qec_states}(c)-(d) we show some more examples of VMJ-NQS for stabilizer states arising from the 5-qubit $[[5,1,3]]$ code~\cite{laflamme96} and the 9-qubit Shor code~\cite{shor95}. 

\begin{figure}[ht]
\begin{center}
\includegraphics[scale=0.5]{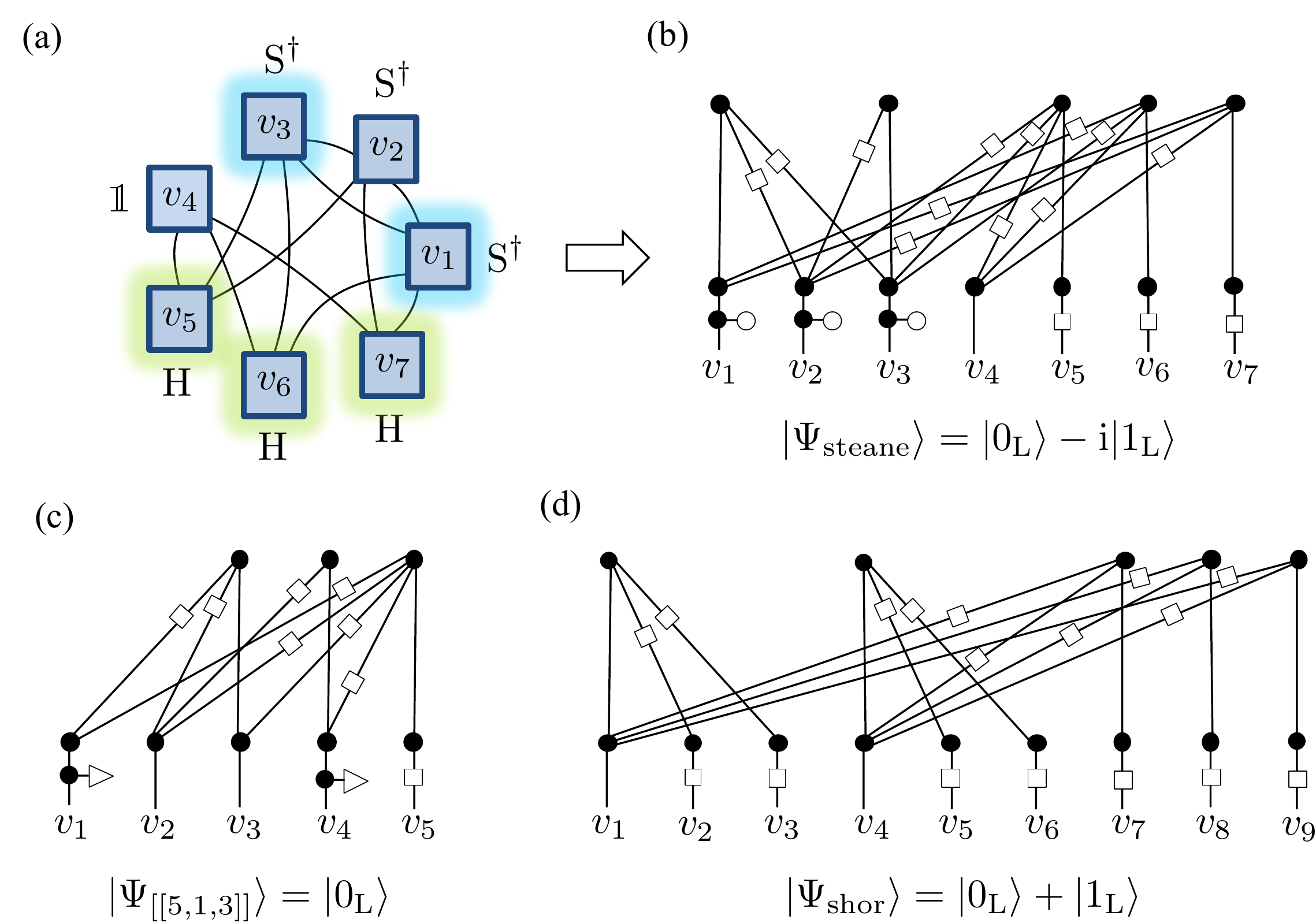}
\end{center}
\caption{(a) The graph defining a graph state locally Clifford equivalent to $\ket{\Psi_{\rm steane}} = \ket{0_{\rm L}} - {\rm i}\ket{1_{\rm L}}$, with the operators shown for each vertex. The independent set $\mathcal{I}_G =\{5,6,7\}$ is highlighted. (b) The VMJ-NQS constructed from the graph in (a) using the vertex cover
completed with the addition of the vertices $\{1,3\}$ also highlighted. The Clifford gates at each vertex can be trivially absorbed via Lemma~\ref{lemma:single_spin}. (c) The VMJ-NQS for the logical qubit state $\ket{\Psi_{\rm [[5,1,3]]}} = \ket{0_{\rm L}}$ of the 5-qubit $[[5,1,3]]$ quantum code. (d) The VMJ-NQS for the state $\ket{\Psi_{\rm shor}} = \ket{0_{\rm L}} + \ket{1_{\rm L}}$ constructed from the 9-qubit Shor code. Notice in this case the concatenated GHZ structure of the code is apparent.}
\label{fig:qec_states}
\end{figure}

\subsubsection{Toric code ground state}
A prominent example of a topological model is the Toric code Hamiltonian~\cite{kitaev03}
\begin{equation}
\hat{H}_{\rm toric} = -\sum_{s \in +} \prod_{j \in s} \hat{Z}_j -\sum_{p \in \square} \prod_{j \in p} \hat{X}_j,
\end{equation}
governing spins on a $2L \times L$ lattice located at the bonds of a 2D $L \times L$ square lattice with periodic boundary conditions. Here $+$ denotes the set of spins forming a star $s$ around a vertex, while $\square$ denotes the set of spins around the perimeter of a plaquette $p$ of the square lattice, as depicted in \fir{fig:toric_ham}(a). All the terms with $\hat{H}_{\rm toric}$ mutually commute and form a set of $2L^2$ generators in which any ground state is a simultaneous +1 eigenstate. However, since $\prod_{s \in +} \prod_{j \in s} \hat{Z}_j = \mathbbm{1}$ and $\prod_{p \in \square} \prod_{j \in p} \hat{X}_j = \mathbbm{1}$ this set contains only $2(L^2-1)$ independent generators and so defines a 4-dimensional degenerate ground state code subspace. 

\begin{figure}[ht]
\begin{center}
\includegraphics[scale=0.5]{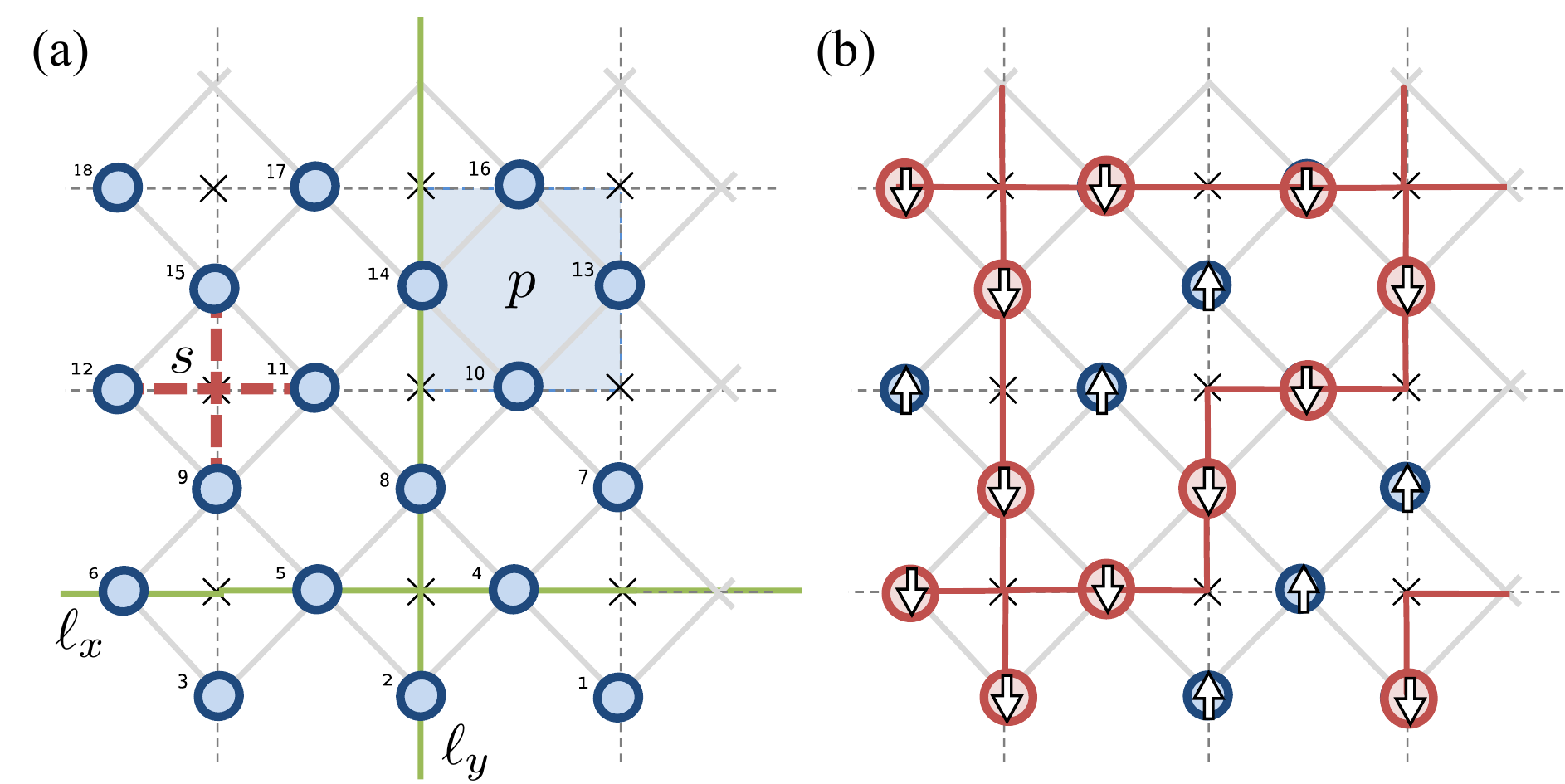}
\end{center}
\caption{(a) A $3 \times 3$ square lattice with vertices $\times$ forms a $3 \times 6$ lattice of spins (blue circles) located at the centre of bonds. Spins $s$ involved in one star $+$ surrounding a vertex are shown by the red dashed lines, while the spins $p$ involved in one plaquette $\square$ are shown as a blue shaded square. Possible non-contractible paths $\ell_{x,y}$ for the Wilson loop operators $\hat{\mathcal{W}}_{x,y}$ are also depicted as horizontal and vertical solid green lines. (b) A closed loop configuration state $\ket{\bm l}$.}
\label{fig:toric_ham}
\end{figure}

To be a +1 eigenstate of a given star term $\prod_{j \in s} \hat{Z}_j$ requires that an even number of the 4 spins in $s$ are $\ket{\downarrow}$. Given that neighbouring star terms overlap by a single spin a configuration state $\ket{\bm v}$ can be a simultaneous +1 eigenstate of all star terms $+$ so long as the pattern of $\ket{\downarrow}$ spins form closed loops around the lattice, which we denote as $\ket{\bm l}$. An example is shown in \fir{fig:toric_ham}(b). Given a closed loop state $\ket{\bm l}$ plaquette terms flip pairs of spins in the surrounding stars and so annihilate or create loops. We can form a +1 eigenstate of all plaquette terms $\square$ by equally superposing all closed loop configurations reachable from the chosen $\ket{\bm l}$. One of the simplest is the equal superposition of all possible closed loops\footnote{The degenerate subspace is spanned by the 4 states formed by equal superpositions of closed loops with the same $\pm 1$ winding number about the $x$- and $y$-axis. The state we consider is the equal superposition of these 4 states. Note that an arbitrary ground state in the degenerate subspace will not in general be a stabilizer state, and hence cannot be converted into an NQS using our method.}
\begin{equation}
\ket{\Psi_{\rm toric}} = \sum_{{\bm l}\in{\rm closed}}\ket{\bm l}. \label{eq:toric_state}
\end{equation}
This stabilizer state is uniquely specified by adding to the generator set two\footnote{When the $(2L^2+2) \times (2L^2)$ check matrix is put into standard form two zero rows appear that can be removed, leaving exactly $2L^2$ independent generators.} {\em Wilson} operators 
\begin{equation}
\hat{\mathcal{W}}_x = \prod_{j \in \ell_x} \hat{X}_j, \quad {\rm and} \quad \hat{\mathcal{W}}_y = \prod_{j \in \ell_y} \hat{X}_j,
\end{equation}
which involve spins on any non-contractible paths $\ell_x$ and $\ell_y$ that cut the lattice through vertices along the $x$- and $y$-axis, respectively, examples of which are shown in \fir{fig:toric_ham}(a). 

An NQS representation of $\ket{\Psi_{\rm toric}}$ can be constructed directly~\cite{clark_cps18}. Using the qubit labelling its non-zero amplitudes occur for configurations $\ket{\bm q}$ where $\bm q$ simultaneously satisfies $N/2$ binary equations ${\bf A}\,{\bm q}^{\rm T} = 0$ with the $N/2 \times N$ coefficient matrix $\bf A$ encoding $q_{s_1} \oplus q_{s_2} \oplus q_{s_3} \oplus q_{s_4} = 0$ for all sets of star spins $s_1,s_2,s_3,s_4$. Such constraints are easily enforced by a tensor network
\begin{equation}
\includegraphics[scale=0.5,valign=c]{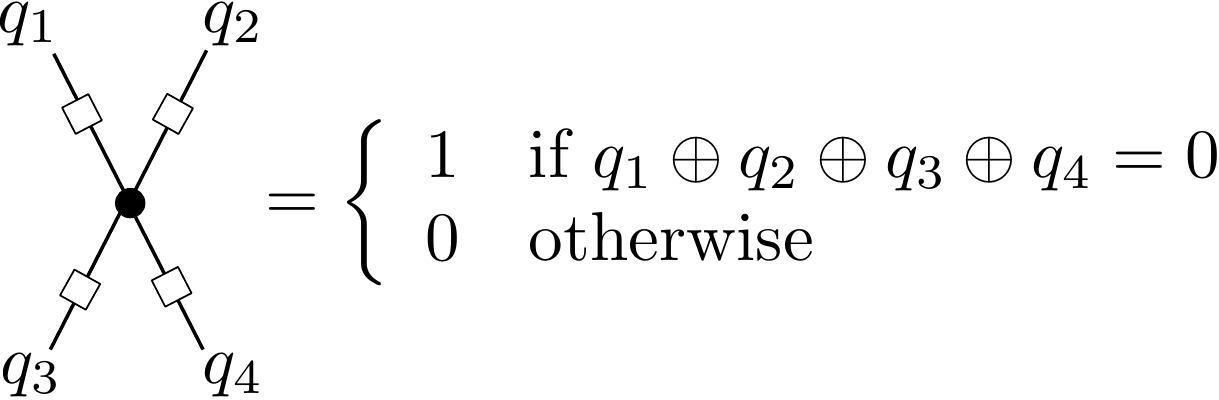} \quad , \label{eq:xor_tensor}
\end{equation}
commonly called the XOR tensor since its non-zero elements reflect the truth table of the 3 input bits and 1 output bit for a cascade of two XOR gates. The XOR tensor generalises straightforwardly for $n$ legs~\cite{denny11}. Given \eqr{eq:xor_tensor} is already in the form of a hidden unit with Hadamard coupling matrices, the NQS for $\ket{\Psi_{\rm toric}}$ is found by gluing an XOR tensor to the COPY tensors of spins in each star. This yields a direct NQS with $M=N/2$ hidden units, as shown in \fir{fig:toric_nqs}(a) for a $6 \times 3$ lattice. This representation elegantly reflects the translational invariance of the state, but as a consequence has no univalent visible unit COPY tensors, meaning it cannot emerge as a solution from the VMJ-NQS construction. 
\begin{figure}[htb]
\begin{center}
\includegraphics[scale=0.5]{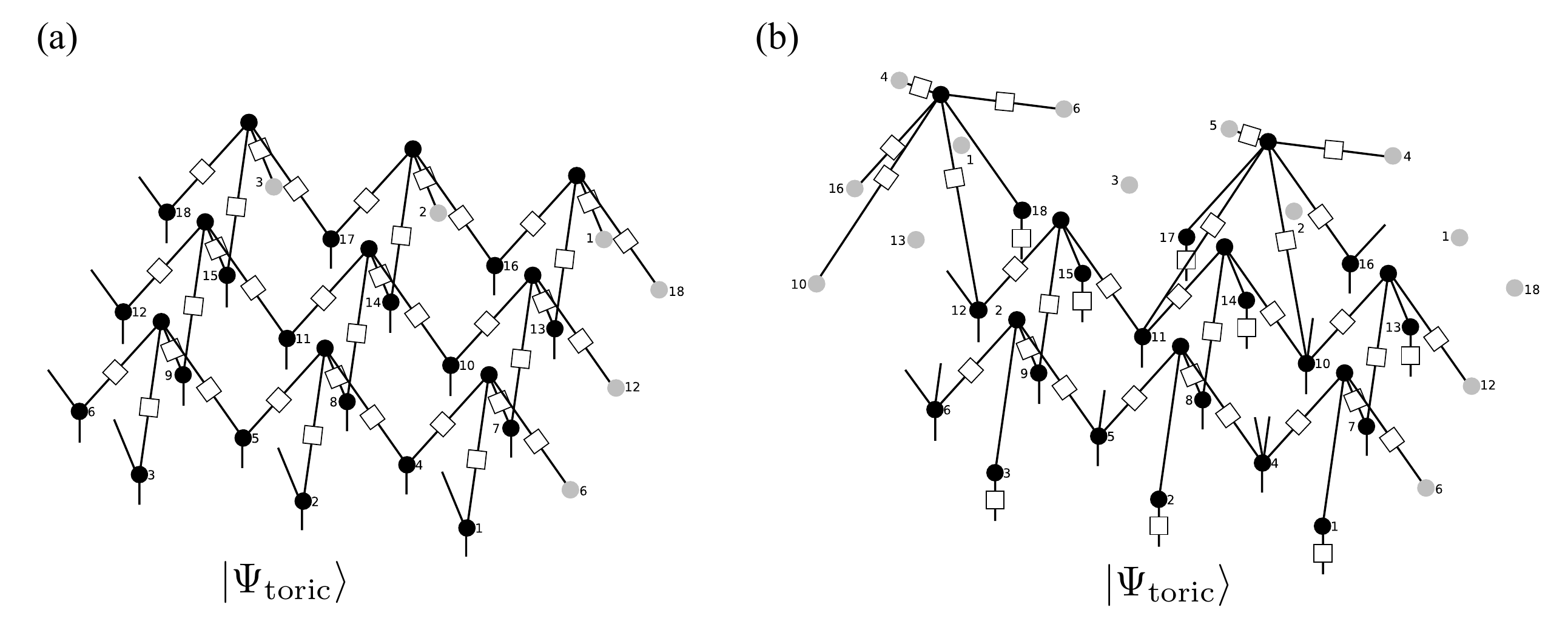}
\end{center}
\caption{(a) The direct NQS for $\ket{\Psi_{\rm toric}}$~\cite{clark_cps18} in which hidden units align with the star set $+$ for the lattice. Some visible units have open legs which indicate connections wrapping around the periodic boundaries displayed here as connections to duplicate spins shown as grey circles. (b) The VMJ-NQS representation of $\ket{\Psi_{\rm toric}}$ found through the construction procedure outlined in the proof of Theorem:~\ref{theorem:stab2nqs}. The resulting graph state has $\hat{\rm H}$ gates applied to the independent set $\mathcal{I}_{\rm H}(G) = \{1,2,3,13,14,15,17,18\}$ which also alone form a vertex cover for {\tt graph2nqs}.}
\label{fig:toric_nqs}
\end{figure}

The VMJ-NQS construction applied to $\ket{\Psi_{\rm toric}}$ has some general features. It identifies a graph state equivalence in which $\hat{\rm H}$ gates are applied to an independent set $\mathcal{I}_{\rm H}(G)$ that also forms a vertex cover. Consequently, the VMJ-NQS generated comprise entirely of Hadamard coupling matrices making each hidden unit an XOR tensor applied to receptive fields typically larger than 4 spins. The construction therefore reorganises the XOR constraints defining $\ket{\Psi_{\rm toric}}$ such that each each constraint has one spin that is exclusive to it that is a member of $\mathcal{I}_{\rm H}(G)$. This is equivalent to performing Gaussian elimination on the coefficient $\bf A$, and additionally reveals that only $N/2 - 1$ constraints are actually needed, since one can always be removed from the direct NQS by simply adding all the others to it. A particular VMJ-NQS solution with $M=8$ hidden units is shown in \fir{fig:toric_nqs}(b) for a $6 \times 3$ lattice. In this case it shares 6 hidden units with the direct NQS, but its final 2 hidden units have a coordination of 6. Given the considerable non-uniqueness of VMJ-NQS it is an interesting open question whether their structure can ever reflect the translational invariance of $\ket{\Psi_{\rm toric}}$, perhaps reduced over a larger unit cell.

\section{Conclusion and discussion} \label{sec:conclusion}
We have show that Jastrow states have an exact NQS representation with $M \sim O(N)$ hidden unit complexity that is substantially more efficient than the $M \sim O(N^2)$ previously conjectured. While this construction formally required diverging weights $w_{ij}$ it was seen to be robust to softening and a numerical example for the XXZ chain demonstrated that this new form of Jastrow NQS can emerge from an unbiased VMC optimisation. Focusing on the graph structure of Jastrow states, and using the tensor network formulation, we refined this result to show that Jastrow states possess an exact NQS representation requiring no more than $M=N-1$ hidden units, namely the largest size of a vertex cover of a graph. Moreover, these NQS representations are guaranteed to possess at least one univalent visible unit. This presented an opportunity to generalise them to VMJ-NQS by applying arbitrary single-spin gates to those units. By exploiting graph states, which are a special case of spin Jastrow states, this simple modification enhanced the expressiveness of VMJ-NQS sufficiently to capture all stabilizer states with no increase in the hidden unit complexity. Ultimately this result is a direct implication of the fact that both Jastrow and stabilizer states can all be generated by preparation circuits involving two-spin diagonal gates acting only on the visible units (and hence no ancilla) followed by non-diagonal single-spin unitary gates applied at independent vertices of the circuit's graph. 

There is good reason to suspect that wider classes of quantum states can be encapsulated by VMJ-NQS and hence share their compactness. For stabilizer states the construction relied on graph states, a special subset of Jastrow states, and local Clifford gates, a special subset of single spin unitaries. Yet VMJ-NQS are defined for {\em any} Jastrow state over a graph $G$ and can absorb {\em any} single-spin gates, unitary or otherwise, at independent vertices $G$, so very little of this generality was exploited. 

A good place to start are classes of quantum states closely related to graph and stabilizer states that are known to have efficient RBM representations~\cite{lu19}. This includes so-called {\em hypergraph} states~\cite{rossi13} which generalise standard graph states. Their construction again starts with all spins initialised in $\ket{+}$ but are now coupled via hyperedges where subsets of $p$ vertices $\mathcal{Z} = \{j_1,j_2,\dots,j_p\}$ have controlled$^{p-1}$-phase gates 
\begin{equation}
\hat{\rm C}^z_{j_1j_2\cdots j_p} = \mathbbm{1}_{j_1}\otimes\cdots\otimes\mathbbm{1}_{j_{p-1}}\otimes\mathbbm{1}_{j_p} + \mathbbm{1}_{j_1}\otimes\cdots\otimes\mathbbm{1}_{j_{p-1}}\hat{Z}_{j_p},
\end{equation}
applied to them. Each hyperedge thus introduces a factor of $-1$ if all $p$ spins are in the $\ket{\downarrow}$ state, equivalent to a correlation $\exp[({\rm i }\pi/2^p)(1 - v_{j_1})(1-v_{j_2})\cdots(1 - v_{j_p})]$. Despite hyperedges being many-body they are individually no more costly to describe than a standard $p=2$ edge. Specifically, for any value of $p$ they are captured by a single hidden unit\footnote{This solution is significantly more efficient than the $M=2p$ hidden units proposed in Ref.~\cite{lu19}.} with coupling matrices $\{{\bf C}^{(1)},{\bf C}^{(2)},\cdots,{\bf C}^{(N)}\}$
\begin{equation*}
\fl \qquad\qquad {\bf C}^{(j)} = \left[
\begin{array}{cc}
1 & 1   \\
0 & (-2)^{1/p}
\end{array}
\right] \quad {\rm if} ~ j \in \mathcal{Z}, \quad {\rm otherwise} \quad {\bf C}^{(j)} =  \left[
\begin{array}{cc}
1 & 1   \\
1 & 1
\end{array}
\right] .
\end{equation*}
However, for $p>2$ this hidden unit cannot be rearranged to expose a perfect correlation with one visible unit. As such an NQS for a hypergraph state built from these correlators does not benefit from the merging of COPY tensors used for standard graph states that allowed all edges originating from a vertex to be collected at a single hidden unit. This suggests that hypergraph states require a hidden unit for each hyperedge, so that for a fixed $p$ we have a complexity $M \sim O(N^p)$. Another related rich class of states are XS-stabilizer states~\cite{ni15}, which generalise stabilizer states to allow for non-Abelian stabilizers by drawing operators from the group generated by the single spin operators $\{\sqrt{\rm i}\mathbbm{1}, \hat{X},\hat{\rm S}\}$. Similar to how stabilizer states generalise graph states, XS-stabilizer states generalise the simplest $p=3$ hypergraph states by introducing nodal structure through parity constraints, giving them a formal hidden unit complexity of $M \sim O(N^3)$. It thus remains an open question whether compact NQS are a special case for Jastrow, graph and stabilizer states, or if they can also be found for hypergraph and XS-stabilizer states.

Further to this, the gauge symmetry reduction of tensor-network NQS outlined here has direct implications for the application of NQS to systems with higher on-site dimension. In particular it points to the need to go beyond the conventional complex RBM formulation. In recent work~\cite{pei21} we have considered this for the case of spin-1 systems and work in progress is examining it for bosonic systems~\cite{pei21b} where Jastrow states are a widely used variational ansatz. In this context it is an interesting open question whether our algorithm for constructing NQS for stabilizer states can be straightforwardly generalised for qudits~\cite{hostens05}.

Finally, our work gives useful guidance about the structure of NQS beyond the exact states considered. For example, in numerical calculations using the complex RBM parameterisation we have found that allowing moderate-valued weights enables the optimisation to locate near perfect visible-hidden correlations that can improve the accuracy, even when the exact state is not a Jastrow state. Consequently, if the value of weights are heavily constrained during an optimisation then such compact solution might be missed. We have also seen new patterns of receptive fields emerge, like a decreasing coordination pattern interpolating between system-extensive and sparse connectivity, which could be exploited in numerical calculations. Moreover, our results have demonstrated that even $\alpha = 1$ NQS can exactly capture highly non-trivial quantum states that exhibit volume-scaling entanglement and topological order. These observations contribute to the perpetual balance within variational approaches between using a more specialised ansatz with fewer parameters, but more bias, verses systematically increasing the parameters in an ansatz to improve expressiveness, but risk complicating the energy landscape traversed by the optimisation. 

\section*{Acknowledgements}
SRC gratefully acknowledges support from the UK's Engineering and Physical Sciences Research Council (EPSRC) under grants EP/P025110/2 and EP/T028424/1. MP acknowledges the University of Bristol Advanced Computing Research Centre for the use of their High Performance Computing facility (BlueCrystal Phase 3 and 4) in performing the VMC calculations presented.
\newpage

\appendix

\section{Boltzmann-like parameterisation of coupling matrices} \label{app:boltzmann}
Every $2 \times 2$ coupling matrix ${\bf C}^{(ij)}$ can be parameterised locally in Boltzmann form as
\begin{equation}
C^{(ij)}_{h_iv_j} = \exp\left(c_{ij} + w_{ij}h_iv_j + \tilde{b}_{ij}h_i + \tilde{a_{ij}}v_j\right), \label{eq:rbm_coupling_mat}
\end{equation}
for $h_i, v_j \in \{+1,-1\}$ in terms of a weight $w_{ij}$, partial biases $\tilde{a}_{ij}$ and $\tilde{b}_{ij}$, and a scale factor $c_{ij}$. These complex parameters are found from the coupling matrix elements as
\begin{eqnarray*}
\tilde{a}_{ij} &=& \frac{1}{4}\left[\log\left(C^{(ij)}_{++}\right) - \log\left(C^{(ij)}_{+-}\right) + \log\left(C^{(ij)}_{-+}\right) - \log\left(C^{(ij)}_{--}\right)\right], \\
\tilde{b}_{ij} &=& \frac{1}{4}\left[\log\left(C^{(ij)}_{++}\right) + \log\left(C^{(ij)}_{+-}\right) - \log\left(C^{(ij)}_{-+}\right) - \log\left(C^{(ij)}_{--}\right)\right], \\
c_{ij} &=& \frac{1}{4}\left[\log\left(C^{(ij)}_{++}\right) + \log\left(C^{(ij)}_{+-}\right) + \log\left(C^{(ij)}_{-+}\right) + \log\left(C^{(ij)}_{--}\right)\right], \\
w_{ij} &=& \frac{1}{4}\left[\log\left(C^{(ij)}_{++}\right) - \log\left(C^{(ij)}_{+-}\right) - \log\left(C^{(ij)}_{-+}\right) + \log\left(C^{(ij)}_{--}\right)\right].
\end{eqnarray*}
As discussed in \secr{sec:xxz} the zeros appearing in coupling matrix elements can be handled numerically by softening $C^{(ij)}_{h_iv_j} \mapsto \max(C^{(ij)}_{h_iv_j},e^{-\mathcal{S}})$, where $\mathcal{S} \approx 5-10$, to avoid divergent parameters.

Using this decomposition we can immediately remove an overall irrelevant constant $e^{c_{ij}}$ from each coupling matrix, reducing the total number of complex parameters of the NQS tensor network to $3MN$. The decomposition
\begin{eqnarray*}
\fl\qquad \left[
\begin{array}{cc}
C^{(ij)}_{++} & C^{(ij)}_{+-}   \\
C^{(ij)}_{-+} & C^{(ij)}_{--}
\end{array}
\right] &\simeq& \left[
\begin{array}{cc}
e^{\tilde{b}_{ij}}  & 0  \\
0  & e^{-\tilde{b}_{ij}}   
\end{array}
\right] \left[
\begin{array}{cc}
e^{w_{ij}} & e^{-w_{ij}}  \\
e^{-w_{ij}} & e^{w_{ij}}  
\end{array}
\right]\left[
\begin{array}{cc}
e^{\tilde{a}_{ij}}  & 0  \\
0  & e^{-\tilde{a}_{ij}}   
\end{array}
\right],
\end{eqnarray*}
is represented diagrammatically as
\begin{equation}
\includegraphics[scale=0.5,valign=c]{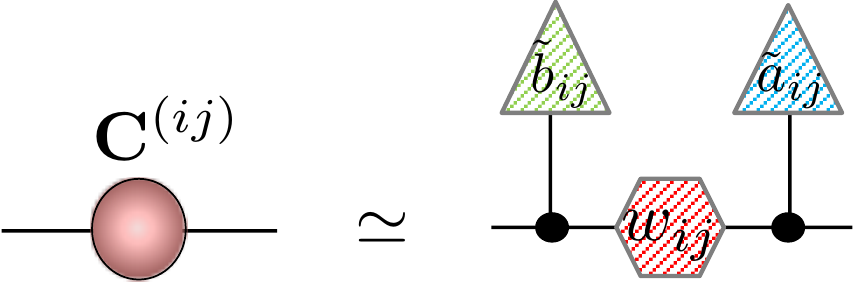} , \label{eq:cmat_decomp}
\end{equation}
in terms of exponentially parameterised diagonal matrices sandwiching a Boltzmann matrix (emphasised by hashed shading). The fusion rule \eqr{eq:copy_fusion} results in the elementwise multiplication of diagonal matrices so their exponents get summed at common COPY tensor
\begin{equation}
\includegraphics[scale=0.5,valign=c]{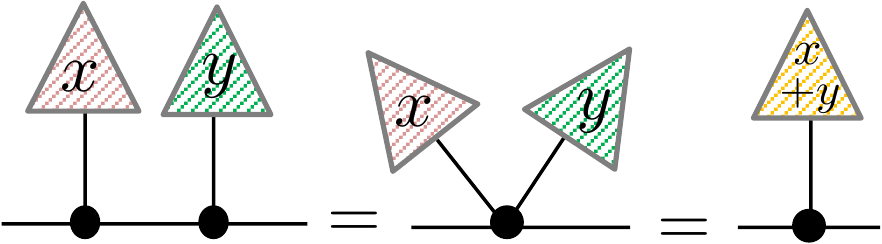} . \label{eq:diag_biases}
\end{equation}
Hence, by decomposing all $MN$ coupling matrices the $2MN$ partial biases are summed to give $M$ hidden biases $b_i = \sum_{j=1}^N \tilde{b}_{ij}$ and $N$ visible biases $a_j = \sum_{i=1}^M \tilde{a}_{ij}$~\cite{chen18}. We thus arrive at an NQS tensor network built from Boltzmann-type components
\begin{equation}
\fl\qquad\includegraphics[scale=0.5,valign=c]{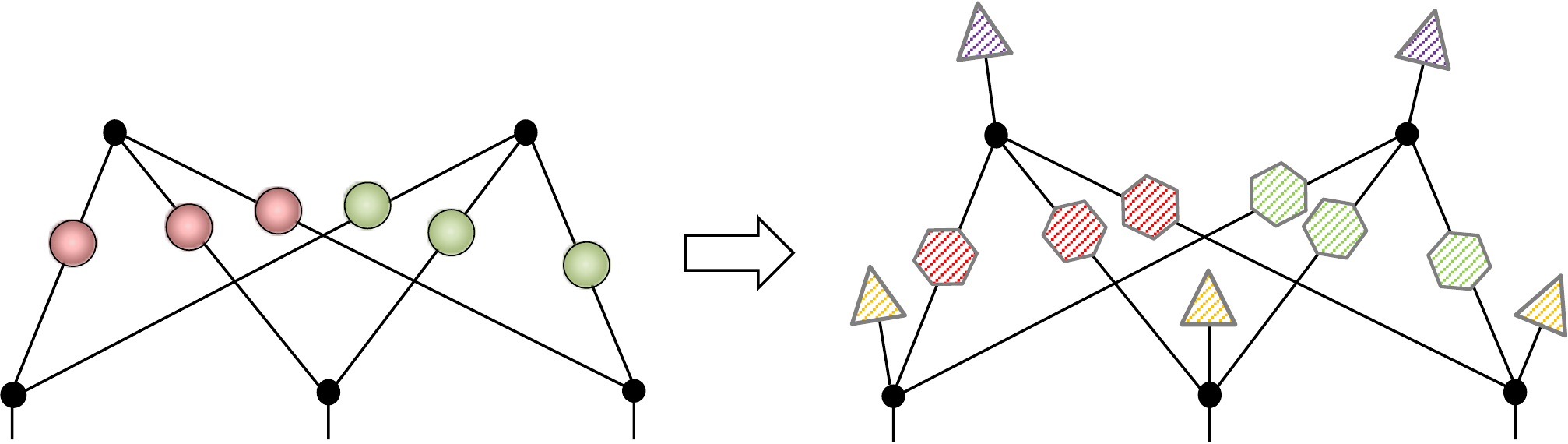} ,
\end{equation}
defined by $MN + M + N$ complex parameters, entirely equivalent to the complex RBM formulation in \eqr{rbm}. 

\section*{References}
\bibliographystyle{iopart-num}
\bibliography{nqs_graph}

\end{document}